\def\eps{\epsilon}
\title{New Graph Decompositions and Combinatorial Boolean Matrix Multiplication Algorithms}
\author{Amir Abboud\thanks{Weizmann Institute of Science. This work is part of the project CONJEXITY that has received funding from the European Research Council (ERC) under the European Union's Horizon Europe research and innovation programme (grant agreement No.~101078482). Supported by an Alon scholarship and a research grant from the Center for New Scientists at the Weizmann Institute of Science. Email: \href{mailto:amir.abboud@weizmann.ac.il}{\nolinkurl{amir.abboud@weizmann.ac.il}}}
    \and Nick Fischer\thanks{Weizmann Institute of Science. Supported by the project CONJEXITY as above. Email: \href{mailto:nick.fischer@weizmann.ac.il}{\nolinkurl{nick.fischer@weizmann.ac.il}}}
    \and Zander Kelley\thanks{Department of Computer Science, University of Illinois at Urbana-Champaign. Supported by NSF CAREER award 2047310. Email: \href{mailto:awk2@illinois.edu}{\nolinkurl{awk2@illinois.edu}}}
    \and Shachar Lovett\thanks{
Department of Computer Science and Engineering, University of California, San Diego. Supported by NSF DMS award 1953928, NSF CCF award 2006443, and a Simons investigator award. Email: \href{mailto:slovett@ucsd.edu}{\nolinkurl{slovett@ucsd.edu}}}
    \and Raghu Meka\thanks{Department of Computer Science, University of California, Los Angeles. Supported by NSF AF 2007682 and NSF Collaborative Research Award 2217033. Email: \href{mailto:raghum@cs.ucla.edu}{\nolinkurl{raghum@cs.ucla.edu}}}}
\date{}
\begin{document}

\maketitle
\begin{abstract}
\noindent
We revisit the fundamental Boolean Matrix Multiplication (BMM) problem. 
With the invention of algebraic fast matrix multiplication over 50 years ago, it also became known that BMM can be solved in truly subcubic $O(n^{\omega})$ time, where $\omega<3$; much work has gone into bringing $\omega$ closer to $2$. 
Since then, a parallel line of work has sought comparably fast \emph{combinatorial} algorithms but with limited success.
The na\"{i}ve $O(n^3)$-time algorithm was initially improved by a $\log^2{n}$ factor [Arlazarov~\emph{et al.};~RAS'70], then by $\log^{2.25}{n}$ [Bansal and Williams; FOCS'09], then by $\log^3{n}$ [Chan;~SODA'15], and finally by $\log^4{n}$ [Yu;~ICALP'15].

We design a combinatorial algorithm for BMM running in time $n^3 / 2^{\Omega(\sqrt[7]{\log n})}$---a speed-up over cubic time that is stronger than any poly-log factor. 
This comes tantalizingly close to refuting the conjecture from the 90s that truly subcubic combinatorial algorithms for BMM are impossible.
This popular conjecture is the basis for dozens of fine-grained hardness results.

Our main technical contribution is a new \emph{regularity decomposition} theorem for Boolean matrices (or equivalently, bipartite graphs) under a notion of regularity that was recently introduced and analyzed analytically in the context of communication complexity [Kelley, Lovett, Meka; arXiv'23], and is related to a similar notion from the recent work on $3$-term arithmetic progression free sets [Kelley, Meka;~FOCS'23].
\end{abstract}

\setcounter{page}{0}
\thispagestyle{empty}
\clearpage

\section{Introduction}

Boolean Matrix Multiplication (BMM) is one of the most basic and fundamental combinatorial problems. 
It can be solved in $O(n^{\omega})$ time, where $2 \leq \omega < 2.3716$~\cite{DuanWZ23,VassilevskaWilliamsXXZ23} is the exponent of (integer) matrix multiplication. 
The algebraic technique underlying Strassen's~\cite{Strassen69} and all subsequent ``fast matrix multiplication'' algorithms have several limitations (discussed in Section~\ref{sec:discussion}) related to \emph{generalizability}, \emph{elegance}, and \emph{practical efficiency}. 
Therefore, in addition to the line of work trying to enhance this algebraic machinery aiming to reach $\omega=2$, a parallel line of work aims to match (or improve) these subcubic bounds with different combinatorial techniques.

The first result in this direction is the ``Four-Russians'' algorithm by Arlazarov, Dinic, Kronrod, and Farad\v{z}ev~\cite{ArlazarovDKF70} that achieves $o(n^3)$ complexity by precomputing the answers to small sub-instances.
This approach can give $O(n^3/\log^2 n)$ time, but nothing faster~\cite{Angluin76}.
After many decades, Bansal and Williams~\cite{BansalW12} used \emph{regularity lemmas} to gain an additional $\log^{0.25}{n}$ factor speed-up.
The usefulness of graph regularity techniques was undermined a few years later when Chan~\cite{Chan15} gave a better, $O(n^3/\log^{3}n)$ bound only using simple divide-and-conquer.
Yu~\cite{Yu18} optimized the divide-and-conquer method to achieve an  $O(n^3/\log^{4}n)$ bound that stood since 2015.\footnote{All running times in this paragraph are up to log-log-factors.} 

A pessimistic conjecture that has been popular since the 90s~\cite{Satta94,Lee02} states that truly subcubic running times are impossible for combinatorial algorithms; that is, one may be able to shave some more logarithmic factors, but we cannot reach $O(n^{3-\eps})$ time with $\eps>0$.

\begin{conjecture}[Combinatorial BMM] \label{conj:bmm}
There is no combinatorial algorithm for BMM running in time $O(n^{3-\eps})$, for any $\eps>0$.
\end{conjecture}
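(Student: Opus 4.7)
The statement is posed as a conjecture rather than a theorem, so there is no proof to anticipate; the best I can do is sketch how one might attempt the lower bound. The first and most critical step is to \emph{formalize} ``combinatorial,'' since the conjecture is stated about an informal class of algorithms. A natural route is to work in a \emph{witness-oracle} model: for each $1$-entry $(i,j)$ of $AB$ the algorithm must output an explicit index $k$ with $A_{ik} = B_{kj} = 1$, so that correctness is verifiable combinatorially. Every known combinatorial algorithm---Four-Russians, Bansal--Williams, Chan and Yu's divide-and-conquer, and the new regularity-based algorithm of the present paper---fits naturally into this framework, whereas Strassen-style algorithms that rely on algebraic cancellation over an extension ring do not.

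Within this model, the plan would be an incompressibility/counting argument on a random instance. Take independent Boolean matrices $A,B$ of density $p \approx n^{-1/2}$, so that $AB$ has $\Theta(n^2)$ ones, each with only $\Theta(\sqrt{n})$ potential witnesses. One would then try to show that \emph{locating} a witness for each output on such a pseudorandom input requires $\Omega(n^{3-o(1)})$ probes in total, using that the witness hypergraph contains no exploitable structure of the type that the Kelley--Meka--Lovett regularity decomposition exploits. A companion reduction would promote any algorithm that merely outputs $AB$ to one that also exhibits witnesses at essentially no overhead, so the witness restriction comes for free.

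The main obstacle is the tension between the two halves of step one. Any honest model broad enough to encompass the subtle regularity decomposition of this paper (and more generally arbitrary graph-partition-based schemes) will almost certainly simulate polylogarithmic-depth Boolean circuits, against which we have no superlinear lower bounds---so the desired $\Omega(n^{3-o(1)})$ bound is out of reach of current techniques. An alternative route via fine-grained reduction from a presumed-cubic problem such as triangle detection would at best reduce the conjecture to another unproven cubic-hardness assumption, not settle it. Given that the present paper's $n^3/2^{\Omega(\sqrt[7]{\log n})}$ algorithm already crosses decisively past every poly-log barrier, my honest inclination would be to bet on the conjecture being \emph{false} and invest effort in refuting it rather than in proving it.
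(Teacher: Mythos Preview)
You are correct that this is a conjecture, not a theorem, and the paper provides no proof of it whatsoever; indeed, the paper's main contribution is an algorithm running in time $n^3/2^{\Omega(\sqrt[7]{\log n})}$ that pushes in the \emph{opposite} direction, coming closer to refuting the conjecture than any prior combinatorial algorithm. Your assessment that one should bet on the conjecture being false aligns with the paper's own stance (see the discussion in Section~\ref{sec:discussion} and the remarks on further improvements at the end of Section~\ref{sec:overview}).
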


This conjecture has served as the basis for many conditional lower bounds.
Refuting it would re-open the quest for polynomially faster combinatorial algorithms for a host of fundamental discrete problems~\cite{RodittyZ11,WilliamsW18,AbboudW14,AbboudWY18,CliffordGLS18,ChanR020,CaselS21,Chang19,BringmannGL17,LincolnWW18,BringmannW17,DahlgaardKS17,AbboudBBK17,BringmannGMW20,DalirrooyfardVW21,AbboudGIKPTUW19,BringmannFK19,BergamaschiHGWW21,JinX22,HuangLSW23}\footnote{Some of these references assume the stronger Combinatorial $k$-Clique Conjecture, that would also be refuted if \cref{conj:bmm} is false.}.

\paragraph{Main Result.} In this paper, we prove a new regularity decomposition theorem that leads to a \emph{quasi-polynomial}\footnote{Recall that bounds of the form $2^{(\log n)^c}$ are called quasi-polynomial. In our case, $c<1$.} saving for BMM, combinatorially, coming much closer than before to refuting \cref{conj:bmm}.

\begin{theorem}[Combinatorial BMM] \label{thm:bmm}
There is a deterministic combinatorial algorithm computing the Boolean product of two $n \times n$ matrices in time~\smash{$n^3 / 2^{\Omega(\sqrt[7]{\log n})}$}.
\end{theorem}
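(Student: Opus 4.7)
The plan is to exploit the paper's new regularity decomposition theorem to reduce the multiplication of arbitrary Boolean matrices $A, B \in \{0,1\}^{n \times n}$ to Boolean products on structured pieces, each of which admits a fast multiplication routine. Viewing $A$ and $B$ as bipartite graphs, the decomposition should express each as a union of two kinds of pieces: first, a quasi-polynomial number of \emph{combinatorial rectangles} (bicliques) $X_s \times Y_s$, and second, a ``regular'' residual matrix in the sense of Kelley--Lovett--Meka, meaning it is suitably biclique-free or pseudo-random. The key quantitative win over prior work is that the total number of pieces should be only $2^{(\log n)^{c}}$ with $c<1$, as opposed to the tower-type bounds arising from Szemerédi-style regularity used by Bansal and Williams.

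Given such a decomposition, I would design separate fast multiplication subroutines for each piece type. Multiplying a biclique indicator $\mathbf{1}_{X \times Y}$ by any $n \times n$ Boolean matrix costs only $\tilde{O}(n^2)$ time, since its image is a rank-$1$ Boolean matrix. Hence the contribution of all cross terms involving at least one rectangle is at most $2^{(\log n)^{c}} \cdot n^2$, which is well below the target. The nontrivial case is the product $R_A R_B$ of the two regular residuals, and this is where I expect most of the work to go. Here the plan is to show a saving of at least $2^{(\log n)^{c'}}$ over the cubic baseline by leveraging pseudo-randomness: either by approximating a regular matrix by a low-complexity object (a low-rank matrix plus sparse corrections, or a union of few bicliques after further refinement), or by observing that for sufficiently pseudo-random $R_A, R_B$ the Boolean product is largely constant so that its exceptional set is structured and can be enumerated in subcubic time, perhaps with the aid of Four-Russians table lookups on small sub-blocks.

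Finally, I would assemble the components by expanding
\[
  AB \;=\; R_A R_B \;+\; \sum_{s} \mathbf{1}_{X^A_s \times Y^A_s} \, B \;+\; \sum_{t} A \, \mathbf{1}_{X^B_t \times Y^B_t} \;-\; \sum_{s,t} \mathbf{1}_{X^A_s \times Y^A_s} \, \mathbf{1}_{X^B_t \times Y^B_t},
\]
(interpreting the Boolean arithmetic carefully) and summing the costs. The final exponent $1/7$ should emerge from balancing roughly seven coupled parameters: the density threshold defining ``dense'' rectangles, the biclique-freeness threshold defining ``regular'' residuals, the number of rectangles produced, the block size used by an inner recursion, the Four-Russians word width, and one or two further parameters governing the strength of pseudo-randomness and its algorithmic exploitation.

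The main obstacle, I expect, is not the decomposition statement itself --- whose analytic skeleton is already suggested by the Kelley--Lovett--Meka framework --- but the algorithmic side of multiplying two KLM-regular matrices subcubically with a \emph{combinatorial} algorithm, and doing so deterministically. Standard spectral or Frieze--Kannan-style approximations are not obviously available in the Boolean semiring, and any use of fast algebraic matrix multiplication is off-limits. A secondary hurdle is that the decomposition must itself be constructible in time $n^{3} / 2^{\Omega(\sqrt[7]{\log n})}$, which rules out naïve enumeration over candidate rectangles and forces a constructive, recursive version of the decomposition proof. Getting the exponent $1/7$ rather than some weaker $1/k$ will hinge on tightly matching the losses at each of these stages.
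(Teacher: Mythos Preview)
Your plan has a genuine structural gap: you are trying to compute the Boolean product $AB$ directly, and you correctly identify ``multiplying two KLM-regular matrices $R_A R_B$ subcubically'' as the central obstacle --- but you have no actual method for it, only speculation. The paper sidesteps this obstacle entirely by \emph{not} computing the product directly. It first reduces BMM to \emph{Triangle Detection} via the Vassilevska Williams--Williams subcubic equivalence, and then the regular case requires no computation at all: if $A_k$ and $B_k^T$ are both $(\epsilon,2,d)$-regular and $\epsilon$-min-degree, then by the Kelley--Lovett--Meka structural theorem the product $A_k \circ B_k$ is nearly uniform, so at most a $2^{-\epsilon d/2}$-fraction of its entries are zero. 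Hence if the third edge set $C_k$ has density exceeding $2^{-\epsilon d/2}$, a triangle is \emph{guaranteed} to exist and one simply outputs ``yes''. The only pieces that require actual work are the \emph{sparse} ones, which are handled in $|X_k|\,|Y_k|\,|Z_k|/2^{\Omega(d)}$ time by brute force.

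Relatedly, your decomposition shape is off. The paper does not extract a quasi-polynomial collection of bicliques plus a regular residual; it decomposes the pair $(A,B)$ jointly into pieces $(X_k,Y_k,Z_k,A_k,B_k)$ with $AB=\sum_k A_k B_k$, where each piece is either \emph{sparse} (density $\le 2^{-d}$) or \emph{regular} in the above sense. The decomposition is built by repeated density increments (sifting for regularity, trimming for min-degree) and density decrements (peeling off good rectangles/cubes), not by peeling off bicliques. The exponent $1/7$ arises because the number of pieces is $\exp(O(d^7))$, coming from a $d^3$ bound in the single-matrix decomposition compounded through a two-level recursion with an extra $\gamma^{-1}=\poly(d)$ loss when aligning the $Z$-sets; balancing $n^2 \cdot 2^{O(d^7)}$ preprocessing against $n^3/2^{\Omega(d)}$ gives $d=\Theta((\log n)^{1/7})$. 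Your ``seven coupled parameters'' heuristic is not how the constant arises.
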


We immediately get a similar combinatorial  \emph{super-poly-logarithmic} saving for many other problems (that can be reduced to BMM).
The list includes central problems in their domains such as \emph{context-free grammar parsing} from formal languages~\cite{Valiant75}, computing the \emph{transitive closure} of a directed graph~\cite{FischerM71,Munro71}, \emph{join-project queries} from databases~\cite{AmossenP09,GuchtWWZ15}, parameterized problems such as \emph{$k$-clique}~\cite{NesetrilP85} and \emph{$k$-dominating-set}~\cite{EisenbrandG04}, and various matrix product problems~\cite{Matousek91,CzumajKL07,VassilevskaWY07}.

Our (and most previous) results are obtained by designing algorithms for the simpler \emph{triangle detection} problem and then using a well-known subcubic equivalence with BMM~\cite{WilliamsW18}.  Therefore, we focus on triangle detection below.

\paragraph{A New Regularity Decomposition Theorem}
In abstract terms, a graph is \emph{regular} if it behaves somewhat pseudo-randomly, and a \emph{regularity decomposition} theorem states that any graph can be decomposed into a ``small'' number of regular subgraphs. 
Such results are interesting mathematically because they say that any graph can be simplified dramatically, and also algorithmically because they let us reduce a problem from arbitrary to (the often easier) random-like graphs.
The possibility and efficiency of such results depend on the precise notion of regularity; generally, there is a trade-off between strength (i.e.\ how close ``regular'' is to random) and efficiency (i.e.\ the number of subgraphs in the decomposition).

For example, the celebrated Szemerédi's Regularity Lemma~\cite{Szemeredi75} yields a decomposition into subgraphs with very strong pseudo-random properties, but to achieve meaningful results for graphs with density $\delta$, the number of parts inherently scales as a tower function of height $\poly(1/\delta)$~\cite{Gowers97}.\footnote{I.e., \smash{$f(\delta) \leq 2^{2^{2^{\dots}}}$} where the tower has height $\poly(\delta^{-1})$.} A comparable but weaker notion of regularity due to Frieze and Kannan~\cite{FriezeK99} admits decompositions into fewer, but still exponentially many parts (specifically, \smash{$2^{\Order(\delta^{-2})}$}). At the other end of the spectrum, \emph{expander decompositions} use a much weaker notion of pseudo-randomness, but significantly gain in efficiency.


Considering a specific problem, e.g.\ triangle detection, the challenge is finding a sweet spot in which the regularity notion is strong enough to make the problem algorithmically easy, yet weak enough to make the decomposition efficient.  
Unfortunately, expanders are too weak~\cite{AbboudW23}. Based on Szemerédi-regularity and Frieze-Kannan-regularity, Bansal and Williams~\cite{BansalW12} indeed scored non-trivial algorithmic improvements for Boolean Matrix Multiplication: A $(\log^*(n))^{\Omega(1)}$-shave based on Szemerédi's Regularity Lemma,\footnote{In fact, their algorithm is based on the Triangle Removal Lemma which can be proven via Szemerédi's Regularity Lemma, but could in principle (and, in fact, does) admit better quantitative bounds~\cite{Fox11}.} and a $(\log n)^{1/4}$-shave based on Frieze-Kannan regularity. In both cases, however, due to the excessive number of pieces in the regularity decomposition, it seems hopeless to go beyond log-shaves.

In this paper we employ a regularity notion called \emph{grid regularity} that was recently introduced by Kelley, Lovett and Meka~\cite{KelleyLM23} in the context of communication complexity and is based on similar results in the work of Kelley and Meka~\cite{KelleyM23} on $3$-term arithmetic progressions. This regularity notion is weak but still useful for triangle detection and BMM. Phrased in terms of matrices, the key takeaway from their work is that a single grid-regular matrix is not necessarily very random, but the \emph{product of two grid-regular matrices is very random} (see \cref{thm:regular-product}). Equivalently, in a 3-layered graph in which both edge sets are grid regular, the number of 2-paths from left to right behaves randomly.
Our main contribution is that we (1) establish a decomposition theorem for this notion of regularity into \emph{quasi-polynomially} many parts (specifically,~\smash{$2^{\Order((\log \delta^{-1})^7)}$}), and~(2)~provide an efficient deterministic algorithm to compute this decomposition.


\subsection{On Combinatorial Algorithms}
\label{sec:discussion}

Despite the large number of papers on combinatorial algorithms for BMM, there is currently no satisfying and precise definition for this notion. 
The main reason for this, we believe, is that there are multiple strong motivations for seeking combinatorial algorithms (discussed below) that are not necessarily consistent with one another.
A simple algorithm need not be practical, or vice versa, and an algorithm that generalizes to one setting may not generalize to another.
Therefore, one can either focus on precise definitions that are limited to one motivation or embrace a more inclusive but loose definition.
Many examples of the former approach exist (see below), and they give rise to interesting research questions.
The latter, however, is more popular in the community: we currently have no truly subcubic algorithm for BMM other than Strassen's algorithm (and its successors) so we should first seek to break \cref{conj:bmm} by \emph{any other technique} and hope that \emph{at least one} of the motivations gets satisfied. 

To help shed light on this, let us review the limitations of the existing algebraic technique that motivate us to seek other algorithms.
The first two may be more well-known, but the third is more pressing for fine-grained complexity and algorithm design.
Along the way, we discuss to what extent our algorithm satisfies each consideration.

\begin{itemize}
\item \textbf{Simplicity:} Strassen's algorithm (and even more so for its successors) exploits cancellations using formulas that may be considered unintuitive; consequently, the values manipulated at intermediate stages of the computation are quite uninterpretable. 
Instead, one may hope for techniques that are simpler and more interpretable; this is probably the historical reason for the name ``combinatorial''. Some works have proposed precise definitions along these lines, e.g.\ for solving triangle detection an algorithm can only generate sets by basic operations on the neighborhoods of nodes, and strong lower bounds exist~\cite{Angluin76,DasKS18}. 
Unfortunately, the current definitions are not flexible enough to capture the regularity decomposition techniques of Bansal-Williams~\cite{BansalW12} and of our paper, nor even the simple divide-and-conquer of Chan~\cite{Chan15} and Yu~\cite{Yu18}, even though these techniques are widely-considered ``combinatorial'' (even by the authors proposing the definitions).
In particular, our algorithm decomposes the graph by repeatedly finding and removing irregular pieces (i.e., that contain too many bicliques) and then uses a brute-force algorithm on the (sparse) parts.

\item \textbf{Practical Efficiency}: Here, one should make a distinction between Strassen's $n^{2.81}$ algorithm, and its successors that have reduced $\omega$ much further. The latter algorithms are considered ``galactic'' and the interest in them is mostly theoretical. On the other hand, Strassen's algorithm has been used in practice, but the gains are limited; some reasons include its bad locality (generating too many cache misses) and the need to manipulate large numbers.
For many decades, researchers have sought techniques that are more efficient in practice \emph{and also} have worst-case guarantees.\footnote{Note that the latter requirement is crucial because very fast matrix multiplication algorithms do exist in practice, but are due to highly optimized implementations of the cubic time solution and to special-purpose hardware whose sole purpose is to perform these implementations. Needless to say, a real algorithmic speedup is still desirable.} The lack of success in finding such algorithms partly motivated \cref{conj:bmm} in the 90s~\cite{Satta94,Lee02,AingworthCIM99}.
A definition of this notion has to be empirical, and determining if our algorithm satisfies it requires experiments. Regularity decompositions are infamously impractical, but our underlying paradigm of decomposing the input into pseudo-random parts has the potential to be practical.

\item \textbf{Generalizability:} Much of the interest in BMM and triangle detection is because they are simplified special cases of more difficult, important problems. 
A technique that only solves the special case is not satisfying.
In the following, we give four examples of such problems where (1) for well-established reasons, Strassen's technique does not give a truly subcubic algorithm, and (2) a truly subcubic algorithm that does generalize would be groundbreaking as it would refute a popular conjecture in fine-grained complexity that has nothing to do with ``combinatorial algorithms'', i.e.\ we do not know how to refute it with any algorithms.
In each case, a precise definition of ``combinatorial'' in the sense of generalizing to that particular setting can readily be made: it is an algorithm that solves the corresponding problem.

\begin{itemize}
\item To refute the famous All-Pairs Shortest-Paths (APSP) Conjecture, it is enough (in fact, equivalent) to solve \emph{weighted} generalizations of BMM and triangle detection in truly subcubic time: $(min,+)$-matrix-multiplication and Negative-Triangle~\cite{WilliamsW18}.
The issue with Strassen's technique is that it exploits cancellations by subtracting numbers, and $\min$ does not have an inverse.

\item To refute the Online Matrix-Vector (OMV) Conjecture~\cite{HenzingerKNS15}, we need to solve BMM in an \emph{online} setting in which the columns of the second matrix arrive one by one and, at each step, we must output the answer before seeing the next column, in a total time that is truly subcubic.
The issue with Strassen's algorithm is that its formulas depend on later columns.

\item To refute the Hyper-Clique Conjecture~\cite{LincolnWW18}, we need a generalization to \emph{hypergraphs} that lets us detect a $4$-clique in a $3$-uniform hypergraph. 
A technique entirely different from Strassen's is needed because formulas that reduce the number of multiplications provably do not exist---the border rank of the corresponding tensor matches the trivial upper bound~\cite{LincolnWW18}.

\item To refute the famous $3$-SUM Conjecture~\cite{Patrascu10,KopelowitzPP16} (and also the APSP Conjecture~\cite{VassilevskaWilliamsX20}), it is enough to obtain a generalization to the (witness) \emph{reporting} setting. In particular, we would like an algorithm that preprocesses a graph in truly subcubic time and can then enumerate all triangles with constant delay.
Unfortunately, the witness information is lost under the cancellations that are exploited in Strassen's algorithm.

\end{itemize}

Does our new algorithm and, more generally, the regularity decompositions technique generalize to these four settings?
For the first three, it is unclear and left for future research (there was much less incentive to do it before this work since it was outperformed by divide-and-conquer).\footnote{For the first~\cite{Williams18} and second~\cite{LarsenW17} settings, breakthrough works have already managed to shave a quasi-polynomial factor by fancy reductions to matrix multiplication (and then using the algebraic methods); nonetheless, achieving such improvements with our combinatorial methods would be interesting. For the third setting, it is a big open question.}
For the reporting setting, Abboud, Fischer, and Shechter~\cite{AbboudFS23} recently observed that the ideas in the Bansal-Williams algorithm can give a triangle enumeration algorithm with $n^3/\log^{2.25}n$ preprocessing and constant delay, which is the state-of-the-art even when using algebraic techniques.
Building on our new decomposition theorem, we give an improved bound, demonstrating that our techniques are useful beyond ``combinatorial'' algorithms.

\begin{restatable}[Triangle Enumeration Algorithm]{theorem}{thmtriangleenum} \label{thm:triangle-enumeration}
There is a deterministic algorithm that preprocesses a given graph in time~\smash{$n^3 / (\log n)^6 \cdot (\log\log n)^{\Order(1)}$} and then enumerates all triangles with constant delay.
\end{restatable}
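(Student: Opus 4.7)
The plan is to adapt the general triangle-enumeration framework of Abboud, Fischer and Shechter, replacing their Bansal--Williams-style Szemer\'edi decomposition by our new grid-regular decomposition in order to obtain a stronger polylogarithmic speedup. First I would reduce triangle enumeration to the tripartite setting with parts $A$, $B$, $C$, and fix a density threshold $\delta = 1/\poly(\log n)$, tuned so that the final exponent in the savings works out to~$6$. I then apply the decomposition theorem to both bipartite edge sets $E(A,B)$ and $E(B,C)$. By the quasi-polynomial bound on decomposition complexity, this produces $2^{\Order((\log \delta^{-1})^7)} = 2^{\Order((\log \log n)^7)}$ grid-regular pieces, together with a residual collection of near-bicliques absorbing the irregular structure; the overhead in the number of pieces is absorbed into the final $(\log \log n)^{\Order(1)}$ factor.

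The two kinds of pieces are then handled separately. For a residual biclique $X \times Y$, triangles using one of its edges can be enumerated by iterating over $x \in X$, $y \in Y$, and $z \in N(x) \cap N(y)$ in the third part; since each such biclique is rank one, a Four-Russians-style lookup table built in time proportional to its support supports constant-delay enumeration. For a pair of grid-regular pieces, I would invoke the pseudorandom product property (\cref{thm:regular-product}): the number of length-two paths through the two pieces concentrates inside any moderately large subrectangle, so one may enumerate 2-paths block by block using word-parallel bitwise \textsc{and}s on $\Order(\log n)$-bit words. Aggregating the per-pair savings across the quasi-polynomial number of piece-pairs is designed to yield a $(\log n)^6$ factor speedup compared to the naive cubic bound.

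To obtain constant delay during the enumeration phase, I would precompute a traversal structure over the union of all pieces: each piece exposes an iterator over its nonzero product entries together with a witness vertex, and a meta-cursor skips between pieces via precomputed pointers. Amortization together with the fact that every triangle is covered by at least one piece-pair then yields $\Order(1)$ delay per triangle output.

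The main obstacle I expect is bridging the gap between counting and witness retrieval. The pseudorandom product theorem is, at heart, a \emph{counting} statement: it certifies that product entries are close to uniformly distributed but does not directly supply a witness vertex for each nonzero entry. Closing this gap requires a careful recursive refinement of each grid-regular piece into word-aligned blocks with precomputed witness tables, and the delicate trade-off is to simultaneously balance the density threshold $\delta$, the word-parallel block size, the number of decomposition pieces, and the size of the witness tables so that the preprocessing cost fits into $n^3 / (\log n)^6 \cdot (\log \log n)^{\Order(1)}$ while the output phase retains constant delay.
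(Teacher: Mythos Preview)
Your proposal has the right flavor---apply the grid-regular decomposition and exploit pseudorandomness of the product---but it misses the actual mechanism that produces the exponent~$6$, and it misdescribes the decomposition's output.

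First, the decomposition of \cref{thm:2-path-decomposition} does not produce ``near-bicliques'' as residual pieces; each piece $(X_k,Y_k,Z_k,A_k,B_k)$ is either \emph{sparse} ($\Ex[A_k]\le 2^{-d}$ or $\Ex[B_k]\le 2^{-d}$) or has both $A_k$ and $B_k^T$ grid-regular and min-degree. There is no rank-one structure to exploit, so your biclique branch does not arise.

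Second, and more seriously, you do not explain where the factor $(\log n)^6$ comes from. The paper's argument rests on a \emph{density dichotomy} applied inside each regular piece, combined with a refined Four-Russians bound whose running time contains the term $|X_k|\,|Y_k|\,|Z_k|\cdot \Ex[A_k\circ B_k]/(\log n)^2$. If both $\Ex[A_k],\Ex[B_k]\le\delta$ with $\delta\approx(\log\log n)^{\Order(1)}/(\log n)^2$, regularity gives $\Ex[A_k\circ B_k]\approx\Ex[A_k]\Ex[B_k]\le\delta^2$, so this term is $n^3/(\log n)^6$ up to log-log factors. If instead one of $A_k,B_k$ is $\delta$-dense, the paper does \emph{not} try to enumerate 2-paths directly; it buckets $X_k$ by $C_k$-degree and shows that in each nontrivial bucket the number of triangles $t_{G_{k,\ell}}$ is at least $\Omega(\delta)\cdot|X_k|\,|Y_k|\,|Z_k|\cdot\Ex[A_k^T\circ C_{k,\ell}]$, so the Four-Russians cost is absorbed into the output size $t$. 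Your proposal has no analogue of this ``dense $\Rightarrow$ many triangles $\Rightarrow$ cost charged to output'' step, and without it there is no reason the per-piece savings compose to $(\log n)^6$ rather than just $(\log n)^2$.

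Third, the third edge set $C$ is never decomposed and is not regular; handling it is precisely where the degree-bucketing and recursion in the paper's Cases~3 and~4 enter. Your sketch treats the problem as if regularity of the $A$- and $B$-sides alone suffices, which it does not.

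Finally, constant delay in the paper is obtained not by an iterator over piece-pairs but by a generic reduction (\cref{lem:equivalence-enumerate}) from listing to enumeration: split into $n^{1/2}$-size blocks, approximately count triangles in each block deterministically, list the light blocks in preprocessing, and during enumeration interleave output from the current heavy block with listing the next one. Your ``meta-cursor'' idea would need a comparable amortization argument, which you have not supplied.
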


It should appear strange that we shave only a $\log^6{n}$ factor and not a super-poly-log. The reason for this is that $O(n^3/\log^6{n})$ is the best we know how to achieve (using any technique) for triangle enumeration \emph{even on random graphs} (see \cref{sec:triangle-enumeration:sec:overview}).
Thus, it is a natural limit for any technique (like ours) based on a reduction to random-like instances. 
Moreover, due to the reduction from $3$-SUM to triangle enumeration~\cite{Patrascu10,KopelowitzPP16}, shaving any additional $\log^{\eps}{n}$ factor over our bound would improve on the longstanding upper bound for (integer) $3$-SUM~\cite{BaranDP08} (see \cref{sec:triangle-enumeration:sec:lower-bound}).
\end{itemize}

\section{Preliminaries} \label{sec:preliminaries}
We write $[n] = \set{1, \dots, n}$ and $\poly(n) = n^{\Order(1)}$. Occasionally, we write $a = b \pm \epsilon$ to express that $|a - b| \leq \epsilon$.

\subsection{Machine Model}
We assume the standard Word RAM model with word size $\Theta(\log n)$ (where $n$ is the input size). Since, for most of our algorithmic results, additional log-factors in the running times would not matter, the choice of the machine model is not crucial. Only in \cref{sec:triangle-enumeration}, when we care about log-factors, this choice matters.

\subsection{Graphs and Matrices}
We typically denote sets (of nodes) by $X, Y, Z$ and matrices by $A, B, C$. Moreover, we typically view binary matrices $A \in \set{0, 1}^{X \times Y}$ as bipartite graphs on the node sets $X, Y$, where an edge $(x, y)$ is present if and only if $A(x, y) = 1$.

Let \smash{$A \in \Real_{\geq 0}^{X \times Y}$} and \smash{$B \in \Real_{\geq 0}^{Y \times Z}$} be matrices. We denote by $A B$ their standard matrix product, and by $A \circ B$ a scaled matrix product defined by
\begin{equation*}
    (A \circ B)(x, z) = \Ex_{y \in Y} A(x, y) B(y, z) = \frac{1}{|Y|} \cdot \sum_{y \in Y} A(x, y) B(y, z).
\end{equation*}
Following the bipartite graph analogy, for sets $X' \subseteq X$ and~\makebox{$Y' \subseteq Y$}, we let \smash{$A[X', Y'] \in \Real_{\geq 0}^{X \times Y}$} denote the submatrix restricted to the rows in $X'$ and the columns in $Y'$---that is, the subgraph induced by $X' \cup Y'$. Let $A^T$ denote the transpose of $A$---that is, the subgraph obtained by exchanging the sides $X$ and $Y$. We call
\begin{equation*}
    \Ex[A] = \Ex_{\substack{x \in X\\y \in Y}} A(x, y) = \frac{1}{|X| \, |Y|} \cdot \sum_{\substack{x \in X\\y \in Y}} A(x, y)
\end{equation*}
the \emph{density} of $A$. For nodes $x \in X$ and $y \in Y$, we define their \emph{(relative) degrees} as
\begin{align*}
    \deg_A(x) &= \Ex_{y \in Y} A(x, y) = \frac{1}{|Y|} \cdot \sum_{y \in Y} A(x, y), \\
    \deg_A(y) &= \Ex_{x \in X} A(x, y) = \frac{1}{|X|} \cdot \sum_{x \in X} A(x, y).
\end{align*}
We say that $A$ is \emph{$\epsilon$-left-min-degree} or simply \emph{$\epsilon$-min-degree} if $\min_{x \in X} \deg_A(x) \geq (1 - \epsilon) \Ex[A]$. (The symmetric notion of \emph{$\epsilon$-right-min-degree} is never used in the paper.) Moreover, for~\makebox{$\alpha, \epsilon, \delta \geq 0$}, we say that $A$ is \emph{$(\alpha, \epsilon, \delta)$-uniform} if
\begin{equation*}
    \Pr_{(x, y) \in X \times Y}[\,(1 - \epsilon) \alpha \leq A(x, y) \leq (1 + \epsilon) \alpha\,] \geq 1 - \delta.
\end{equation*}

\subsection{Grid Regularity}
Recall that we abstractly consider a graph \emph{regular} if it behaves somewhat pseudo-randomly.\footnote{This concept is \emph{not} related to degree-regularity (where each node in the graph has the same number of neighbors).} In this paper we employ the following formal notion of regularity, defined via the \emph{``grid norm''} of a matrix. Specifically, for a matrix with non-negative entries~\smash{$A \in \Real_{\geq 0}^{X \times Y}$} and integers $k, \ell \geq 1$, we define its $(k,\ell)$-grid norm as
\begin{equation*}
    \norm{A}_{U(k, \ell)} = \parens*{\Ex_{\substack{x_1, \dots, x_k \in X\\y_1, \dots, y_\ell \in Y}} \prod_{\substack{i \in [k]\\j \in [\ell]}} A(x_i, y_j)}^{\frac{1}{k \ell}};
\end{equation*}
note that equivalently
\begin{equation*}
    \norm{A}_{U(k, \ell)}^{k \ell } = \Ex_{x_1, \dots, x_k \in X} \parens*{\Ex_{y \in Y} \prod_{i \in [k]} A(x_i, y)}^\ell = \Ex_{y_1, \dots, y_\ell \in Y} \parens*{\Ex_{x \in X} \prod_{i \in [\ell]} A(x, y_j)}^k.
\end{equation*}
Strictly speaking, $\norm{\,\cdot\,}_{U(k, \ell)}$ is not necessarily a norm, but we will nevertheless intuitively treat it as such.\footnote{It is however known that $\norm{\,\cdot\,}_{U(k, \ell)}$ is a semi-norm whenever $k$ and $\ell$ are even~\cite[Theorems~2.8 and~2.9]{Hatami2010}.} In combinatorial terms, the grid norm of a bipartite graph \smash{$A \in \set{0, 1}^{X \times Y}$} measures (up to normalization) the number of $(k, \ell)$-bicliques that occur as subgraphs of $A$ (including subgraphs in which some nodes of the biclique coincide).

Note that the grid norm $\norm{A}_{U(k, \ell)}$ ranges from $\Ex[A]$ to $1$, and thereby constitutes some measure of pseudo-randomness: On the one hand, purely random bipartite graphs (with edge density~$\Ex[A]$) have grid norm~\makebox{$\norm{A}_{U(k, \ell)} \approx \Ex[A]$}, whereas structured graphs (e.g., graphs with large induced subgraphs of increased density) often have larger grid norms. In this spirit, we say that $A$ is \emph{$(\epsilon, k, \ell)$-regular} if
\begin{equation*}
    \norm{A}_{U(k, \ell)} \leq (1 + \epsilon) \Ex[A].
\end{equation*}
For specific constant values of $k$ and $\ell$, grid norms have appeared in many previous mathematical works (e.g.,~\cite{Gowers01,Gowers06}), and also implicitly in recent algorithmic structure-to-randomness reductions~\cite{AbboudBKZ22,AbboudBF23,JinX23}.

\subsection{Kelley-Lovett-Meka's Structural Theorem}
What makes grid norms useful for us? In a recent result, Kelley, Lovett and Meka \cite{KelleyLM23} use analytical methods to obtain the following structural result, linking the regularity of two graphs $A$ and $B$ to their product matrix.

\begin{theorem}[Regular Matrices Have Uniform Products~{\cite[Lemma~4.8]{KelleyLM23}}] \label{thm:regular-product}
Let \smash{$A \in \Real_{\geq 0}^{X \times Y}$} and \smash{$B \in \Real_{\geq 0}^{Y \times Z}$}, let $\epsilon \in (0, \frac{1}{80})$, let $d \geq 2/\epsilon$ and assume that
\begin{enumerate}[label=(\alph*)]
    \item $A$ and $B^T$ are $(\epsilon, 2, d)$-regular, and
    \item $A$ and $B^T$ are $\epsilon$-min-degree.
\end{enumerate}
Then $A \circ B$ is $(\Ex[A] \Ex[B], 80\epsilon, 2^{-\epsilon d / 2})$-uniform.
\end{theorem}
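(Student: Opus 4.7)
The plan is to control the $d$-th moment of the function $(x, z) \mapsto (A \circ B)(x, z)$ over uniform $(x, z)$ and translate moment bounds into two-sided tail estimates via Markov. For the upper tail, unfolding the definition and using independence of $x$ and $z$ gives
\begin{equation*}
    \Ex_{x, z}\left[(A \circ B)(x, z)^d\right] = \Ex_{y_1, \dots, y_d} \left[\Ex_x \prod_{i \in [d]} A(x, y_i)\right]\left[\Ex_z \prod_{i \in [d]} B(y_i, z)\right].
\end{equation*}
A single Cauchy--Schwarz in the $y$-variables separates the two bracketed factors, each yielding the $d$-th power of a $(2, d)$-grid norm. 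The regularity hypothesis then bounds $\Ex_{x, z}[(A \circ B)^d] \leq ((1 + \epsilon)^2 \Ex[A]\Ex[B])^d$, and a Markov bound (with $d \geq 2/\epsilon$ and $\epsilon < 1/80$) delivers the upper tail $\Pr[(A \circ B)(x, z) \geq (1 + 80\epsilon)\Ex[A]\Ex[B]] \leq 2^{-\epsilon d / 2}$.

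The lower tail is more delicate because moment bounds alone cannot rule out severe downward deviations; here the min-degree hypothesis becomes essential. I would proceed in two steps. First, using the grid-norm monotonicity $\norm{A}_{U(1, d)} \leq \norm{A}_{U(2, d)}$ (itself Cauchy--Schwarz), the bound $\Ex_x \deg_A(x)^d \leq ((1 + \epsilon) \Ex[A])^d$ implies that $\deg_A(x) \leq (1 + 80\epsilon)\Ex[A]$ outside a $2^{-\Omega(\epsilon d)}$ fraction of $x$; combined with the min-degree lower bound $\deg_A(x) \geq (1-\epsilon)\Ex[A]$ and an analogous bound for $B^T$, this pins down $\deg_A(x) \deg_B(z) = (1 \pm O(\epsilon))\Ex[A]\Ex[B]$ off a $2^{-\Omega(\epsilon d)}$ fraction of $(x, z)$. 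Second, via the identity
\begin{equation*}
    (A \circ B)(x, z) - \deg_A(x) \deg_B(z) = \Ex_y(A(x, y) - \deg_A(x))(B(y, z) - \deg_B(z)),
\end{equation*}
the remaining goal reduces to bounding this ``centered correlation'' by $O(\epsilon) \Ex[A] \Ex[B]$ in absolute value on all but a $2^{-\Omega(\epsilon d)}$ fraction of $(x, z)$.

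The hard part will be establishing this centered-correlation bound by another run of the moment method. Repeating the $d$-th power expansion reduces it to controlling $\Ex_{x, x'}[(\beta_A(x, x') - \deg_A(x) \deg_A(x'))^d]$ with $\beta_A(x, x') := \Ex_y A(x, y) A(x', y)$, together with a symmetric term for $B^T$. Both $\beta_A$ and $\deg_A \cdot \deg_A$ have mean $\Ex[A]^2$, and extracting the cancellations between them requires a careful binomial expansion where every mixed term is controlled by a grid norm $\norm{A}_{U(k, j)}$ with $k \leq 2$ and $j \leq d$ (each bounded by $(1+\epsilon)\Ex[A]$ via the hypothesis and its implied monotonicity). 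Driving this $d$-th moment down to the ``noise-floor'' scale $(O(\epsilon) \Ex[A]^2)^d$, as opposed to the trivial $(\Ex[A]^2)^d$ bound, is the technical heart of the argument; a final Markov step then produces the sought $2^{-\Omega(\epsilon d)}$ lower tail, and combining with the first step yields the claimed $(1 \pm 80\epsilon)$-uniformity.
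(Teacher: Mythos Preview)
The paper does not prove this theorem; it is quoted verbatim as Lemma~4.8 of Kelley--Lovett--Meka~\cite{KelleyLM23} and used as a black box. There is therefore no ``paper's own proof'' to compare against.

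On the merits of your outline: the upper-tail argument is correct and is indeed the standard route. The expansion
\[
    \Ex_{x,z}\bigl[(A\circ B)(x,z)^d\bigr]
    = \Ex_{y_1,\dots,y_d}\Bigl[\Ex_x\prod_i A(x,y_i)\Bigr]\Bigl[\Ex_z\prod_i B(y_i,z)\Bigr]
    \le \norm{A}_{U(2,d)}^{\,d}\,\norm{B^T}_{U(2,d)}^{\,d}
\]
via Cauchy--Schwarz in the $y$-variables, followed by Markov, is exactly how the KLM argument handles that side.

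For the lower tail your decomposition $(A\circ B)(x,z)=\deg_A(x)\deg_B(z)+\Ex_y(A(x,y)-\deg_A(x))(B(y,z)-\deg_B(z))$ is correct, and pinning $\deg_A(x)\deg_B(z)$ via $\norm{A}_{U(1,d)}\le\norm{A}_{U(2,d)}$ plus the min-degree hypothesis is the right move. The step you flag as the ``technical heart'' really is where the work lies, and your proposed route---expand $(\beta_A(x,x')-\deg_A(x)\deg_A(x'))^d$ binomially and bound each term by a grid norm $\norm{A}_{U(k,j)}$ with $k\le 2$---does not obviously close. The cross terms $\Ex_{x,x'}\bigl[\beta_A(x,x')^j(\deg_A(x)\deg_A(x'))^{d-j}\bigr]$ couple $x$ and $x'$ through $j$ shared $y$-variables and $2(d-j)$ independent ones, which is not a pure $U(k,j)$ structure; more to the point, each such term is individually of order $\Ex[A]^{2d}$, and the $(O(\epsilon)\Ex[A]^2)^d$ bound you need must come from \emph{cancellation} across the alternating sum, not from bounding terms separately. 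You have identified the right obstacle but not the mechanism that resolves it; in the KLM framework this is handled by a more careful spectral/H\"older argument rather than a raw binomial expansion.
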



\section{Technical Overview} \label{sec:overview}
\cref{thm:regular-product} is the starting point
Our goal in the following is to exploit this structural theorem algorithmically and derive an improved combinatorial algorithm for Boolean matrix multiplication. In this section we describe our key ideas.

\paragraph{Boolean Matrix Multiplication and Triangle Detection}
Recall that the Triangle Detection problem is to test whether a given undirected, tripartite graph $(X, Y, Z, A, B, C)$ (with vertex parts~$X, Y, Z$ and edge parts \smash{$A \in \set{0, 1}^{X \times Y}, B \in \set{0, 1}^{Y \times Z}, C \in \set{0, 1}^{X \times Z}$}) contains a \emph{triangle} (that is, a vertex triple $(x, y, z) \in (X, Y, Z)$ with $A(x, y) = B(y, z) = C(x, z) = 1$). While at first glance Triangle Detection appears to be a simpler problem than BMM (note that the output consists of a single bit versus $n^2$ bits), it is known since the early days of fine-grained complexity that both problems are, in fact, equivalent in terms of subcubic algorithms~\cite{WilliamsW18}. Specifically, if Triangle Detection can be solved in time $\Order(n^3 / f(n))$, then Boolean Matrix Multiplication is in time~\makebox{$\Order(n^3 / f(n^{1/3}))$}. This reduction is essentially loss-less for the quasi-polynomial speed-up that we aim for in this paper. Therefore, we focus on designing an efficient algorithm for Triangle Detection in the following exposition.

\paragraph{Triangle Detection on Regular Graphs}
We start by describing a dream scenario to understand how Kelley-Lovett-Meka's structural theorem~\cite{KelleyLM23} comes into play.

Our aim is to solve Triangle Detection in time $n^3 / 2^{\Omega(d)}$ for some parameter $d$. Moreover, let~\makebox{$\epsilon > 0$} be a small constant (say, $\epsilon = \frac{1}{160}$). For the dream scenario suppose that the edge parts $A$ and $B$ are regular in the sense of \cref{thm:regular-product}:
\begin{enumerate}[label=(\alph*)]
    \item $A$ and $B^T$ are $(\epsilon, 2, d)$-regular, and
    \item $A$ and $B^T$ are $\epsilon$-min-degree.
\end{enumerate}
Under these assumptions, \cref{thm:regular-product} yields that the scaled matrix product $A \circ B$ is $(\Ex[A] \Ex[B], 80\epsilon,\allowbreak 2^{-\epsilon d/2})$\-/uniform. Explicitly, for our choice of $\epsilon=\frac{1}{160}$, this means that
at least a $(1-2^{-\epsilon d/2})$-fraction of the entries of~$A \circ B$ fall in the range $[\frac{1}{2} \Ex[A] \Ex[B], \frac{3}{2} \Ex[A] \Ex[B]]$. In particular, the matrix $A \circ B$ (and thereby also $A B$) has zeros in at most a $2^{-\epsilon d/2}$-fraction of its entries (assuming that $A$ and $B$ are nonzero).

This puts us in a win-win situation: Either the matrix $C$ is sparse ($\Ex[C] \leq 2^{-\epsilon d/2}$), in which case we can detect a triangle in time $n^3 / 2^{\Omega(d)}$ (by enumerating all $n^2 / 2^{\Omega(d)}$ edges in $C$ and all remaining nodes~$y \in Y$). Or the matrix $C$ is dense ($\Ex[C] > 2^{-\epsilon d/2}$), and it follows from the uniformity that~$AB$ and~$C$ have a common nonzero entry. Note that this certifies that there is a triangle without the need to compute anything further.

\paragraph{Our Regularity Decomposition}
Of course, we cannot simply assume the dream scenario where~$A$ and~$B$ satisfy the regularity and min-degree conditions. Instead, we hope to decompose $A$ and $B$ into regular pieces in the same flavor as Szemerédi's or Frieze-Kannan's regularity lemmas. For grid regularity, unfortunately, such a decomposition theorem was not known.

One of our key contributions is such a decomposition theorem, see \cref{thm:2-path-decomposition}. 
We emphasize that this theorem is novel even existentially (i.e., even without the extra requirement that the decomposition must be computed efficiently).

\begin{restatable}[$AB$-Decomposition]{theorem}{thmtwopathdecomposition} \label{thm:2-path-decomposition}
Let $A \in \set{0, 1}^{X \times Y}, B \in \set{0, 1}^{Y \times Z}$, let $\epsilon \in (0, 1)$ and~\makebox{$d \geq 1$}. There is an algorithm $\ABDecomposition(X, Y, Z, A, B, \epsilon, d)$ that computes a collection of tuples $\set{(X_k, Y_k, Z_k, A_k, B_k)}_{k=1}^K$, where $X_k \subseteq X$, $Y_k \subseteq Y$, $Z_k \subseteq Z$, $A_k \in \set{0, 1}^{X_k \times Y_k}$, \smash{$B_k \in \set{0, 1}^{Y_k \times Z_k}$} such that
\begin{enumerate}
    \item $A B = \sum_{k=1}^K A_k B_k$.
    \item For all $k \in [K]$:
    \begin{enumerate}[label=(\roman*)]
        \item $\Ex[A_k] \leq 2^{-d}$ or $\Ex[B_k] \leq 2^{-d}$, or
        \item $A_k$ and $B_k^T$ are both $(\epsilon, 2, d)$-regular and $\epsilon$-min-degree. 
    \end{enumerate}
    \item $\sum_{k=1}^K |X_k| \, |Y_k| \, |Z_k| \leq 2(d + 2)^2 \cdot |X| \, |Y| \, |Z|$.
    \item $K \leq \exp(d^7 \poly(\epsilon^{-1}))$.
\end{enumerate}
The algorithm is deterministic and runs in time $n^2 \cdot \exp(d^7 \poly(\epsilon^{-1}))$ (where $n = |X| + |Y| + |Z|$).
\end{restatable}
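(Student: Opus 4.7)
My strategy is to produce the decomposition by iterative refinement, starting from the single piece $\{(X, Y, Z, A, B)\}$ and repeatedly splitting any piece that violates both the sparsity condition 2(i) and the regularity-plus-min-degree condition 2(ii). Every refinement will be an axis-aligned split of $X_k \times Y_k \times Z_k$ into rectangular sub-pieces, so that the collection always refines a partition of $X \times Y \times Z$ and condition 1 is preserved at each step (every 2-path $(x,y,z)$ contributing to $AB$ remains charged to a unique sub-piece).

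The heart of the proof is a density-increment primitive on the $A$-side (with a symmetric one on the $B^T$-side): if $A_k \in \set{0, 1}^{X_k \times Y_k}$ has $\Ex[A_k] > 2^{-d}$ but $\norm{A_k}_{U(2,d)} > (1+\epsilon)\Ex[A_k]$, then there exist subsets $X' \subseteq X_k$ and $Y' \subseteq Y_k$ of relative sizes at least $2^{-\poly(d, \epsilon^{-1})}$ such that $\Ex[A_k[X', Y']] \geq (1 + \Omega(\epsilon))\,\Ex[A_k]$. The derivation unpacks the grid-norm inequality
\begin{equation*}
\Ex_{x_1, x_2 \in X_k}\bigl(\Ex_{y \in Y_k} A_k(x_1, y) A_k(x_2, y)\bigr)^d > (1+\epsilon)^{2d}\,\Ex[A_k]^{2d}
\end{equation*}
via a Markov/H\"older argument on the $d$-th moment to pick a pair $(x_1, x_2)$ whose common neighbourhood $Y'$ is large and induces a boosted density on many rows; averaging then yields the matching $X'$. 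Whenever a piece fails regularity on the $A$-side I split $X_k \to X' \sqcup (X_k \setminus X')$ and $Y_k \to Y' \sqcup (Y_k \setminus Y')$, producing up to four sub-pieces; the $B$-side is treated symmetrically. Wrapped around this outer loop is a simple min-degree regularization that splits off rows of $A_k$ (or columns of $B_k^T$) whose relative degree falls below $(1-\epsilon)\Ex[A_k]$; a short calculation shows every such removal strictly increases $\Ex[A_k]$ by a factor of at least $1 + \Omega(\epsilon)$, so this inner loop terminates within $O(d/\epsilon)$ steps before the piece either becomes $\epsilon$-min-degree or crosses the $2^{-d}$ sparsity threshold.

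The main obstacle is the quantitative accounting that delivers the bound $K \leq \exp(d^7 \poly(\epsilon^{-1}))$ and the corresponding runtime. I would maintain a weighted potential of the form
\begin{equation*}
\Phi = \sum_k \frac{|X_k|\,|Y_k|\,|Z_k|}{|X|\,|Y|\,|Z|}\cdot\bigl(\log(1/\Ex[A_k]) + \log(1/\Ex[B_k])\bigr),
\end{equation*}
which never exceeds $O(d)$ once sparse pieces are frozen, and show that every density-increment step drops $\Phi$ by a weighted amount of at least $\Omega(\poly(\epsilon))$. The exponent $7$ arises from compounding three effects: (a) a single density-increment can shrink the relative size of the refined sub-rectangle by as much as $2^{-\poly(d, \epsilon^{-1})}$, so the recursion can reach depth $\poly(d, \epsilon^{-1})$ on every branch; (b) the inner min-degree loop introduces $O(d/\epsilon)$ extra splits per outer step; and (c) density increments must be applied on both the $A$- and the $B$-sides. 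Carefully multiplying these effects out---tracking exactly how sub-rectangle shrinkage propagates through the recursion tree on both sides---is where I expect the real difficulty to lie; the headline bound $d^7\poly(\epsilon^{-1})$ should emerge from a tight version of this calculation. Condition 3 follows from the size-preservation of axis splits, with the $2(d+2)^2$ overhead absorbing the $A$-side vs.~$B$-side bookkeeping. Algorithmically, each density-increment witness is identified in $O(n^2)$ time by enumerating row pairs and codegrees, the min-degree step is $O(n^2)$, and the total number of refinements is $\exp(d^7 \poly(\epsilon^{-1}))$, yielding the stated runtime.
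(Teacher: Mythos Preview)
Your potential-function argument has the sign wrong, and this is fatal for termination. When you split a piece $(X_k,Y_k,Z_k)$ into four axis-aligned sub-pieces, the weighted average of the sub-densities equals $\Ex[A_k]$, and since $t\mapsto\log(1/t)$ is convex, Jensen's inequality gives
\[
\sum_i \frac{|X_k^{(i)}|\,|Y_k^{(i)}|}{|X_k|\,|Y_k|}\,\log\frac{1}{\Ex[A_k^{(i)}]} \;\ge\; \log\frac{1}{\Ex[A_k]},
\]
so $\Phi$ \emph{increases} under a split, not decreases. Even reframed as an energy-increment argument (with $\Phi$ bounded above by $O(d)$), you would need each refinement to raise $\Phi$ by a fixed amount, but your density increment only improves one of the four sub-rectangles; the other three have no guaranteed progress and may themselves require further splitting with no decrease in their contribution to $\Phi$. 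This leaves the recursion tree unbounded, and the $d^7$ you describe as ``multiplying effects out'' never materializes from this scheme.

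The paper's proof is structurally quite different and not symmetric in $A$ and $B$. It first packages the $A$-side into a black-box decomposition (the $A$-Decomposition theorem), then repeatedly extracts a ``good cube'': a rectangle $Y^*\times Z^*$ in $B$ together with an $A$-decomposition of $A[X,Y^*]$ into pieces $(X_\ell,Y_\ell,A_\ell)$, such that each induced $B[Y_\ell,Z_\ell]^T$ is regular and $\epsilon$-min-degree. The crucial difficulty---which your proposal does not confront---is that min-degree on $B[Y_\ell,Z^*]^T$ can only be enforced on a large subset $Z_\ell\subseteq Z^*$, never on all of $Z^*$. The leftover slabs $X_\ell\times Y_\ell\times(Z^*\setminus Z_\ell)$ must be handled by a \emph{separate} recursion whose cost is controlled by choosing $|Z^*\setminus Z_\ell|\le\gamma|Z^*|$ with $\gamma=\Theta(1/d^2)$; this parameter choice, combined with a depth bound $h\le d$ on that recursion and a density-decrement outer loop on $B$, is precisely what produces Property~3 with the $2(d+2)^2$ constant and drives the exponent to $d^7$ (via $\gamma^{-1}$ entering the size lower bound in the good-cube lemma). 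Your axis-splitting never isolates this leftover-$Z$ phenomenon, and your claim that Property~3 ``follows from size-preservation of axis splits'' would actually give $\sum_k|X_k||Y_k||Z_k|=|X||Y||Z|$ exactly, which is inconsistent with the mechanism that makes the rest of the argument work. Finally, finding the density-increment witness by ``enumerating row pairs and codegrees'' is $\Theta(n^3)$, not $O(n^2)$; the paper needs a nontrivial deterministic grid-norm approximation via oblivious samplers to stay within $n^2\cdot\exp(\poly(d,\epsilon^{-1}))$.
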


To illustrate how these four properties become useful, we complete the description of the Triangle Detection algorithm. We first precompute the decomposition as in the theorem. Additionally, define~$C_k = C[X_k, Z_k]$. Property~(1) of the theorem states that $A B = \sum_k A_k B_k$ (here, by slight abuse of notation, in the sum we interpret each term $A_k B_k$ as the $X \times Z$-matrix by extending $A_k B_k$ with zeros). Therefore, the set of triangles in the original graph is exactly the disjoint union of the triangles in the tripartite subgraphs $(X_k, Y_k, Z_k, A_k, B_k, C_k)$. It thus remains to detect a triangle in any of these subgraphs.

For each such subgraph, we are again in a win-win situation: If at least one of the edge parts is sparse (i.e., $\Ex[A_k] \leq 2^{-d}$ or $\Ex[B_k] \leq 2^{-d}$ or $\Ex[C_k] \leq 2^{-\epsilon d / 2}$), then we can solve the subinstance efficiently in time $|X_k| \, |Y_k| \, |Z_k| \,/\, 2^{\Omega(d)}$. Otherwise, Property~(2) of the theorem implies that we are in the dream scenario that $A_k$ and $B_k^T$ are $(\epsilon, 2, d)$-regular and $\epsilon$-min degree. Following the same argument as before, building on the structural \cref{thm:regular-product}, it follows that $A_k B_k$ and $C_k$ share a common nonzero entry, which entails the existence of a triangle. In summary, the algorithm solves each sparse subinstance in time $|X_k| \, |Y_k| \, |Z_k| \,/\, 2^{\Omega(d)}$ and stops as soon as it encounters a dense subinstance.

The remaining Properties~(3) and~(4) are necessary to bound the running time of this algorithm. On the one hand, by Property~(3) solving all sparse instances takes total time
\begin{equation*}
    \sum_{k=1}^K \frac{|X_k| \, |Y_k| \, |Z_k|}{2^{\Omega(d)}} \leq |X| \, |Y| \, |Z| \cdot \frac{\poly(d)}{2^{\Omega(d)}} \leq \frac{n^3}{2^{\Omega(d)}}.
\end{equation*}
On the other hand, precomputing the decomposition, and testing for each subinstance, whether it is dense or sparse, takes time~\smash{$n^2 \cdot \exp(d^7 \poly(\epsilon^{-1})) = n^2 \cdot 2^{\Order(d^7)}$}. By choosing $d = \Theta(\sqrt[7]{\log n})$ sufficiently small such that the precomputation time becomes $\Order(n^{2.1})$, say, the total running time becomes \smash{$n^3 / 2^{\Omega(\sqrt[7]{\log n})}$} as claimed.

The remainder of this overview is devoted to a proof overview of \cref{thm:2-path-decomposition}.

\subsection{Enforcing Regularity and Min-Degree}
Towards proving the decomposition theorem, our first milestone is to develop tools to enforce the (a) regularity and (b) min-degree conditions.

Both tools follow a common theme: To achieve some property we either certify that (a large part of) the given graph already satisfies the property, or that we can alternatively find a large induced subgraph which is substantially \emph{denser} than average (\emph{density increment}). In the former case we have been successful, and in the latter case we will simply \emph{recurse} on the selected denser piece. Since the density increases with each recursive call, we control the recursion depth and the loss we thereby incur. More details follow in \cref{sec:overview:sec:A-decomposition,sec:overview:sec:AB-decomposition}.

\paragraph{Enforcing Min-Degree}
Let us start with the conceptually easier min-degree property. Here, specifically, we would like to ensure the $\epsilon$-min-degree condition, i.e. that all nodes $x \in X$ satisfy $\deg_A(x) \geq (1 - \epsilon) \Ex[A]$. In fact, it is enough for us if we can find a subgraph of, say, half the total size that satisfies that it is $\epsilon$-min-degree. An easy algorithm is to repeatedly remove low-degree nodes~$x$ until the remaining graph becomes $\epsilon$-min-degree. If this algorithm terminates before removing half the nodes, then we have succeeded in finding a large $\epsilon$-min-degree subgraph. If instead the algorithm removes half the nodes in $X$ and the graph $A'$ is still not $\epsilon$-min-degree, then we claim that the remaining graph has density $\Ex[A'] \geq (1 + \frac{\epsilon}{2}) \Ex[A]$---i.e., we have found a density increment.
For more details, see \cref{lem:min-degree}.

\paragraph{Enforcing Regularity}
The more challenging task is to ensure that a graph is regular (or that we can alternatively find a density increment). Following the terminology from~\cite{KelleyLM23} (which in turn originates from~\cite{KelleyM23}), we refer to this step as ``sifting''. Specifically, we rely on the following theorem that we will later apply with $k = 2$ and $\ell = d$:

\begin{restatable}[Sifting]{theorem}{thmsifting} \label{thm:sifting}
Let $A \in \set{0, 1}^{X \times Y}$, let $\epsilon > 0$ and $k, \ell \geq 1$. There is an algorithm $\Sift(X, Y, A, \epsilon, k, \ell)$ that returns either
\begin{enumerate}
    \item ``regular'', in which case $A$ is $(\epsilon, k, \ell)$-regular, or
    \item sets $X' \subseteq X, Y' \subseteq Y$ with $|X'| \, |Y'| \geq \frac{\epsilon}{16} \cdot \Ex[A]^{k \ell} \cdot |X| \, |Y|$ and $\Ex[A[X', Y']] \geq (1 + \frac{\epsilon}{2}) \Ex[A]$.
\end{enumerate}
The algorithm is deterministic and runs in time~\smash{$n^{2} \cdot (\epsilon \Ex[A] / k)^{-\Order(k \ell(k + \ell))}$} (where $n = |X| + |Y|$).
\end{restatable}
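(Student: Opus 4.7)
The plan is to test $(\epsilon,k,\ell)$-regularity by estimating the grid norm $\|A\|_{U(k,\ell)}$ directly and, on failure, to extract the density-increment rectangle via a density-increment argument in the spirit of~\cite{KelleyLM23,KelleyM23}. Let $\alpha = \Ex[A]$. The central identity
\begin{equation*}
\|A\|_{U(k,\ell)}^{k\ell} \;=\; \Ex_{\substack{x_1,\ldots,x_k\in X\\y_1,\ldots,y_\ell\in Y}}\prod_{\substack{i\in[k]\\j\in[\ell]}}A(x_i,y_j) \;=\; \Ex_{x_1,\ldots,x_k}\bigl[\,p(x_1,\ldots,x_k)^{\ell}\,\bigr]
\end{equation*}
expresses the $(k\ell)$-th power of the norm as the normalized count of $(k,\ell)$-biclique tuples in $A$, where $p(\vec x) := |N_Y(\vec x)|/|Y|$ is the normalized common $Y$-neighborhood of $\vec x$.

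First I would approximate $\|A\|_{U(k,\ell)}^{k\ell}$ to additive accuracy $\tfrac{\epsilon}{100}\alpha^{k\ell}$ by sampling random $(k+\ell)$-tuples and averaging the indicator $\prod_{i,j}A(x_i,y_j)$. A Chernoff-Hoeffding bound yields that $M=(\epsilon\alpha/k)^{-O(k\ell(k+\ell))}$ samples suffice; derandomization is performed via a suitable deterministic sample space. Each sample is evaluated in $O(k\ell)$ time after a single $O(n^2)$ preprocessing pass, giving total runtime $n^2\cdot(\epsilon\alpha/k)^{-O(k\ell(k+\ell))}$. If the estimate is at most $(1+\tfrac{\epsilon}{2})\alpha$, we return ``regular''.

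Otherwise $\Ex_{\vec x}[p(\vec x)^\ell] > (1+\epsilon)^{k\ell}\alpha^{k\ell}$, and we must produce the rectangle. A Paley-Zygmund-type calculation (using $0\leq p\leq 1$) shows that an $\Omega(\epsilon k\ell)$-fraction of $k$-tuples $\vec x$ already satisfy $p(\vec x)\geq(1+\tfrac{\epsilon}{4})^k\alpha^k$, so at least one such good tuple is located during the sampling pass. For any such $\vec x$, the raw candidate $X'=\set{x_1,\ldots,x_k}$, $Y'=N_Y(\vec x)$ is an all-ones rectangle, so $\Ex[A[X',Y']] = 1\geq(1+\tfrac{\epsilon}{2})\alpha$; it already meets the size bound $|X'||Y'|\geq\tfrac{\epsilon}{16}\alpha^{k\ell}|X||Y|$ in the sparse regime where $\alpha$ is small. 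In the complementary dense regime the raw rectangle is too thin on the $X$-side, and I would enlarge it by one ``popularity'' round---adjoining every $x\in X$ with $|N_Y(x)\cap Y'|\geq(1+\tfrac{\epsilon}{2})\alpha|Y'|$ to $X'$, and symmetrically enlarging $Y'$---which preserves the density lower bound by construction while growing the rectangle enough to satisfy the size requirement.

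The hard part will be (i)~matching the exact constants $\tfrac{\epsilon}{16}$ and $\tfrac{\epsilon}{2}$ together with the exponent $\alpha^{k\ell}$ uniformly across all densities $\alpha\in(0,1]$ without losing any factors in the enlargement step, and (ii)~realizing the sampling step deterministically within the $n^2$-prefactor budget rather than the naive $n^{k+\ell}$ enumeration. Both points are handled by the sifting machinery of~\cite{KelleyLM23,KelleyM23}, whose density-increment analysis I would directly adapt to the bipartite matrix setting to obtain the stated numerical parameters.
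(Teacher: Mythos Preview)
Your approach diverges from the paper's and contains a genuine gap in the rectangle-extraction step.

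The paper does \emph{not} locate a single good $k$-tuple and then attempt to enlarge it. Instead it proves a recursive lemma (Lemma~\ref{lem:sifting-recursion}): whenever $\|A\|_{U(k,\ell)} \geq (1+\epsilon)\delta$, either the set $X'=\{x:\deg_A(x)\geq\delta\}$ already has size at least $\tfrac{\epsilon}{2}\delta^{k\ell}|X|$ (in which case one returns $X',Y$), or there exists a single $x$ with $\deg_A(x)\geq\delta^k$ such that the restricted matrix $A_x=A[X,Y_x]$ satisfies $\|A_x\|_{U(k-1,\ell)}\geq(1+\epsilon')\delta$. The algorithm approximates all the norms $\|A_x\|_{U(k-1,\ell)}$ deterministically, picks the best $x$, and recurses with $k\gets k-1$; after at most $k$ rounds it reaches $k=1$, where the high-degree case is forced. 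The returned rectangle is always of the form $X'\times Y$ for the current (shrunken) $Y$, with $X'$ a set of many high-degree rows---never a thin set of $k$ hand-picked rows.

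Your ``popularity round'' does not work as stated. Having found $\vec x$ with large $|N_Y(\vec x)|$, you set $Y'=N_Y(\vec x)$ and adjoin every $x$ with $\deg_{A[X,Y']}(x)\geq(1+\tfrac{\epsilon}{2})\alpha$. But nothing in the hypothesis $\|A\|_{U(k,\ell)}>(1+\epsilon)\alpha$ forces this enlarged $X'$ to contain $\geq\tfrac{\epsilon}{16}\alpha^{k\ell}|X|$ vertices: the large grid norm is a statement about averages over \emph{$k$-tuples}, and the common neighborhood of one particular good tuple need not be a set into which many individual rows have boosted density. (Your Paley--Zygmund line also drops the crucial $\alpha^{k\ell}$ factor: inverse Markov gives a fraction of order $\epsilon k\ell\cdot\alpha^{k\ell}$ of good tuples, not $\Omega(\epsilon k\ell)$.) The recursive reduction $k\to k-1$ is exactly what sidesteps this: at each step the algorithm either harvests a genuinely large high-degree set in the current $Y$, or restricts $Y$ while carrying the full irregularity forward with one fewer row-coordinate, so that at the base case $k=1$ the high-degree set is guaranteed to be large.

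Finally, your derandomization is only asserted. The paper makes it concrete via oblivious samplers (Lemmas~\ref{lem:oblivious-sampler}--\ref{lem:approx-grid-norm}), showing that $\|A_x\|_{U(k-1,\ell)}$ can be approximated deterministically for all $x$ simultaneously in time $n^2\cdot\alpha^{-O(k\ell(k+\ell))}$; this is what produces the stated running-time exponent.
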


The existential claim of \cref{thm:sifting} was already established by Kelley, Lovett and Meka~\cite{KelleyLM23} (up to insignificant changes in the parameters) by a simple ``one-shot'' proof. While it is possible to turn their ideas into a randomized sifting algorithm, we follow a different proof that can ultimately be turned into a \emph{deterministic} algorithm. The rough idea is to prove that whenever $A$ is not $(\epsilon, k, \ell)$-regular, then either many nodes $x \in X$ have exceptionally high degree $\deg_A(x) \geq (1 + \frac{\epsilon}{2}) \Ex[A]$ (in which case we can return the set $X'$ of such high-degree nodes and $Y' = Y$), or we can find a large induced subgraph of $A$ that is $(\epsilon, k - 1, \ell)$-irregular (see \cref{lem:sifting-recursion}). In the latter case, we recurse on that subgraph, so after at most $k$ recursive calls we find a density increment.

In order to detect this exceptionally irregular subgraph, it is necessary to obtain an accurate estimate of its grid norm. To this end, we prove that any grid norm $\norm{A}_{U(k, \ell)}$ can be approximated up to some additive error $\alpha > 0$ in time \smash{$n^2 \cdot \alpha^{-\Order(k \ell(k + \ell))}$} by a deterministic algorithm (\cref{lem:grid-norm-sampling}). Here we crucially build on the technology of \emph{oblivious samplers}. We defer further details to \cref{sec:sifting}.


\subsection{Decomposing \texorpdfstring{\boldmath$A$}{A}} \label{sec:overview:sec:A-decomposition}
As a warm-up and building block towards \cref{thm:2-path-decomposition}, let us first focus on decomposing a single bipartite graph $A \in \set{0, 1}^{X \times Y}$. Specifically, we establish the following decomposition with four analogous properties to \cref{thm:2-path-decomposition}.

\begin{restatable}[$A$-Decomposition]{theorem}{thmedgedecomposition} \label{thm:edge-decomposition}
Let $A \in \set{0, 1}^{X \times Y}$, let~\makebox{$\epsilon \in (0, 1)$} and $d \geq 1$. There is an algorithm $\ADecomposition(X, Y, A, \epsilon, d)$ computing tuples $\set{(X_\ell, Y_\ell, A_\ell)}_{\ell=1}^L$ with~\makebox{$X_\ell \subseteq X$},~\makebox{$Y_\ell \subseteq Y$}, and~$A_\ell \in \set{0, 1}^{X_\ell \times Y_\ell}$ such that:
\begin{enumerate}
    \item $A = \sum_{\ell=1}^L A_\ell$.
    \item For all $\ell \in [L]$:
    \begin{enumerate}[label=(\roman*)]
        \item $\Ex[A_\ell] \leq 2^{-d}$, or
        \item $A_\ell$ is $(\epsilon, 2, d)$-regular and $\epsilon$-min-degree.
    \end{enumerate}
    \item $\sum_{\ell=1}^L |X_\ell| \, |Y_\ell| \leq (d + 2) \cdot |X| \, |Y|$.
    \item $L \leq \exp(d^3 \poly(\epsilon^{-1}))$ and $\min_{\ell=1}^L |X_\ell| \, |Y_\ell| \geq \exp(-d^3 \poly(\epsilon^{-1})) \cdot |X| \, |Y|$.
\end{enumerate}
The algorithm is deterministic and runs in time $n^2 \cdot \exp(d^3 \poly(\epsilon^{-1}))$ (where $n = |X| + |Y|$).
\end{restatable}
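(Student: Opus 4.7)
The plan is to build the decomposition via a recursive partitioning of $X \times Y$ into sub-rectangles. Each recursive call on a sub-rectangle $R = X' \times Y'$ carrying the matrix $A|_R$ proceeds in three steps. First, if $\Ex[A|_R] \leq 2^{-d}$, emit $(X', Y', A|_R)$ as a sparse piece and return. Second, invoke $\Sift(X', Y', A|_R, \epsilon, 2, d)$: if it certifies regularity, continue to the third step; otherwise it returns a density-increment sub-rectangle $(X'', Y'')$ with $\Ex[A[X'', Y'']] \geq (1+\epsilon/2)\Ex[A|_R]$ and $|X''||Y''| \geq \tfrac{\epsilon}{16}\Ex[A|_R]^{2d}|X'||Y'|$, whereupon partition $X' \times Y'$ into the four sub-rectangles $X'' \times Y''$, $X'' \times (Y' \setminus Y'')$, $(X' \setminus X'') \times Y''$, and $(X' \setminus X'') \times (Y' \setminus Y'')$, and recurse on each. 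Third, apply the min-degree enforcer: if $A|_R$ is $\epsilon$-min-degree, emit $(X', Y', A|_R)$ as a regular + min-degree piece; otherwise split $X'$ into the set of high-degree rows and its complement, yielding two sub-rectangles, and recurse on both.

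Properties~(1) and~(2) hold by construction: every edge of $A$ lands in exactly one leaf, and each leaf is certified either sparse or regular + min-degree. Since each split strictly partitions its parent rectangle, the total piece size is simply $\sum_\ell |X_\ell||Y_\ell| = |X||Y|$, which comfortably satisfies the $(d+2)|X||Y|$ bound of Property~(3).

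The main technical task is Property~(4): bounding both the number of leaves $L$ and the minimum piece size. I would use a monotone potential of the form
\[
    \Phi = \sum_{R \text{ active}} \frac{|R|}{|X||Y|} \cdot \Ex[A|_R]^2,
\]
which is non-decreasing across every split by convexity of $x \mapsto x^2$ combined with the edge-conservation identity $\sum_{\text{children } R'} |R'| \cdot \Ex[A|_{R'}] = |R| \cdot \Ex[A|_R]$, and which is bounded in $[0, 1]$. At a sift split, the dense child alone increases $\Phi$ by at least $\Omega(\epsilon \cdot \Ex[A|_R]^{2d+2} \cdot |R|/|X||Y|) \geq \Omega(\epsilon \cdot 2^{-\Order(d^2)} \cdot |R|/|X||Y|)$, using $\Ex[A|_R] \geq 2^{-d}$ from the failed sparsity test and the size lower bound from \cref{thm:sifting}; an analogous bound handles min-degree splits. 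Summing the bound over all splits gives $\sum_{\text{splits}} |R|/|X||Y| \leq 2^{\Order(d^2)}/\epsilon$. Combined with the Sifting size lower bound cascaded through the chain of splits leading to each leaf, this yields a minimum piece size of $\exp(-d^3\poly(\epsilon^{-1}))|X||Y|$ and, by dividing the total volume by this minimum, the bound $L \leq \exp(d^3\poly(\epsilon^{-1}))$.

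The main obstacle I anticipate is the accounting for the ``rest'' children of sift splits: $\Sift$ only guarantees density and size for \emph{one} of the four sub-rectangles, so the other three could sit stubbornly in the bad range $\Ex[A|_R] \in (2^{-d},1)$. The potential-function argument above sidesteps this by charging progress only to the dense child, but care is needed in the cascade analysis (tracking how the Sifting size lower bound compounds through a sequence of density-increment refinements) to rigorously establish the minimum-leaf-size bound. The running time then follows by multiplying the number of recursive calls by the per-call cost---dominated by $\Sift$, which runs in time $n^2 \cdot (\epsilon \Ex[A|_R]/2)^{-\Order(d^2)} \leq n^2 \cdot \exp(\Order(d^3)\poly(\epsilon^{-1}))$ when $\Ex[A|_R] \geq 2^{-d}$---yielding the claimed overall bound of $n^2 \cdot \exp(d^3\poly(\epsilon^{-1}))$.
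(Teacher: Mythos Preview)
Your approach of partitioning $X \times Y$ into sub-rectangles is genuinely different from the paper's, and it does make Property~3 trivial ($\sum_\ell |X_\ell||Y_\ell| = |X||Y|$). However, there is a real gap in establishing Property~4, precisely at the obstacle you flagged.

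The potential argument only yields $\sum_{\text{splits}} |R| \leq 2^{\Order(d^2)}\poly(\epsilon^{-1})\cdot|X||Y|$; it does \emph{not} bound the number of splits, because $|R|$ can be small. Your ``cascade'' reasoning for the minimum leaf size implicitly assumes two things along each root-to-leaf path: that each step shrinks the rectangle by at most the factor from \cref{thm:sifting}, and that the path length is $\Order(d/\epsilon)$ via density increments. But both guarantees apply only to the \emph{dense} child of a sift split. For the three ``rest'' children, \cref{thm:sifting} says nothing about their size or density. Concretely, a rest child can have size $(1 - \Theta(\epsilon\,2^{-2d^2}))|R|$ and density still in $(2^{-d},1)$ and still irregular, so you sift again; iterating, the rectangle shrinks only by a $(1-\Theta(\epsilon\,2^{-2d^2}))$ factor per step while the density never increases, giving a path of length $\Theta(\epsilon^{-1}2^{2d^2}\log n)$ and hence $L$ and minimum leaf size depending polynomially on $n$ rather than $\exp(\pm d^3\poly(\epsilon^{-1}))$. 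The potential bound is perfectly consistent with this: the contributions along such a path form a geometric series summing to $2^{\Order(d^2)}$.

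The paper sidesteps this by \emph{not} partitioning the rectangle. It first packages the density-increment recursion into a subroutine (\cref{lem:good-rect}) that returns a single good rectangle $X^*\times Y^*$ with $A[X^*,Y^*]$ already $(\epsilon,2,d)$-regular and $\epsilon$-min-degree, $\Ex[A[X^*,Y^*]]\geq\Ex[A]$, and $|X^*||Y^*|\geq\exp(-d^3\poly(\epsilon^{-1}))|X||Y|$. The decomposition then repeatedly emits $(X^*,Y^*,A[X^*,Y^*])$ and \emph{deletes those edges} from $A$ while keeping $X,Y$ fixed. Because each removed piece has density $\geq\Ex[A]$ and relative size $\geq\exp(-d^3\poly(\epsilon^{-1}))$, the global density of $A$ drops by $\exp(-d^3\poly(\epsilon^{-1}))$ per step, so after $\exp(d^3\poly(\epsilon^{-1}))$ steps it falls below $2^{-d}$ and one final sparse piece finishes. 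This edge-peeling (density decrement on the whole graph) is what gives a clean bound on $L$ with no rest-children to worry about; the price is that pieces overlap in vertices, which is why Property~3 becomes nontrivial and requires the $\log$-potential argument sketched in the overview.
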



\cref{thm:edge-decomposition} in itself is already an interesting regularity decomposition, which we believe will likely find further applications in the future. The proof of the theorem is along the following lines. First of all, if $\Ex[A] \leq 2^{-d}$, then we simply return the trivial decomposition $\set{(X, Y, A)}$. So assume from now on that $\Ex[A] \geq 2^{-d}$.

Consider the following subtask (see \cref{lem:good-rect}): The goal is to find $X^* \subseteq X$ and $Y^* \subseteq Y$ such that the induced subgraph $A[X^*, Y^*]$ is $(\epsilon, 2, d)$-regular and $\epsilon$-min-degree; we call $X^* \times Y^*$ a \emph{good rectangle}. We can find a good rectangle using the density increment technique. First, make half of~$X$ satisfy the min-degree condition. Then, apply \cref{thm:sifting} to certify that this remaining graph is $(\epsilon, 2, d)$-regular. If both steps succeed we have successfully identified a good rectangle (namely, the entire remaining graph). Otherwise, if either step fails and instead returns a large subgraph with density at least~\makebox{$(1 + \frac{\epsilon}{2}) \Ex[A]$}, we simply \emph{recurse} on the denser subgraph to find a good rectangle. With each recursive call the density strictly increases, and thus this process eventually terminates. In fact, the recursion depth is bounded by $\Order(d / \epsilon)$ given that the initial density is $\Ex[A] \geq 2^{-d}$. Since with each recursive call we reduce the number of vertices to a~$(\epsilon \Ex[A])^{-\Order(d)} = \exp(-d^2 \poly(\epsilon^{-1}))$-fraction (by \cref{thm:sifting}), the returned good rectangle covers at least a $\exp(-d^3 \poly(\epsilon^{-1}))$-fraction of the original graph.

Coming back to \cref{thm:edge-decomposition}, we can compute the decomposition using density \emph{decrements}. Namely, we repeatedly find good rectangles $X^* \times Y^*$ as in the previous paragraph, take the subgraph~$(X^*, Y^*, A[X^*, Y^*])$ as one part in the decomposition, and then remove all edges in the rectangle $X^* \times Y^*$ from $A$. Eventually $A$ becomes $2^{-d}$-sparse and at this point we return the remaining trivial decomposition $\set{(X, Y, A)}$. In each step we remove $\exp(-d^3 \poly(\epsilon^{-1})) \cdot |X| \cdot |Y|$ edges from $A$, and therefore this process leads to at most $L \leq \exp(d^3 \poly(\epsilon^{-1}))$ many parts.

So far we have neglected Property~3, but it turns out that a closer inspection of this process indeed yields that $\sum_{\ell=1}^L |X_\ell| \, |Y_\ell| \leq \Order(d) \cdot |X| \, |Y|$. Proving this statement builds on the critical insight that, for any good rectangle that we remove, we always have $\Ex[A[X^*, Y^*]] \geq \Ex[A]$. Specifically, let $\overline A_\ell$ denote the remaining matrix $A$ before the $\ell$-th step of the algorithm, and let $L_1$ be the smallest index such that $\Ex[\overline A_{L_1}] \leq \frac{1}{2} \Ex[A]$. Then we can express
\begin{equation*}
    \Ex[A] = \sum_{\ell=1}^{L_1-1} \Ex[A_\ell] \cdot \frac{|X_\ell| \, |Y_\ell|}{|X| \, |Y|} + \Ex[\overline A_{L_1}] \geq \frac{\Ex[A]}{2} \cdot \sum_{\ell=1}^{L_1 - 1} \frac{|X_\ell| \, |Y_\ell|}{|X| \, |Y|},
\end{equation*}
using that $\Ex[A_\ell] \geq \Ex[\overline A_\ell] > \frac{1}{2} \Ex[A]$. It follows that $\sum_{\ell=1}^{L_1 - 1} |X_\ell| \, |Y_\ell| \leq 2 \cdot |X| \, |Y|$. We can apply the same argument to analyze the algorithm in \emph{phases}. That is, letting $L_i$ be the smallest step with~\smash{$\Ex[\overline A_{L_i}] \leq \frac{1}{2^i} \Ex[A]$}, we can similarly bound $\sum_{\ell=L_i}^{L_{i+1}-1} |X_\ell| \, |Y_\ell| \leq 2 \cdot |X| \, |Y|$ for each phase. After the $d$-th phase the process has reduced the density of the remaining graph to at most~\smash{$2^{-d}$}, and the process terminates. Therefore, all in all, we have $\sum_{\ell=1}^L |X_\ell| \, |Y_\ell| \leq 2d \cdot |X| \, |Y|$. (In the formal proof we obtain a slightly sharper bound.)

\subsection{Decomposing \texorpdfstring{\boldmath$AB$}{AB}} \label{sec:overview:sec:AB-decomposition}
We finally turn to the full decomposition from \cref{thm:2-path-decomposition}. The idea is to use the one-part decomposition developed in the previous subsection as a black-box to decompose $A$, and to decompose~$B$ via density increments/decrements. Unfortunately, the details of this step are much more intricate.

Let us first sketch an approach that will not work out as planned. In light of the previous subsection, the hope is that we can find a good rectangle $Y^* \times Z^*$ in $B$ along with a regularity decomposition $\set{(X_\ell, Y_\ell, A_\ell)}_{\ell=1}^L$ of $A[X, Y^*]$, such that additionally each subgraph $B[Y_\ell, Z^*]$ is $(\epsilon, 2, d)$-regular and $\epsilon$-min-degree (see \cref{lem:good-cube}). We loosely refer to $Y^*, Z^*, \set{(X_\ell, Y_\ell, A_\ell)}_{\ell=1}^L$ as a \emph{good cube}. If there was an algorithm to find good cubes, then we would easily obtain the desired decomposition: We repeatedly find a good cube $Y^*, Z^*, \set{(X_\ell, Y_\ell, A_\ell)}_{\ell=1}^L$, emit $\set{(X_\ell, Y_\ell, Z^*, A_\ell, B[Y_\ell, Z^*])}_{\ell=1}^L$ as parts in the decomposition and remove the edges in $Y^* \times Z^*$ from $B$.

However, we face serious problems trying to find a good cube. The natural idea is to use the density increment technique to find $Y^* \times Z^* \subseteq Y \times Z$ such that $B[Y^*, Z^*]^T$ is $(\epsilon, 2, d)$-regular and $\epsilon$-min-degree. Then we can apply \cref{thm:edge-decomposition} to decompose the matrix $A[X, Y^*]$ into pieces~$\set{(X_\ell, Y_\ell, A_\ell)}_{\ell=1}^L$. However, the subgraphs $B[Y_\ell, Z^*]$ are not necessarily $(\epsilon, 2, d)$-regular and $\epsilon$-min-degree. The first issue is fixable: Using the sifting algorithm we can test whether all subgraphs $B[Y_\ell, Z^*]$ are $(\epsilon, 2, d)$-regular---if any such subgraphs fails this test, then \cref{thm:sifting} instead returns a denser subgraph of $B[Y_\ell, Z^*]$. We thus recurse on that subgraph to find a good cube. The second issue, that $B[Y_\ell, Z^*]$ is not $\epsilon$-min-degree, is more serious. In contrast to the regularity condition we cannot enforce the min-degree condition for the whole graph $B[Y_\ell, Z^*]$, but only for a subgraph~\makebox{$B[Y_\ell, Z_\ell]$}, where $Z_\ell \subseteq Z^*$ is large.

Our solution to this issue is somewhat reminiscent to the divide-and-conquer approaches for Triangle Detection. As outlined before, we can compute a good cube $Y^*, Z^*, \set{(X_\ell, Y_\ell, Z_\ell, A_\ell)}$ such that each subgraph $B[Y_\ell, Z_\ell]$ is $(\epsilon, 2, d)$-regular and $\epsilon$-min-degree. By tweaking the parameters of the min-degree lemma, we can further achieve that $|Z_\ell| \geq (1 - \gamma) |Z^*|$ for some parameter $\gamma$ to be determined soon. As before, for each good cube we emit $\set{(X_\ell, Y_\ell, Z_\ell, A_\ell, B[Y_\ell, Z_\ell])}$ as one part of the decomposition, and then remove the edges in $Y^* \times Z^*$ from $B$ and repeat. However, we additionally recurse on all subinstances on the vertex parts $(X_\ell, Y_\ell, Z^* \setminus Z_\ell)$ to cover the edges missed in the previous parts. To control the cost caused by this additional layer of recursion we need to guarantee that
\begin{equation*}
    \sum_{\ell=1}^L |X_\ell| \, |Y_\ell| \, |Z^* \setminus Z_\ell| \leq \tfrac12 \cdot |X| \, |Y| \, |Z|,
\end{equation*}
say. And indeed, using that $\sum_{\ell=1}^L |X_\ell| \, |Y_\ell| \leq \poly(d) \cdot |X| \, |Y|$ and by setting $\gamma = \frac{1}{\poly(d)}$ small enough this can be achieved. The overhead nevertheless leads to a significant blow-up in the dependence on $d$, from $\exp(d^3)$ to $\exp(d^7)$. See \cref{sec:decompositions} for the details.

\subsection{Further Improvements?}
While we have successfully achieved quasi-polynomial savings combinatorially for BMM (and many other problems), \cref{conj:bmm} still stands, and the question remains whether we can do better. E.g., can we achieve savings of the form~\smash{$2^{\Theta(\sqrt{\log n})}$} rather than \smash{$2^{\Theta(\sqrt[7]{\log n})}$}, or possibly even truly polynomial savings? 

Over random matrices in which each entry is $1$ with probability $p$, we can solve BMM in~$\widetilde{O}(n^{2.5})$ time,\footnote{For $p \gg 1/\sqrt{n}$ the answer is the all-ones matrix with good probability, and otherwise the matrices are sparse.} which means that the general framework of worst-case to random-case reductions via regularity decompositions could go much further.
However, it is not clear whether the specific notion of grid regularity can go beyond \smash{$2^{\Theta(\sqrt{\log n})}$} savings, because the known lower bounds for Triangle Removal (à la Behrend's construction~\cite{Behrend46,Zhao23}) seem to apply as well.
We have focused on presenting our new technique in an easy and modular fashion rather than on optimizing the constant in the quasi-polynomial savings. It remains an interesting open question to fine-tune the parameters.
\section{Sifting} \label{sec:sifting}
In this section we describe the ``sifting'' algorithm that, given a graph $A$, either determines that~$A$ is regular or finds a subgraph of $A$ that is denser than average. Kelley, Lovett and Meka~\cite{KelleyLM23} have proved this statement via a non-algorithmic proof that can rather easily be turned into a randomized algorithm. Our approach here differs from that original version, as our more ambitious goal is to obtain a \emph{deterministic} sifting algorithm. We start with a simple inverse of Markov's inequality:

\begin{lemma}[Inverse of Markov's Inequality] \label{lem:inverse-markov}
Let $Z$ be a random variable that takes values in~$[0, 1]$. Then, for any $\alpha \in [0, 1]$,
\begin{equation*}
    \Pr[Z \geq \alpha] \geq \Ex[Z] - \alpha.
\end{equation*}
\end{lemma}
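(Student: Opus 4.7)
The plan is to prove this via a direct case split on the event $\{Z \geq \alpha\}$. I would start by writing $\Ex[Z]$ as the sum of its contributions on the two events $\{Z \geq \alpha\}$ and $\{Z < \alpha\}$, that is,
\begin{equation*}
    \Ex[Z] = \Ex[Z \cdot \mathbf{1}_{Z \geq \alpha}] + \Ex[Z \cdot \mathbf{1}_{Z < \alpha}].
\end{equation*}
Since $Z$ is supported in $[0,1]$, I can bound the first term by $\Pr[Z \geq \alpha]$ (using $Z \leq 1$ on that event), and the second term by $\alpha \cdot \Pr[Z < \alpha] \leq \alpha$ (using $Z < \alpha$ on that event, and $\Pr[Z < \alpha] \leq 1$). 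Combining these two bounds yields $\Ex[Z] \leq \Pr[Z \geq \alpha] + \alpha$, and rearranging gives the claim.

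There is no real obstacle here; the lemma is essentially a one-line manipulation. The only subtlety worth checking is that both bounds used (namely $Z \leq 1$ and $\Pr[Z < \alpha] \leq 1$) hold without any further assumption on the distribution of $Z$ beyond it being supported in $[0,1]$, and that the inequality is trivial when $\Ex[Z] \leq \alpha$ (in which case the right-hand side is non-positive and there is nothing to prove).
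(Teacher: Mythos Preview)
Your proof is correct and essentially matches the paper's approach: the paper compresses your case split into the single pointwise inequality $\mathbf{1}(Z \geq \alpha) \geq Z - \alpha$ and then takes expectations, which is exactly what your two bounds verify case by case.
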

\begin{proof}
Observe that $\One(Z \geq \alpha) \geq Z - \alpha$. Thus, $\Pr[Z \geq \alpha] = \Ex[\One(Z \geq \alpha)] \geq \Ex[Z] - \alpha$.
\end{proof}

The high-level idea behind the sifting algorithm is that we can either (1) find a denser subgraph by simply taking the high-degree nodes, or (2) recurse on a smaller subgraph with parameter~\makebox{$k - 1$}. This idea is recorded in the next lemma. Here and for the remainder of this section we write~\makebox{$Y_x = \set{y \in Y : A(x, y) = 1}$} and $A_x = A[X, Y_x]$.

\begin{lemma}[Recursive Sifting] \label{lem:sifting-recursion}
Let $A \in \set{0, 1}^{X \times Y}$, let $\delta, \epsilon > 0$ and $k, \ell \geq 1$ and assume that $\norm{A}_{U(k, \ell)} \geq (1 + \epsilon) \delta$. Then one of the following two cases applies:
\begin{enumerate}
\item $\abs{\set{x \in X : \deg_A(x) \geq \delta}} \geq \tfrac{\epsilon}{2} \cdot \delta^{k \ell} \cdot |X|$,
\item or $k > 1$ and there is some $x \in X$ such that:
\begin{itemize}
    \item $\deg_A(x) \geq \delta^k$, and
    \item $\norm{A_x}_{U(k-1, \ell)} \geq (1 + \epsilon) \delta$.
\end{itemize}
\end{enumerate}
\end{lemma}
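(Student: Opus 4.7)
The plan rests on the recursive identity
\[
\norm{A}_{U(k,\ell)}^{k\ell} \;=\; \Ex_{x \in X}\bigl[\deg_A(x)^\ell \cdot \norm{A_x}_{U(k-1,\ell)}^{(k-1)\ell}\bigr],
\]
obtained by singling out the last $x$-variable in the definition of the grid norm and observing that $\Ex_{y \in Y}\prod_{i=1}^k A(x_i, y) = \deg_A(x_k) \cdot \Ex_{y \in Y_{x_k}}\prod_{i<k}A(x_i, y)$; raising to the $\ell$-th power and averaging over $x_1,\ldots,x_{k-1}$ converts the inner factor into $\norm{A_{x_k}}_{U(k-1,\ell)}^{(k-1)\ell}$ (with the convention that the term vanishes when $\deg_A(x_k)=0$). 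This identity is essentially the only bridge between the hypothesis on $\norm{A}_{U(k,\ell)}$ and the degree and restricted-grid-norm quantities that cases~1 and~2 talk about.

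For the base case $k=1$, the identity collapses to $\norm{A}_{U(1,\ell)}^\ell = \Ex_x \deg_A(x)^\ell$, so I would simply apply the inverse of Markov's inequality to the random variable $\deg_A(x)^\ell \in [0,1]$, whose mean is at least $((1+\epsilon)\delta)^\ell$. This gives
\[
\Pr_x[\deg_A(x) \geq \delta] \;=\; \Pr_x[\deg_A(x)^\ell \geq \delta^\ell] \;\geq\; \bigl((1+\epsilon)^\ell - 1\bigr)\delta^\ell \;\geq\; \epsilon\delta^\ell \;=\; \epsilon\delta^{k\ell},
\]
which already certifies case~1 (with case~2 being vacuous here).

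For the inductive case $k \geq 2$, I would contrapositively assume case~2 fails, so that every $x$ with $\deg_A(x) \geq \delta^k$ satisfies $\norm{A_x}_{U(k-1,\ell)}^{(k-1)\ell} < ((1+\epsilon)\delta)^{(k-1)\ell}$, and derive case~1. Partition $X$ by the value of $\deg_A(x)$ into a \emph{low} range $[0,\delta^k)$, a \emph{medium} range $[\delta^k,\delta)$, and a \emph{high} range $[\delta,1]$. The integrand $\deg_A(x)^\ell \cdot \norm{A_x}_{U(k-1,\ell)}^{(k-1)\ell}$ is then bounded by $\delta^{k\ell}\cdot 1$ in the low range, by $\delta^\ell \cdot ((1+\epsilon)\delta)^{(k-1)\ell} = (1+\epsilon)^{(k-1)\ell}\delta^{k\ell}$ in the medium range (the only place the failure of case~2 is used), and by the trivial $1$ in the high range. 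Plugging these into the identity and using $\Pr_x[\text{low}] + \Pr_x[\text{mid}] \leq 1$ together with $(1+\epsilon)^{(k-1)\ell} \geq 1$ yields
\[
\norm{A}_{U(k,\ell)}^{k\ell} \;\leq\; (1+\epsilon)^{(k-1)\ell}\delta^{k\ell} + \Pr_x[\deg_A(x) \geq \delta].
\]
Comparing with the hypothesis $\norm{A}_{U(k,\ell)}^{k\ell} \geq (1+\epsilon)^{k\ell}\delta^{k\ell}$ and invoking Bernoulli's inequality $(1+\epsilon)^\ell - 1 \geq \epsilon$ delivers $\Pr_x[\deg_A(x) \geq \delta] \geq \epsilon\delta^{k\ell}$, which is strictly stronger than what case~1 requires. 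The one subtlety to be mindful of is that one must keep the \emph{crude} bound $1$ in the high-degree range rather than re-using the case-2-failure bound on $\norm{A_x}$ there: the latter route would leave an extra $(1+\epsilon)^{(k-1)\ell}$ factor multiplying the whole right-hand side, which the hypothesis cannot overcome when $\epsilon$ is small. This is the only place where the argument is non-obvious.
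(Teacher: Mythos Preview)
Your argument is correct and rests on the same recursive identity
\[
\norm{A}_{U(k,\ell)}^{k\ell} \;=\; \Ex_{x \in X}\bigl[\deg_A(x)^\ell \cdot \norm{A_x}_{U(k-1,\ell)}^{(k-1)\ell}\bigr]
\]
that the paper uses, and your $k=1$ case coincides with the paper's. For $k>1$, however, you and the paper take contrapositive directions. The paper assumes case~1 fails and then shows, by inverse Markov applied to $\deg_A(x)^\ell\cdot\norm{A_x}_{U(k-1,\ell)}^{(k-1)\ell}$ together with a union bound against the event $\deg_A(x)>\delta$, that a uniformly random $x$ satisfies both $\deg_A(x)\leq\delta$ and $\deg_A(x)\cdot\norm{A_x}_{U(k-1,\ell)}^{k-1}\geq(1+\epsilon)^{k-1}\delta^k$ with probability at least $\tfrac{\epsilon}{2}\delta^{k\ell}$; combining the two gives case~2. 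You instead assume case~2 fails and show case~1 by a clean three-range partition of the degrees, bounding the integrand separately in each range. Your route is slightly more elementary (no probabilistic union-bound bookkeeping, and you even recover the stronger constant $\epsilon$ rather than $\tfrac{\epsilon}{2}$), while the paper's route additionally yields a quantitative lower bound on the \emph{fraction} of $x$'s witnessing case~2---though, as the paper's deterministic algorithm ultimately selects $x$ by maximizing an approximation to $\norm{A_x}_{U(k-1,\ell)}$, that quantitative bound is not actually needed downstream.
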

\begin{proof}
First consider the case $k = 1$. We prove that case 1 applies by sampling $x \in X$ uniformly at random, and showing that with probability at least $\epsilon \cdot \delta^\ell$ this choice satisfies $\deg_A(x) \geq \delta$. Indeed, using the inverse Markov inequality:
\begin{align*}
    \Pr_{x \in X}\brackets*{\deg_A(x)^\ell \geq \delta^\ell} & \geq \Ex_{x \in X} \deg_A(x)^\ell - \delta^\ell = \Ex_{x \in X} \parens*{\Ex_{y \in Y} A(x, y)}^\ell - \delta^\ell \\&= \norm{A}_{U(1, \ell)}^{\ell} - \delta^\ell \geq (1 + \epsilon)^\ell \delta^\ell - \delta^\ell \geq \epsilon \cdot \delta^\ell.
\end{align*}

Next, let $k > 1$ and suppose that case 1 does not hold. We prove that selecting a uniformly random element~\makebox{$x \in X$} satisfies case~2 with positive probability. In fact, we prove that a uniformly random element~\makebox{$x \in X$} satisfies the following two stronger properties with positive probability:
\begin{enumerate}[label=(\roman*)]
    \item $\deg_A(x) \leq \delta$,
    \item $\deg_A(x) \cdot \norm{A_x}_{U(k-1, \ell)}^{k-1} \geq (1 + \epsilon)^{k-1} \delta^k$.
\end{enumerate}
Clearly, (i) fails with probability at most $\frac{\epsilon}{2} \cdot \delta^{k \ell}$ (by the assumption that case 1 of the lemma statement does not hold). Considering property~(ii), we first bound the following expectation:
\begin{align*}
    &\Ex_{x \in X} \deg_A(x)^\ell \cdot \norm{A_x}_{U(k-1, \ell)}^{(k-1)\ell} \\
    &\qquad= \Ex_{x \in X} \Ex_{x_2, \dots, x_k \in X} \parens*{\deg_A(x) \cdot \Ex_{y \in Y_x} \prod_{i=2}^k A(x_i, y)}^\ell
\intertext{Recall that $Y_x$ is the set of neighbors of $x$. Hence, for any function $f$ we can rewrite the expectation~\makebox{$\deg_A(x) \cdot \Ex_{y \in Y_x} f(y)$} as $\Ex_{y \in Y} f(y) \cdot A(x, y)$, and thus}
    &\qquad= \Ex_{x_1, \dots, x_k \in X} \parens*{\Ex_{y \in Y} \prod_{i \in [k]} A(x_i, y)}^\ell \\[1ex]
    &\qquad= \Ex_{\substack{x_1, \dots, x_k \in X\\y_1, \dots, y_\ell \in Y}} \prod_{\substack{i \in [k]\\j \in [\ell]}} A(x_i, y_j) \\[1ex]
    &\qquad= \norm{A}_{U(k, \ell)}^{k \ell}.
\end{align*}
Therefore, by the inverse Markov inequality, for a uniformly random $x \in X$ property~(ii) holds with probability at least
\begin{align*}
    &\Pr_{x \in X} \brackets*{\deg_A(x) \cdot \norm{A_x}_{U(k-1, \ell)}^{k-1} \geq (1 + \epsilon)^{k-1} \delta^k} \\
    &\qquad= \Pr_{x \in X} \brackets*{\deg_A(x)^\ell \cdot \norm{A_x}_{U(k-1, \ell)}^{(k-1)\ell} \geq (1 + \epsilon)^{(k-1)\ell} \delta^{k \ell}} \\
    &\qquad\geq \Ex_{x \in X} \deg_A(x)^\ell \cdot \norm{A_x}^{(k-1)\ell}_{U(k-1, \ell)} - (1 + \epsilon)^{(k-1)\ell} \delta^{k \ell} \vphantom{\Ex_X\Big[}\\
    &\qquad\geq \norm{A}_{U(k, \ell)}^{k \ell} - (1 + \epsilon)^{(k-1)\ell} \delta^{k \ell}. \vphantom{\Ex_X\Big[}\\
    &\qquad\geq (1 + \epsilon)^{k \ell} \delta^{k \ell} - (1 + \epsilon)^{(k-1)\ell} \delta^{k \ell} \vphantom{\Ex_X\Big[}\\
    &\qquad\geq \epsilon \cdot \delta^{k \ell}. \vphantom{\Ex_X\Big[}
\end{align*}
By a union bound, both properties~(i) and~(ii) hold simultaneously with positive probability at least~\smash{$\frac{\epsilon}{2} \cdot \Ex[A]^{k \ell}$}.

Finally, consider an element $x$ satisfying (i) and (ii); we show that $x$ also satisfies the two conditions from the lemma statement. Since \smash{$\delta \cdot \norm{A_x}_{U(k-1, \ell)}^{k-1} \geq \deg_A(x) \cdot \norm{A_x}_{U(k-1, \ell)}^{k-1} \geq (1 + \epsilon)^{k-1} \delta^k$}, it follows that indeed $\norm{A_x}_{U(k-1, \ell)} \geq (1 + \epsilon) \delta$. Moreover, using the trivial bound $\norm{A_x}_{U(k-1, \ell)} \leq 1$, we conclude that~\smash{$\deg_A(x) \geq \delta^k$}.
\end{proof}

For the sifting algorithm we also need the following lemma about approximating $\norm{\,\cdot\,}_{U(k, \ell)}$. We postpone the deterministic proof of \cref{lem:approx-grid-norm} to \cref{sec:sifting:sec:approx-grid-norm}, and encourage the reader to instead think of \cref{lem:approx-grid-norm} as the straightforward randomized algorithm (that subsamples $X^k \times Y^\ell$ to approximately count the number of $(k, \ell)$-bicliques).

\begin{lemma}[Deterministic Regularity Approximation] \label{lem:approx-grid-norm}
Let $A \in \set{0, 1}^{X \times Y}$, let $\alpha > 0$ and~\makebox{$k, \ell \geq 1$}. There is a deterministic algorithm that computes, for all~\makebox{$x \in X$}, an approximation $v_x$ satisfying that~\smash{$v_x = \norm{A_x}_{U(k, \ell)} \pm (\alpha / \deg_A(x)^{\frac{1}{k}})$}, and runs in time~\smash{$n^2 \cdot \alpha^{-\Order(k \ell (k + \ell))}$} (where $n = |X| + |Y|$).
\end{lemma}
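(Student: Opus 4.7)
The plan is to reduce the computation of $\norm{A_x}_{U(k,\ell)}$ to approximating an average over the product space $X^k \times Y^\ell$, and then to carry out this approximation deterministically using an explicit oblivious sampler that works simultaneously for all $x \in X$ by a union-bound argument.

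\emph{Rewriting the target.} Expanding the definition of the grid norm on $A_x = A[X,Y_x]$ and using the identity $\Ex_{y \in Y_x} f(y) = \deg_A(x)^{-1}\cdot \Ex_{y \in Y} A(x,y)\,f(y)$, one obtains
\begin{equation*}
\deg_A(x)^{\ell} \cdot \norm{A_x}_{U(k,\ell)}^{k\ell} \;=\; \Ex_{\substack{x_1,\dots,x_k \in X\\ y_1,\dots,y_\ell \in Y}} \;\prod_{j \in [\ell]} A(x,y_j)\cdot\prod_{\substack{i\in[k]\\ j\in[\ell]}} A(x_i,y_j),
\end{equation*}
which we denote $S(x)$. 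The task therefore reduces to approximating $S(x)$ to additive error $\alpha^{k\ell}$ for every $x \in X$; the concavity inequality $|a^{1/(k\ell)} - b^{1/(k\ell)}| \le |a-b|^{1/(k\ell)}$ for $a,b \ge 0$ then converts such an error on the $(k\ell)$-th power into the desired error $\alpha/\deg_A(x)^{1/k}$ on the grid norm itself, after dividing by $\deg_A(x)^{\ell}$ and taking the $(k\ell)$-th root.

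\emph{Approximation via an oblivious sampler.} I would invoke an explicit oblivious sampler producing a multiset $T \subseteq X^k \times Y^\ell$ such that for every function $f : X^k \times Y^\ell \to [0,1]$, the estimate $\Ex_T f$ agrees with $\Ex f$ within $\alpha^{k\ell}$ except on at most a $1/(2|X|)$ fraction of the sampler's seeds. Applying this guarantee to each of the $|X|$ test functions $f_x(\bar x,\bar y) = \prod_j A(x,y_j)\cdot \prod_{i,j}A(x_i,y_j)$ and union-bounding, at least one explicit seed yields a single sample set $T$ that works for all $x$ simultaneously; such a seed can be found deterministically by enumerating the seed space. The algorithm then precomputes $c(\bar x,\bar y) = \prod_{i,j} A(x_i, y_j)$ for each $(\bar x,\bar y) \in T$, computes $\hat S(x) = |T|^{-1} \sum_{(\bar x,\bar y) \in T} c(\bar x,\bar y)\cdot \prod_j A(x,y_j)$ for each $x \in X$ in time $\Order(|T|\cdot\ell)$ per $x$, and outputs $v_x = (\hat S(x)/\deg_A(x)^{\ell})^{1/(k\ell)}$.

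\emph{Main obstacle.} The technical heart is the construction and analysis of the explicit oblivious sampler with sample size $|T| = \alpha^{-\Order(k\ell(k+\ell))}$ and the right failure probability $1/\poly(n)$. Plain random sampling would need only $\Order(\log(|X|)/\alpha^{2k\ell})$ samples, but derandomizing this (via, e.g., expander walks, $\epsilon$-biased spaces, or the Goldreich--Wigderson line of work on oblivious samplers) while keeping the seed space small enough to enumerate in the claimed budget is the source of the exponent $k\ell(k+\ell)$, arising from the interplay between the error tolerance $\alpha^{k\ell}$, the alphabet sizes, and the dimension $k+\ell$ of the sample space. The remaining bookkeeping --- verifying that the total running time $\Order(n\cdot|T|\cdot\ell)$ fits within $n^2 \cdot \alpha^{-\Order(k\ell(k+\ell))}$ and that the concavity step gives the claimed error bound --- is routine.
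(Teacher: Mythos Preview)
Your rewriting step and the concavity-based error conversion are both correct and match the paper. The gap is in the sampling step.

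First, the clause ``such a seed can be found deterministically by enumerating the seed space'' is not an algorithm as stated: to certify that a given seed is good for all $x$, you would have to compare the sampled estimate against the true $S(x)$, which is precisely the quantity you are trying to compute. The standard fix is not to search for a good seed but to \emph{average} over all seeds (the at-most-$\delta$ fraction of bad seeds contributes at most $\delta$ to the error since the function is $[0,1]$-valued), and indeed this is what the paper does.

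Second, and more seriously, any oblivious sampler for the product domain $X^k \times Y^\ell$ must have seed length at least $\log(|X|^k|Y|^\ell) = \Theta((k+\ell)\log n)$, so the seed space has size $\Omega(n^{k+\ell})$. Neither enumerating seeds nor averaging over them fits inside the $n^2$ budget once $k+\ell \ge 3$. Your ``Main obstacle'' paragraph gestures at this difficulty but does not resolve it, and it cannot be resolved by choosing a cleverer sampler on the product space.

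The paper sidesteps this by \emph{not} sampling the product space. Instead it takes two separate oblivious samplers, $\mathcal S$ on $X$ and $\mathcal T$ on $Y$, each with seed space $n\cdot\poly(\epsilon^{-1},\delta^{-1})$. A short structural lemma (\cref{lem:grid-norm-sampling}) shows that $\norm{A}_{U(k,\ell)}^{k\ell}$ is well approximated by $\Ex_{S\in\mathcal S,\,T\in\mathcal T}\norm{A[S,T]}_{U(k,\ell)}^{k\ell}$: fix $y_1,\dots,y_\ell$, apply the sampler guarantee to the inner expectation over $X$ (raised to the $k$-th power via \cref{lem:bounds-add-approx}), then do the same for $Y$. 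The $n^2$ in the running time now arises as $|\mathcal S|\cdot|\mathcal T|$ (or $|X|\cdot|\mathcal T|$), while the $\alpha^{-\Order(k\ell(k+\ell))}$ factor comes from enumerating tuples in $S^k \times T^\ell$ with $|S|,|T|\le\poly(\epsilon^{-1},\delta^{-1})$ and $\epsilon,\delta = \alpha^{\Order(k\ell)}$. This factorization of the sampling, exploiting that the grid norm splits as iterated averages over $X$ and $Y$ separately, is the idea your proposal is missing.
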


We are ready to prove \cref{thm:sifting}. For convenience, we restate the statement here.

\begin{algorithm}[t]
\caption{Implements the algorithm from \cref{thm:sifting}.} \label{alg:sifting}
\begin{algorithmic}[1]
\Procedure{\Siftp}{$X, Y, A, \delta, \epsilon, k, \ell$}
    \State Let $X' \gets \set{x \in X : \deg_A(x) \geq \delta}$ \label{alg:sifting:line:high-deg}
    \If{$|X'| \geq \frac{\epsilon}{2} \cdot \delta^{k \ell} \cdot |X|$} \label{alg:sifting:line:high-deg-cond}
        \State\Return $X', Y$ \label{alg:sifting:line:high-deg-return}
    \EndIf
    \If{$k = 1$} \label{alg:sifting:line:k=1}
        \State\Return ``regular'' \label{alg:sifting:line:k=1-return}
    \Else \label{alg:sifting:line:k>1}
        \State Compute approximations $v_x$ of $\norm{A_x}_{U(k-1, \ell)}$ by \cref{lem:approx-grid-norm} with parameter \smash{$\alpha = \frac{\epsilon\delta^2}{2k^2}$} \label{alg:sifting:line:approx-grid-norm}
        \State Select $x \in X$ with \smash{$\deg_A(x) \geq \delta^k$} maximizing $v_x$\label{alg:sifting:line:select-x}
        \State\Return $\Siftp(X, Y_x, A_x, \delta, \epsilon \cdot (1 - \frac{1}{k^2}), k - 1, \ell)$ \label{alg:sifting:line:recur}
    \EndIf
\EndProcedure

\bigskip
\Procedure{\Sift}{$X, Y, A, \epsilon, k, \ell$}
    \State\Return $\Siftp(X, Y, A, (1 + \tfrac{\epsilon}{2}) \Ex[A], \tfrac{\epsilon}{4}, k, \ell)$
\EndProcedure
\end{algorithmic}
\end{algorithm}

\thmsifting*
\begin{proof}
Throughout we assume that $k \leq \ell$, as otherwise we can simply exchange $k$ and $\ell$ and work on the transposed graph $A^T$. Moreover, for the sake of a cleaner presentation we design an algorithm $\Siftp(X, Y, A, \delta, \epsilon, k, \ell)$ that receives an additional input parameter $\delta > 0$ with the modified task to return either
\begin{enumerate}
    \item sets $X' \subseteq X, Y' \subseteq Y$ with $|X'| \, |Y'| \geq \frac{\epsilon}{4} \cdot \delta^{k \ell} \cdot |X| \, |Y|$ and $\Ex[A[X', Y']] \geq \delta$, or
    \item ``regular'', in which case $\norm{A}_{U(k, \ell)} \leq (1 + \epsilon) \delta$.
\end{enumerate}
From this alternative algorithm we can easily obtain the desired algorithm $\Sift(X, Y, A, \epsilon, k, \ell)$: Simply call and return $\Siftp(X, Y, A, (1 + \frac{\epsilon}{2}) \Ex[A], \frac{\epsilon}{4}, k, \ell)$.

We design $\Siftp(X, Y, A, \delta, \epsilon, k, \ell)$ as a simple recursive algorithm; see \cref{alg:sifting} for the pseudocode. In a first step (\crefrange{alg:sifting:line:high-deg}{alg:sifting:line:high-deg-return}) we construct the set $X' \gets \set{x \in X : \deg_A(x) \geq \delta}$ of high-degree nodes. If this set turns out to be sufficiently large, $|X'| \geq \frac{\epsilon}{2} \cdot \delta^{k \ell} \cdot |X|$, we can successfully return~$X'$ and~\makebox{$Y' \gets Y$}. Otherwise, we distinguish two cases: If $k = 1$, then we simply report ``regular'' (\crefrange{alg:sifting:line:k=1}{alg:sifting:line:k=1-return}). If instead $k > 1$, then our principle strategy is to identify a node~\makebox{$x \in X$} such that subgraph~$A_x$ is \emph{as irregular as possible,} and to recurse on that subgraph (\crefrange{alg:sifting:line:approx-grid-norm}{alg:sifting:line:recur}). Specifically, using \cref{lem:approx-grid-norm} we compute approximations~$v_x$ of~$\norm{A_x}_{U(k-1, \ell)}$, for all~\makebox{$x \in X$}, with parameter~\smash{$\alpha = \frac{\epsilon\delta^2}{2k^2}$}. We then select the element $x \in X$ with~\smash{$\deg_A(x) \geq \delta^k$} that maximizes $v_x$. Finally, we recurse on \smash{$\Siftp(X, Y_x, A_x, \delta, \epsilon \cdot (1 - \frac{1}{k^2}), k - 1, \ell)$}. (Here we tweak the parameter $\epsilon$ to account for the loss in the approximation.)

\paragraph{Correctness}
First observe that if the algorithm returns sets $X', Y'$, then indeed~\makebox{$\Ex[A[X', Y']] \geq \delta$}. Next we verify by induction that~\smash{$|X'| \, |Y'| \geq \frac{\epsilon}{2} \cdot (\prod_{i=2}^k (1 - \frac{1}{i^2})) \cdot \delta^{k \ell} \cdot |X| \, |Y|$} (which yields the desired bound using that~\smash{$\prod_{i=2}^k (1 - \frac{1}{i^2}) = \frac{k+1}{2k} \geq \frac12$}). Indeed, in the base case we return the sets $X'$ and $Y' = Y$ with size at least $|X'| \, |Y'| \geq \frac{1}{2} \cdot \delta^{k \ell} \cdot |X| \, |Y|$. Otherwise, we recursively obtain sets of size \smash{$|X'| \, |Y'| \geq \frac{\epsilon}{2} \cdot (1 - \frac{1}{k^2}) \cdot (\prod_{i=2}^{k-1} (1 - \frac{1}{i^2})) \cdot \delta^{(k-1) \ell} \cdot |X| \, |Y_x|$}. Recall that $|Y_x| \geq \delta^k \cdot |Y|$, which completes the claim.

It remains to prove that whenever the given graph is irregular, i.e.~\makebox{$\norm{A}_{U(k, \ell)} \geq (1 + \epsilon) \delta$}, then our algorithm does not return ``regular''. On the one hand, if $k = 1$ then \cref{lem:sifting-recursion} implies that the set of high-degree nodes $X' = \set{x \in X : \deg_A(x) \geq \delta}$ has size at least $\frac{\epsilon}{2} \cdot \delta^{\ell} \cdot |X|$, and therefore the algorithm terminates in \cref{alg:sifting:line:high-deg-return} by returning $X', Y$. On the other hand, consider the case $k > 1$. Then either the algorithm terminates in \cref{alg:sifting:line:high-deg-return}, or \cref{lem:sifting-recursion} implies that there is some node~$x^*$ such that $\deg_A(x^*) \geq \delta^k$ and $\norm{A_{x^*}}_{U(k-1, \ell)} \geq (1 + \epsilon) \delta$. Since we select $x$ to be the element satisfying~\smash{$\deg_A(x) \geq \delta^k$} maximizing the approximation of \smash{$\norm{A_x}_{U(k-1, \ell)}$} with additive error~\smash{$\alpha / \deg_A(x)^{\frac{1}{k}} \leq \frac{\epsilon\delta}{2k^2}$}, we select an element with $\norm{A_x}_{U(k-1, \ell)} \geq (1 + \epsilon) \delta - \frac{\epsilon \delta}{k^2} = (1 + \epsilon \cdot (1 - \frac{1}{k^2})) \delta$. By induction the recursive call does not return ``regular''.

\paragraph{Running Time}
The recursion depth of the algorithm is at most $k$, hence it suffices to bound the running time of a single execution. The only costly step is the computation of the approximations of $\norm{A_x}_{U(k-1, \ell)}$ which takes time $n^2 \cdot \alpha^{-\Order(k \ell (k + \ell))} = n^2 \cdot (\epsilon\delta/k)^{-\Order(k \ell (k + \ell))}$ by \cref{lem:approx-grid-norm}; all other steps can be implemented in time $\Order(n^2)$.
\end{proof}

\subsection{Deterministic Regularity Approximation} \label{sec:sifting:sec:approx-grid-norm}
To obtain a deterministic algorithm, it remains to prove \cref{lem:approx-grid-norm}. As the key tool in our derandomization we rely on \emph{oblivious samplers} as developed in an extensive line of research~\cite{ChorG89,GoldreichW97,Zuckerman97,Gillman98,ReingoldVW00,GuruswamiUV09} (see also the survey~\cite{Goldreich11}). For our application the exact dependence on the accuracy parameters~$\epsilon, \delta$ does not matter much, and we thus rely on one of the early constructions:\footnote{Let us restate \cref{lem:oblivious-sampler} in the language of samplers. We arbitrarily identify $X$ with $\set{0, 1}^n$ where~\makebox{$n = \ceil{\log |X|}$}. \cref{lem:oblivious-sampler} states that there is a randomized algorithm (an \emph{oblivious sampler}) returning a sample set~\makebox{$S \subseteq \set{0, 1}^n$} of size $|S| \leq \poly(\epsilon^{-1}, \delta^{-1})$ such that, for any function $f : \set{0, 1}^n \to [0, 1]$, with probability at least $1 - \delta$ it holds that~\smash{$\Ex_{x \in X} f(x) = \Ex_{x \in S} f(x) \pm \epsilon$}. Moreover, the algorithm runs in polynomial time $\poly(n, \epsilon^{-1}, \delta^{-1})$ and has randomness complexity $n + \Order(\log \epsilon^{-1}) + \Order(\log \delta^{-1})$ (i.e., it tosses at most that many unbiased coins). In our formulation, $\mathcal S$ denotes the set of all sample sets $S$ obtainable from the algorithm for some sequence of coin tosses.}

\begin{lemma}[Oblivious Sampling~\cite{GoldreichW97}] \label{lem:oblivious-sampler}
Let $X$ be a set and let $\delta, \epsilon > 0$. There is a deterministic algorithm computing, in time $|X| \cdot \poly(\epsilon^{-1}, \delta^{-1}, \log |X|)$, a family $\mathcal S$ of subsets $S \subseteq X$ such that
\begin{enumerate}
    \item $|\mathcal S| \leq |X| \cdot \poly(\epsilon^{-1}, \delta^{-1})$,
    \item $|S| \leq \poly(\epsilon^{-1}, \delta^{-1})$ for all $S \in \mathcal S$,
    \item For every function $f : X \to [0, 1]$,
    \begin{equation*}
        \Pr_{S \in \mathcal S} \brackets*{\Ex_{x \in X} f(x) = \Ex_{x \in S} f(x) \pm \epsilon} \geq 1 - \delta.
    \end{equation*}
\end{enumerate}
We call $\mathcal S$ an \emph{$(\epsilon, \delta)$-oblivious sampler of $X$.}
\end{lemma}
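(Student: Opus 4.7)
The plan is to reduce the lemma to the construction of a single deterministic oblivious averaging sampler on $\{0,1\}^n$ and then enumerate all of its seeds. Specifically, I identify $X$ with $\{0,1\}^n$ for $n = \lceil \log |X| \rceil$ and aim to build a procedure $\mathsf{Sample} : \{0,1\}^r \to 2^X$ of seed length $r \leq n + O(\log(\epsilon^{-1}\delta^{-1}))$ and output size $\poly(\epsilon^{-1},\delta^{-1})$ with the guarantee that, for every $f : X \to [0,1]$,
\begin{equation*}
  \Pr_{\rho \in \{0,1\}^r}\bigl[\,\Ex_{x \in X} f(x) = \Ex_{x \in \mathsf{Sample}(\rho)} f(x) \pm \epsilon\,\bigr] \geq 1-\delta.
\end{equation*}
Given such a $\mathsf{Sample}$, the family $\mathcal{S} = \{\mathsf{Sample}(\rho) : \rho \in \{0,1\}^r\}$ satisfies properties~(1)--(3) verbatim, and can be produced in time $2^r \cdot \poly(n) = |X| \cdot \poly(\epsilon^{-1},\delta^{-1},\log |X|)$ by brute-force enumeration of seeds.

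For $\mathsf{Sample}$ itself, I would first build a \emph{basic} sampler from pairwise independence. Identify $X$ with the finite field $\mathbb{F}_{2^n}$ and let the seed $\rho_0 = (a,b) \in \mathbb{F}_{2^n}^2$ define $S_{\rho_0} = \{a \cdot i + b : i \in [m]\}$ for $m = \Theta(\epsilon^{-2})$. For any fixed $f$, the random variables $\{f(a \cdot i + b)\}_{i \in [m]}$ are pairwise independent with variance at most $1$, so Chebyshev's inequality ensures the empirical mean is within $\epsilon$ of $\Ex[f]$ except with some constant failure probability $\delta_0 = 1/4$. Then I would amplify via an expander walk: fix an explicit constant-degree expander on $\{0,1\}^{2n}$ with spectral gap bounded away from zero, take a walk $\rho_0^{(1)},\ldots,\rho_0^{(L)}$ of length $L = O(\log \delta^{-1})$, and set $\mathsf{Sample}(\rho) = \bigcup_j S_{\rho_0^{(j)}}$. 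By the expander Chernoff bound of Gillman, with probability at least $1-\delta$ at least a $(1-2\delta_0)$-fraction of the visited seeds yield good basic samples, from which a short calculation shows that the empirical mean over $\mathsf{Sample}(\rho)$ is within $O(\epsilon)$ of $\Ex[f]$.

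The main obstacle, and what forces one beyond this elementary construction, is the seed length. The pairwise-independent stage alone consumes $2n$ bits, giving $|\mathcal{S}| \leq |X|^2 \cdot \poly(\delta^{-1})$, a factor of $|X|$ too large for property~(1). Closing this gap is the core of the Goldreich--Wigderson result and requires a more randomness-efficient basic sampler---for instance, composing the above construction with an explicit $(n, n + O(\log \epsilon^{-1}), \epsilon')$-extractor to reuse most of the $2n$ seed bits, or employing a direct Ramanujan-graph sampler on $X$ itself. I would invoke their explicit construction of a seed-optimal averaging sampler as a black box at this step; once in hand, the remaining work---verifying properties~(1)--(3) and the running-time bound from enumerating all $2^r$ seeds---is immediate.
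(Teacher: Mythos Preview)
The paper does not prove this lemma at all; it is stated as a known result and attributed to Goldreich--Wigderson~\cite{GoldreichW97}, with only a footnote translating the statement into standard sampler language. There is therefore no ``paper's own proof'' to compare against.

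Your sketch is a reasonable outline of how such samplers are built, and you correctly identify the crux: the naive pairwise-independent sampler costs $2n$ seed bits, which would give $|\mathcal S| \leq |X|^2 \cdot \poly(\cdot)$ rather than $|X| \cdot \poly(\cdot)$. But your proposal does not actually close this gap---you end by invoking the Goldreich--Wigderson construction as a black box, which is exactly what the paper does in the first place. So effectively you have not proved the lemma either; you have re-derived the need to cite it. If the intent is to give a self-contained proof, you would need to actually carry out the seed-efficient step (e.g., sampling neighbors in an explicit near-Ramanujan expander on $X$ directly, which uses $n + O(\log(\epsilon^{-1}\delta^{-1}))$ bits and gives the averaging guarantee in one shot). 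If the intent is merely to justify why the lemma holds, a one-line citation suffices, as the paper does.
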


\begin{lemma}[Simple Bounds for Additive Approximations] \label{lem:bounds-add-approx}
Let $a, b \in [0, 1]$ and $\epsilon > 0, k \ge 1$. Then:
\begin{itemize}
    \item If $a = b \pm \epsilon$, then $a^k = b^k \pm 2\epsilon k$.
    \item If $a^k = b^k \pm \epsilon^k$, then $a = b \pm \epsilon$.
\end{itemize}
\end{lemma}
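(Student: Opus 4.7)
The plan is to handle the two parts independently, as they are essentially standard elementary inequalities for the map $x\mapsto x^k$ on $[0,1]$. Neither part requires real machinery; the hardest piece is simply picking the right algebraic identity.

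For the first bullet, the approach is to use the standard factorization
\begin{equation*}
    a^k - b^k = (a-b)\sum_{i=0}^{k-1} a^{k-1-i}\, b^{i}.
\end{equation*}
Since $a,b\in[0,1]$, every term in the sum lies in $[0,1]$, so the sum is bounded by $k$. Thus $|a^k-b^k|\le k\,|a-b|\le k\epsilon$, which is even sharper than the claimed $2\epsilon k$. This step is routine; I would present the factorization and the one-line bound and be done.

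For the second bullet, the cleanest approach is the contrapositive. Assume $|a-b|>\epsilon$ and show $|a^k-b^k|>\epsilon^k$. Without loss of generality $a>b$, so $a>b+\epsilon$. Because $x\mapsto x^k$ is monotone on $[0,\infty)$, this gives $a^k>(b+\epsilon)^k$. Now expand by the binomial theorem
\begin{equation*}
    (b+\epsilon)^k = \sum_{i=0}^{k}\binom{k}{i} b^{k-i}\epsilon^{i} \ge b^k + \epsilon^k,
\end{equation*}
simply by keeping the $i=0$ and $i=k$ terms (all other terms are nonnegative since $b,\epsilon\ge 0$). Combining, $a^k-b^k>\epsilon^k$, the desired contradiction. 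Note the argument does not require $b+\epsilon\le 1$: if $b+\epsilon>1$ then $|a-b|\le 1-b<\epsilon$ automatically, so the contrapositive hypothesis cannot even be triggered.

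The main (very mild) obstacle is conceptual rather than technical: recognizing that the natural matching inequality to the factorization bound for part~1 is the binomial lower bound $(b+\epsilon)^k\ge b^k+\epsilon^k$, which is exactly what makes the exponent $k$ on the error $\epsilon^k$ the right scaling in part~2. Once that is identified, both proofs are a single line of algebra.
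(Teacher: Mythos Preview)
Your proof is correct. For the second bullet your argument is identical to the paper's: contrapositive, monotonicity of $x\mapsto x^k$, and the binomial lower bound $(b+\epsilon)^k\ge b^k+\epsilon^k$.

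For the first bullet you take a slightly different and in fact cleaner route than the paper. The paper bounds $(b+\epsilon)^k-b^k$ via the binomial expansion $\sum_{i=1}^k\binom{k}{i}\epsilon^i b^{k-i}\le\sum_{i=1}^k k^i\epsilon^i$ and then sums a geometric series, which is why it needs the case split $\epsilon\gtrless\frac{1}{2k}$ and only gets the constant $2$. Your factorization $a^k-b^k=(a-b)\sum_{i=0}^{k-1}a^{k-1-i}b^i$ with the trivial bound on the sum gives $|a^k-b^k|\le k\epsilon$ directly, with no case split and a sharper constant. Both are elementary, but yours is the tidier argument here.
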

\begin{proof}
The first claim is trivial if \smash{$\epsilon \geq \frac{1}{2k}$}, so suppose otherwise. Assuming that $a \leq b + \epsilon$, it follows that $a^k - b^k \leq (b + \epsilon)^k - b^k = \sum_{i=1}^k \binom{k}{i} \cdot \epsilon^i \cdot b^{k-i} \leq \sum_{i=1}^k k^i \cdot \epsilon^i \leq \epsilon k \cdot \sum_{i=0}^\infty k^i \cdot \epsilon^i \leq 2\epsilon k$. The second claim is immediate: If $a > b + \epsilon$, then $a^k > (b + \epsilon)^k > b^k + \epsilon^k$ (and similarly if~\makebox{$b > a + \epsilon$}).
\end{proof}

\begin{lemma}[Regularity Approximation via Oblivious Sampling] \label{lem:grid-norm-sampling}
Let $A \in \set{0, 1}^{X \times Y}$, let $\delta, \epsilon > 0$ and~\makebox{$k, \ell \geq 1$}, and let $\mathcal S, \mathcal T$ be $(\epsilon, \delta)$-oblivious samplers of $X$ and $Y$, respectively. Then:
\begin{equation*}
    \norm{A}_{U(k, \ell)}^{k \ell} = \Ex_{\substack{S \in \mathcal S\\T \in \mathcal T}} \norm{A[S, T]}_{U(k, \ell)}^{k \ell} \pm (2\epsilon k + 2\epsilon \ell + 2\delta).
\end{equation*}
\end{lemma}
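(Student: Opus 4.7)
The plan is to apply the oblivious-sampling guarantee \emph{twice} --- once for $X$ and once for $Y$ --- with the power trick from \cref{lem:bounds-add-approx} used each time to propagate an additive approximation through an outer $k$-th or $\ell$-th power.

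First I would rewrite
\begin{equation*}
    \norm{A}_{U(k,\ell)}^{k\ell} = \Ex_{y_1,\dots,y_\ell \in Y}\, F_y^k, \qquad \text{where } F_y := \Ex_{x \in X}\,f_y(x),\ f_y(x) := \prod_{j \in [\ell]} A(x,y_j) \in [0,1].
\end{equation*}
For each fixed tuple $(y_1,\dots,y_\ell)$, \cref{lem:oblivious-sampler} applied to $f_y$ says that with probability at least $1-\delta$ over $S \in \mathcal S$, $\Ex_{x \in S} f_y(x) = F_y \pm \epsilon$; by \cref{lem:bounds-add-approx} this upgrades to $(\Ex_{x \in S} f_y(x))^k = F_y^k \pm 2\epsilon k$. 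Since both sides live in $[0,1]$, taking expectation over $\mathcal S$ (and bounding the bad $\delta$-fraction trivially by $1$) yields
\begin{equation*}
    \Ex_{S \in \mathcal S}\Bigl(\Ex_{x \in S} f_y(x)\Bigr)^k = F_y^k \,\pm\, (2\epsilon k + \delta).
\end{equation*}
Averaging this over $y_1,\dots,y_\ell$ and swapping the (finite) expectations over $\mathcal S$ and $Y^\ell$ gives
\begin{equation*}
    \norm{A}_{U(k,\ell)}^{k\ell} = \Ex_{S \in \mathcal S}\ \Ex_{x_1,\dots,x_k \in S}\ \Ex_{y_1,\dots,y_\ell \in Y} \prod_{i,j} A(x_i,y_j) \,\pm\, (2\epsilon k + \delta),
\end{equation*}
using the standard identity $(\Ex_{x \in S} f_y(x))^k = \Ex_{x_1,\dots,x_k \in S}\prod_i f_y(x_i)$.

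Next I would symmetrically repeat the argument on the $Y$-side: rewrite the inner expression as $(\Ex_{y \in Y} h_x(y))^\ell$ with $h_x(y) := \prod_i A(x_i,y) \in [0,1]$, and apply the oblivious sampler $\mathcal T$ for $Y$ to each fixed $(x_1,\dots,x_k)$. The same power-then-average step produces an additional error of $2\epsilon \ell + \delta$ and transforms the inner expectation into $\Ex_{T \in \mathcal T}(\Ex_{y \in T} h_x(y))^\ell$. Swapping the order of the outer expectations one last time and recognizing
\begin{equation*}
    \Ex_{x_1,\dots,x_k \in S}\Bigl(\Ex_{y \in T}\prod_i A(x_i,y)\Bigr)^\ell = \norm{A[S,T]}_{U(k,\ell)}^{k\ell}
\end{equation*}
closes the argument, with total error $(2\epsilon k + \delta) + (2\epsilon \ell + \delta) = 2\epsilon k + 2\epsilon \ell + 2\delta$.

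The only subtle point is the bookkeeping of the error propagation: one has to be careful that, on the $\delta$-fraction of bad samples where the sampler guarantee fails, the contribution to $\Ex_S (\cdot)^k$ is still bounded (here by $1$, since $f_y \in [0,1]$), so that the overall additive error from one sampler invocation is at most $2\epsilon k + \delta$ and not, say, something scaling with an unbounded deviation. Beyond that, the proof is essentially a two-line Fubini computation combined with \cref{lem:bounds-add-approx}, and no further obstacles should arise.
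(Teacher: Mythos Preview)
Your proposal is correct and follows essentially the same approach as the paper: fix one side's tuple, apply the oblivious sampler plus \cref{lem:bounds-add-approx} to handle the $k$-th power (picking up error $2\epsilon k + \delta$, with the $\delta$ coming from trivially bounding the bad samples by~$1$), average and swap, then repeat symmetrically on the other side for an additional $2\epsilon\ell + \delta$. The paper's proof is structured identically, phrasing the intermediate step as $\norm{A}_{U(k,\ell)}^{k\ell} = \Ex_{S \in \mathcal S}\norm{A[S,Y]}_{U(k,\ell)}^{k\ell} \pm (2\epsilon k + \delta)$ before applying the same argument to $A[S,Y]$.
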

\begin{proof}
Fix $y_1, \dots, y_\ell \in Y$ and consider the function $f(x) = \prod_{j \in [\ell]} A(x, y_j)$. We sample a uniformly random set $S \in \mathcal S$. Since $\mathcal S$ is an $(\epsilon, \delta)$-oblivious sampler of $X$, with probability at least $1 - \delta$ we have that $\Ex_{x \in X} f(x) = \Ex_{x \in S} f(x) \pm \epsilon$ and thus $\parens{\Ex_{x \in X} f(x)}^k = \parens{\Ex_{x \in S} f(x)}^k \pm 2\epsilon k$ (by \cref{lem:bounds-add-approx}). Since moreover both expectations are $[0, 1]$-bounded, we conclude that
\begin{equation*}
    \parens*{\Ex_{x \in X} f(x)}^k = \Ex_{S \in \mathcal S} \parens*{\Ex_{x \in S} f(x)}^k \pm (2\epsilon k + \delta).
\end{equation*}
Now unfix $y_1, \dots, y_\ell \in Y$. From the previous consideration it follows that
\begin{align*}
    \norm{A}_{U(k, \ell)}^{k \ell}
    &= \Ex_{y_1, \dots, y_\ell \in Y} \parens*{\Ex_{x \in X} \prod_{j \in [\ell]} A(x, y_j)}^k \pm (2\epsilon k + \delta) \\
    &= \Ex_{S \in \mathcal S} \Ex_{y_1, \dots, y_\ell \in Y} \parens*{\Ex_{x \in S} \prod_{j \in [\ell]} A(x, y_j)}^k \pm (2\epsilon k + \delta) \\
    &= \vphantom{\parens*{\prod_{j \in [\ell]}}}\Ex_{S \in \mathcal S} \norm{A[S, Y]}_{U(k, \ell)}^{k \ell} \pm (2\epsilon k + \delta).
\end{align*}

We can now apply the same argument again to $A[S, Y]$, with the roles of $X$ and $Y$ interchanged, to obtain that
\begin{equation*}
    \norm{A[S, Y]}_{U(k, \ell)}^{k \ell} = \Ex_{T \in \mathcal T} \norm{A[S, T]}_{U(k, \ell)}^{k \ell} \pm (2\epsilon \ell + \delta),
\end{equation*}
and therefore
\begin{equation*}
    \norm{A}_{U(k, \ell)}^{k \ell} = \Ex_{\substack{S \in \mathcal S\\T \in \mathcal T}} \norm{A[S, T]}_{U(k, \ell)}^{k \ell} \pm (2\epsilon k + 2\epsilon \ell + 2\delta)
\end{equation*}
The claim follows.
\end{proof}

\begin{proof}[Proof of \cref{lem:approx-grid-norm}]
We precompute $(\epsilon, \delta)$-oblivious samplers $\mathcal S$ and $\mathcal T$ of $X$ and $Y$, respectively, for parameters $\epsilon, \delta > 0$ to be determined later. Then we enumerate each pair $S \in \mathcal S, T \in \mathcal T$ and each tuple $(x_1, \dots, x_k) \in S^k, (y_1, \dots, y_\ell) \in T^\ell$ to compute the values
\begin{equation*}
    u_{T, y_1, \dots, y_\ell} \gets \Ex_{S \in \mathcal S} \Ex_{x_1, \dots, x_k \in S} \prod_{\substack{i \in [k]\\j \in [\ell]}} A(x_i, y_j).
\end{equation*}
Next, we enumerate each $x \in X$ and compute
\begin{equation*}
    u_x \gets \parens*{\Ex_{T \in \mathcal T} \Ex_{y_1, \dots, y_\ell \in T} u_{T, y_1, \dots, y_\ell} \cdot \prod_{j \in [\ell]} A(x, y_j)}^{\frac{1}{k \ell}},
\end{equation*}
by enumerating each $T \in \mathcal T$ and each tuple $(y_1, \dots, y_\ell) \in T^\ell$. Finally, we return \smash{$v_x \gets u_x / \deg_A(x)^{\frac{1}{k}}$} as the desired approximations.

For the correctness, let $A_x' \in \set{0, 1}^{X \times Y}$ be the matrix obtained from $A$ where all columns~\makebox{$y \not\in Y_x$} are zeroed out (in contrast to $A_x$ where we have deleted these columns). Then:
\begin{align*}
    u_x 
    &= \parens*{\Ex_{\substack{S \in \mathcal S\\T \in \mathcal T}} \Ex_{\substack{x_1, \dots, x_k \in S\\y_1, \dots, y_\ell \in T}} \prod_{\substack{i \in [k]\\j \in [\ell]}} A(x_i, y_j) \cdot \prod_{j \in [\ell]} A(x, y_j)}^{\frac{1}{k \ell}} \\
    &= \parens*{\Ex_{\substack{S \in \mathcal S\\T \in \mathcal T}} \Ex_{\substack{x_1, \dots, x_k \in S\\y_1, \dots, y_\ell \in T}} \prod_{\substack{i \in [k]\\j \in [\ell]}} A_x'(x_i, y_j)}^{\frac{1}{k \ell}} \\
    &= \parens*{\Ex_{\substack{S \in \mathcal S\\T \in \mathcal T}} \norm{A_x'[S, T]}_{U(k, \ell)}^{k \ell}}^{\frac{1}{k \ell}},
\intertext{and thus, by the previous \cref{lem:bounds-add-approx,lem:grid-norm-sampling},}
    &= \parens*{\norm{A_x'}_{U(k, \ell)}^{k \ell} \pm (2\epsilon k + 2\epsilon \ell + 2\delta)}^{\frac{1}{k\ell}} \\[1ex]
    &= \norm{A_x'}_{U(k, \ell)} \pm (2\epsilon k + 2\epsilon \ell + 2\delta)^{\frac{1}{k\ell}}.
\end{align*}
By the definitions of $A_x$ and $A_x'$, we have that $\norm{A_x'}_{U(k, \ell)}^{k \ell} = \deg_A(x)^\ell \cdot \norm{A_x}_{U(k, \ell)}^{k \ell}$ which implies that~\smash{$\norm{A_x'}_{U(k, \ell)} = \deg_A(x)^{\frac{1}{k}} \cdot \norm{A_x}_{U(k, \ell)}$} and thus
\begin{equation*}
    v_x = \norm{A_x}_{U(k, \ell)} \pm \frac{(2\epsilon k + 2\epsilon \ell + 2 \delta)^{\frac{1}{k \ell}}}{\deg_A(x)^{\frac{1}{k}}}.
\end{equation*}
To achieve the claimed bound $\alpha$ on the additive error, we choose \smash{$\epsilon = \delta = \alpha^{k \ell} / (2k + 2\ell + 2) = \alpha^{\Order(k \ell)}$}.

We finally consider the running time. The precomputation takes time $n \poly(\epsilon^{-1}, \delta^{-1}, \log n)$ by \cref{lem:oblivious-sampler}. Computing the intermediate values $u_{T, y_1, \dots, y_\ell}$ takes time
\begin{equation*}
    \Order\parens*{\sum_{\substack{S \in \mathcal S\\T \in \mathcal T}} |S|^k \cdot |T|^\ell \cdot k \ell} = \Order\parens*{\sum_{\substack{S \in \mathcal S\\T \in \mathcal T}} \poly(\epsilon^{-1}, \delta^{-1})^{k + \ell}} = n^2 \cdot \poly(\epsilon^{-1}, \delta^{-1})^{k + \ell}.
\end{equation*}
Similarly, computing the values $u_x$ then takes time~\smash{$\Order(|X| \cdot \sum_{T \in \mathcal T} |T|^\ell \cdot k \ell) = n^2 \poly(\epsilon^{-1}, \delta^{-1})^\ell$}. All contributions are bounded by $n^2 \cdot \alpha^{-\Order(k\ell(k + \ell))}$.
\end{proof}
\section{Regularity Decompositions} \label{sec:decompositions}
In this section we establish the regularity decompositions (\cref{thm:edge-decomposition,thm:2-path-decomposition}). The structure of this section closely follows the outline from \cref{sec:overview}.

We start with the following lemma stating that any graph can either be made $\epsilon$-min-degree without loosing many nodes, or we can find a denser subgraph.

\begin{lemma}[Minimum Degree] \label{lem:min-degree}
Let $A \in \set{0, 1}^{X \times Y}$, and let $\epsilon, \gamma > 0$. There is an algorithm $\MinDegree(X, Y, A, \epsilon, \gamma)$ computing a set $X' \subseteq X$ of size $|X'| \geq \floor{(1 - \gamma) \cdot |X|}$ such that, writing $A' = A[X', Y]$, one of the following cases holds:
\begin{enumerate}
    \item $A'$ is $\epsilon$-min-degree and $\Ex[A'] \geq \Ex[A]$, or
    \item $\Ex[A'] \geq (1 + \gamma \epsilon) \Ex[A]$. 
\end{enumerate}
The algorithm is deterministic and runs in time $\Order(|X| \, |Y|)$.
\end{lemma}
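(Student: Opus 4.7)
My plan is to implement \MinDegree by a greedy peeling procedure that iteratively deletes the lowest-degree vertex. First I would compute the absolute degrees $\deg_A(x) \cdot |Y|$ for every $x \in X$ in $\Order(|X|\,|Y|)$ time and sort $X$ in increasing order as $x_1, \dots, x_n$ (where $n = |X|$); the sort is $\Order(|X|\,|Y|)$ by integer sort since absolute degrees lie in $\set{0, 1, \dots, |Y|}$. For $k = 1, 2, \dots$, let $X_k = \set{x_k, \dots, x_n}$ and $A_k = A[X_k, Y]$; the density $\Ex[A_k]$ can be maintained in $\Order(1)$ per step from a suffix sum of the absolute degrees. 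The algorithm advances $k$, i.e., removes $x_k$, as long as (i)~$\deg_A(x_k) < (1-\epsilon) \Ex[A_k]$ and (ii)~the number $k-1$ of already-removed vertices is strictly less than $\ceil{\gamma n}$; when either condition fails, it returns $X' \gets X_k$. Since $x_k$ is by construction the minimum-degree vertex of $X_k$, condition~(i) is exactly the $\epsilon$-min-degree condition on $A_k$, and the size guarantee $|X'| \geq \floor{(1-\gamma)|X|}$ follows from $k - 1 \leq \ceil{\gamma n}$.

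The crucial structural observation is that density is monotone non-decreasing along this process: a short one-line computation shows that removing from $X_i$ any vertex with relative degree strictly below $\Ex[A_i]$ strictly increases the density of the remaining set. Since every removed vertex satisfies $\deg_A(x_i) < (1-\epsilon) \Ex[A_i] < \Ex[A_i]$, we get $\Ex[A_1] \leq \Ex[A_2] \leq \cdots \leq \Ex[A_k]$. If the algorithm terminates because (i)~fails, then $A' = A_k$ is $\epsilon$-min-degree (witnessed by its minimum vertex $x_k$) and $\Ex[A'] \geq \Ex[A]$ by monotonicity, so case~1 of the lemma holds.

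Otherwise the algorithm terminates because exactly $r = \ceil{\gamma n}$ vertices have been removed. Each removed $x_i$ satisfies $\deg_A(x_i) < (1-\epsilon) \Ex[A_i] \leq (1-\epsilon) \Ex[A']$ by monotonicity, so splitting the total edge count of $A$ into contributions from $X'$ and from the removed set yields
\begin{equation*}
    \Ex[A] \cdot n \;=\; \Ex[A'] \cdot (n - r) + \sum_{i=1}^{r} \deg_A(x_i) \;<\; \Ex[A'] \cdot (n - r) + (1-\epsilon) r \cdot \Ex[A'] \;=\; \Ex[A'] \cdot (n - \epsilon r).
\end{equation*}
Rearranging and using $r \geq \gamma n$ gives $\Ex[A'] > \Ex[A] \cdot n / (n - \epsilon r) \geq \Ex[A] / (1 - \epsilon \gamma) \geq (1 + \gamma \epsilon) \Ex[A]$, which is case~2.

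The running time is dominated by the initial degree computation and the sort, both $\Order(|X|\,|Y|)$; the peeling loop itself is $\Order(|X|)$ thanks to the maintained suffix sum. There is no real obstacle here; the only thing to be careful about is the interaction between the floor/ceiling bounds and the strict versus non-strict inequalities in the two cases, but both are handled transparently by the setup above.
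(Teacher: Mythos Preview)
Your proof is correct and follows essentially the same approach as the paper: greedily peel the minimum-degree vertex as long as it falls below $(1-\epsilon)\Ex[A_k]$, stopping either when the min-degree condition is met (Case~1) or when $\ceil{\gamma n}$ vertices have been removed (Case~2), and use density monotonicity to derive the $(1+\gamma\epsilon)$ density increment. The paper's write-up phrases the loop as ``remove any offending vertex'' and only brings in the sort-by-degree trick for the running-time analysis, but the algorithm and the two-case analysis are the same as yours.
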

\begin{proof}
Consider the following algorithm; for the pseudocode see \cref{alg:min-degree}. Initially, we assign~\makebox{$X' \gets X$} and $A' \gets A$. As long as there exists some $x \in X'$ with $\deg_{A'}(x) < (1 - \epsilon) \Ex[A']$, we remove $x$ by updating $X' \gets X' \setminus \set{x}$ and~\makebox{$A' \gets A[X', Y]$}. When this rule terminates we output the resulting set $X'$ (Case~1). If, however, we reach size $|X'| \le (1 - \gamma) \cdot |X|$, then we stop the algorithm prematurely and return the set~$X'$ from that stage of the algorithm (Case~2).

\begin{algorithm}[t]
\caption{Implements the algorithm from \cref{lem:min-degree}.} \label{alg:min-degree}
\begin{algorithmic}[1]
    \Procedure{\MinDegree}{$X, Y, A, \epsilon, \gamma$}
        \State Initialize $X' \gets X$ and $A' \gets A$
        \While{$\exists x \in X'$ with $\deg_{A'}(x) < (1 - \epsilon) \Ex[A']$}
            \State Update $X' \gets X' \setminus \set{x}$ and $A' \gets A[X', Y]$
            \If{$|X'| \leq (1 - \gamma) \cdot |X|$}
                \State\Return $X'$ (Case 2)
            \EndIf
        \EndWhile
        \State\Return $X'$ (Case 1)
    \EndProcedure
\end{algorithmic}
\end{algorithm}

\paragraph{Correctness of Case~1}
Suppose that the algorithm terminates in Case~1. It is clear that $A'$ is $\epsilon$-min-degree. Moreover, one can easily verify that $\Ex[A'] \geq \Ex[A]$ since we have only removed nodes with degree smaller than average. Finally, since the algorithm has not stopped in Case~2 before, we indeed have $|X'| \geq (1 - \gamma) \cdot |X|$.


\paragraph{Correctness of Case~2}
Suppose now that the algorithm terminates in Case~2. We first argue that $\Ex[A'] \geq (1 + \gamma \epsilon) \Ex[A]$. To this end, we let $X'$ and $A'$ be as when the algorithm terminates. As this happens in the first iteration when $|X'| \leq (1 - \gamma) \cdot |X|$ we have that $|X'| = \floor{(1 - \gamma) \cdot |X|}$, and thus also $|X \setminus X'| \geq \gamma \cdot |X|$. Now let $A'' = A[X \setminus X', Y]$. Since the density of $A'$ only increases over the course of the algorithm, and since we only remove nodes with $\deg_{A'}(x) < (1 - \epsilon) \Ex[A']$, we have that $\Ex[A''] < (1 - \epsilon) \Ex[A']$. Therefore:
\begin{align*}
    \Ex[A] &= \frac{|X'|}{|X|} \cdot \Ex[A'] + \frac{|X \setminus X'|}{|X|} \cdot \Ex[A''] \\
    &\leq \frac{|X'|}{|X|} \cdot \Ex[A'] + \frac{|X \setminus X'|}{|X|} \cdot (1 - \epsilon) \Ex[A'] \\
    &= \parens*{1 - \epsilon \cdot \frac{|X \setminus X'|}{|X|}} \cdot \Ex[A'] \\[.6ex]
    &\leq (1 - \gamma \epsilon) \cdot \Ex[A'].
\end{align*}
Rearranging yields that $\Ex[A'] \geq (1 + \gamma \epsilon) \Ex[A]$ as claimed.


\paragraph{Running Time}
It is easy to check that this algorithm can be implemented in time $\Order(|X| \, |Y|)$: We precompute the degrees of all nodes in $X$, and sort $X$ according to these degrees. Throughout we maintain the set $X$ and the size $|A'|$. In each step we can find in constant time a node $x \in X'$ with $\deg_{A'}(x) < (1 - \epsilon) \Ex[A']$ if it exists (namely the node in $X'$ with smallest degree).
\end{proof}

We remark that, while the previous lemma only guarantees the left-sided min-degree condition, it is equally possible to guarantee the condition on both sides. However, we never need this stronger statement in our upcoming proofs and therefore stick to this simpler version.

\subsection{\texorpdfstring{\boldmath$A$}{A}-Decomposition}
In this subsection we prove \cref{thm:edge-decomposition} (i.e., the decomposition of a \emph{single bipartite} graph into regular subgraphs). As outlined in \cref{sec:overview}, the proof consists of two steps: A method to find \emph{good rectangles} via density increments (see \cref{lem:good-rect}), and a decomposition via density decrements that repeatedly remove good rectangles (see \cref{thm:edge-decomposition}). 

\begin{lemma}[Finding a Good Rectangle] \label{lem:good-rect}
Let $A \in \set{0, 1}^{X \times Y}$, let~\makebox{$\epsilon \in (0, 1)$} and $d \geq 1$, and assume that $\Ex[A] \geq 2^{-d}$. There is an algorithm $\RegRect(X, Y, A, \epsilon, d)$ computing~\makebox{$X^* \subseteq X, Y^* \subseteq Y$} such that:
\begin{enumerate}
    \item Let $A^* = A[X^*, Y^*]$. Then $A^*$ is $(\epsilon, 2, d)$-regular and $\epsilon$-min-degree.
    \item $\Ex[A^*] \geq \Ex[A]$.
    \item $|X^*| \, |Y^*| \geq \exp(-d^3 \poly(\epsilon^{-1})) \cdot |X| \, |Y|$.
\end{enumerate}
The algorithm is deterministic and runs in time $n^2 \cdot \exp(d^3 \poly(\epsilon^{-1}))$ (where $n = |X| + |Y|$).
\end{lemma}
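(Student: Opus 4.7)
The plan is to implement the density-increment strategy outlined in Section~\ref{sec:overview:sec:A-decomposition}: call the min-degree and sifting procedures in sequence, and whenever either fails we recurse on the denser subgraph they certify. Because the density of the current working graph is always at least $2^{-d}$ and strictly increases by a multiplicative $(1 + \Omega(\epsilon))$ factor after each failure, the recursion depth is bounded by $O(d/\epsilon)$, and controlling the shrinkage at each level will yield the claimed size bound.

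Concretely, the recursive procedure $\RegRect(X, Y, A, \epsilon, d)$ first invokes $\MinDegree(X, Y, A, \epsilon, \tfrac{1}{2})$ with $\gamma = \tfrac{1}{2}$, obtaining a set $X_1 \subseteq X$ with $|X_1| \geq \lfloor |X|/2 \rfloor$ and $A_1 = A[X_1, Y]$. If MinDegree terminates in Case~2, we recurse on $(X_1, Y, A_1)$, which has density at least $(1 + \epsilon/4)\,\Ex[A]$. Otherwise $A_1$ is $\epsilon$-min-degree with $\Ex[A_1] \geq \Ex[A]$, and we call $\Sift(X_1, Y, A_1, \epsilon, 2, d)$. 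If it certifies regularity, we return $(X_1, Y)$ directly. If it returns sets $X_2 \subseteq X_1, Y_2 \subseteq Y$, then $\Ex[A[X_2, Y_2]] \geq (1 + \epsilon/2)\,\Ex[A_1]$ and we recurse on $(X_2, Y_2, A[X_2, Y_2])$. The recursion bottoms out as soon as Sift certifies regularity of a min-degree subgraph.

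For correctness, Property~(1) is immediate from the successful termination condition. Property~(2) follows because density monotonically increases along the recursion chain (MinDegree Case~1 gives $\Ex[A_1] \geq \Ex[A]$, and both failure branches strictly increase it), so the returned $A^*$ satisfies $\Ex[A^*] \geq \Ex[A]$. Termination and the recursion depth bound follow from the fact that each failure multiplies the density by at least $(1 + \epsilon/4)$ and the density is bounded above by $1$; starting from $\Ex[A] \geq 2^{-d}$, this happens at most $\log_{1+\epsilon/4}(2^d) = O(d/\epsilon)$ times.

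The main quantitative step is tracking the shrinkage. At every recursion level, $|X|\cdot|Y|$ shrinks by a factor that is at worst $1/2$ (MinDegree branch) or $\tfrac{\epsilon}{16} \cdot \Ex[A']^{2d} \geq \tfrac{\epsilon}{16} \cdot 2^{-2d^2}$ (Sift branch, using that the density is always at least $2^{-d}$). Multiplying these over $O(d/\epsilon)$ levels gives the total shrinkage $\exp(-d^3\,\poly(\epsilon^{-1}))$, which is Property~(3). For the running time, each recursive level costs $O(|X|\,|Y|)$ for MinDegree plus the $n^2 \cdot (\epsilon \Ex[A']/2)^{-O(d)} = n^2 \cdot \exp(d^2\,\poly(\epsilon^{-1}))$ cost of Sift from Theorem~\ref{thm:sifting}; multiplied by the $O(d/\epsilon)$ depth this stays within $n^2 \cdot \exp(d^3\,\poly(\epsilon^{-1}))$ as claimed. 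The only step requiring care is verifying that the density stays at least $2^{-d}$ throughout so the bound $\Ex[A']^{2d} \geq 2^{-2d^2}$ keeps applying in Sift — which holds because density only increases — and tuning the constants inside $\poly(\epsilon^{-1})$ so that all of the above absorb cleanly.
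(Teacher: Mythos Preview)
Your proposal is correct and follows essentially the same approach as the paper's proof: alternate \MinDegree\ (with $\gamma=\tfrac12$) and \Sift, recursing on the denser subgraph whenever either fails, and bound the recursion depth by $O(d/\epsilon)$ via the multiplicative density increase. One small slip: the per-call cost of \Sift\ from \cref{thm:sifting} has exponent $-O(k\ell(k+\ell)) = -O(d^2)$ (not $-O(d)$ as you wrote), so each call already costs $n^2 \cdot \exp(d^3\,\poly(\epsilon^{-1}))$; this does not affect your final bound, which remains correct.
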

\begin{proof}
We start with the description of the algorithm; see \cref{alg:reg-rect} for the pseudocode. We first apply \cref{lem:min-degree} to compute~\makebox{$X' \subseteq X$} and $A' \gets A[X', Y]$ (with parameters $\epsilon$ and $\gamma \gets \frac12$, \cref{alg:reg-rect:line:min-degree}). The lemma guarantees one of two cases: Either~$A'$ is $\epsilon$-min-degree, or $\Ex[A'] \geq (1 + \frac{\epsilon}{2}) \Ex[A]$. In the latter case we recurse on the subinstance~$(X', Y, A')$ to find a good rectangle (\cref{alg:reg-rect:line:min-degree-dense,alg:reg-rect:line:min-degree-recur}). In the former case we continue and apply \cref{thm:sifting} to $(X', Y, A')$ (with parameters $\epsilon$ and $d$). If the theorem returns a rectangle~\makebox{$X'' \times Y'' \subseteq X' \times Y$} with~\makebox{$\Ex[A[X'', Y'']] \geq (1 + \frac{\epsilon}{2}) \Ex[A]$}, then we recurse on the subinstance $(X'', Y'', A[X'', Y''])$ (\cref{alg:reg-rect:line:sifting,alg:reg-rect:line:sifting-recur}). Otherwise, \cref{thm:sifting} guarantees that $A'$ is $(\epsilon, 2, d)$-regular. In this case we finally return $X^* \gets X', Y^* \gets Y$ (\cref{alg:reg-rect:line:return}).

\begin{algorithm}[t]
\caption{Implements the algorithm from \cref{lem:good-rect}.} \label{alg:reg-rect}
\begin{algorithmic}[1]
    \Procedure{\RegRect}{$X, Y, A, \epsilon, d$}
        \State Compute $X' \gets \MinDegree(X, Y, A, \epsilon, \frac{1}{2})$ and $A' \gets A[X', Y]$ \label{alg:reg-rect:line:min-degree}
        \If{$\Ex[A'] \geq (1 + \frac{\epsilon}{2}) \Ex[A]$} \label{alg:reg-rect:line:min-degree-dense}
            \State\Return $\RegRect(X', Y, A', \epsilon, d)$ \label{alg:reg-rect:line:min-degree-recur}
        \EndIf
        \If{$\Sift(X', Y, A', \epsilon, d)$ returns a denser rectangle $X'' \times Y'' \subseteq X' \times Y$} \label{alg:reg-rect:line:sifting}
            \State\Return $\RegRect(X'', Y'', A[X'', Y''], \epsilon, d)$ \label{alg:reg-rect:line:sifting-recur}
        \EndIf
        \State\Return $X', Y$ \label{alg:reg-rect:line:return}
    \EndProcedure
\end{algorithmic}
\end{algorithm}

The correctness of Property~1 is clear: We either recurse, or return the submatrix $A'$ that is $\epsilon$-min-degree and $(\epsilon, 2, d)$-regular. Moreover, it is easy to prove that Property~2 holds: \cref{lem:min-degree} states that $\Ex[A'] \geq \Ex[A]$ in either case; and if the algorithm recurses, then the density even strictly increases. It remains to prove Property~3 and to analyze the running time of this algorithm.

\paragraph{Correctness of Property~3}
First assume that the algorithm does not recurse and returns~$X', Y$. In this case, \cref{lem:min-degree} guarantees that $|X'| \geq \frac{1}{2} \cdot |X|$, and thus $|X^*| \, |Y^*| \geq \frac12 \cdot |X| \, |Y|$. Next, consider the recursive cases. By \cref{lem:min-degree,thm:sifting}, in both cases we recurse on a rectangle of size at least $\min\set{\frac{1}{2}, \frac{\epsilon}{16} \cdot \Ex[A]^{2d}} \cdot |X| \, |Y| = \frac{\epsilon}{16} \cdot \Ex[A]^{2d} \cdot |X| \, |Y|$ and with density at least~\makebox{$(1 + \frac{\epsilon}{2}) \Ex[A]$}. It follows by induction that the algorithm returns a rectangle of size
\begin{equation*}
    |X^*| \, |Y^*| \geq \tfrac{1}{2} \cdot \parens*{\tfrac{\epsilon}{16} \cdot \Ex[A]^{2d}}^{\log_{1+\frac{\epsilon}{2}}(\Ex[A]^{-1})} \cdot |X| \, |Y| \geq \exp(-d^3 \poly(\epsilon^{-1})) \cdot |X| \, |Y|,
\end{equation*}
where in the latter bound we used that $\log(1 + \epsilon) \geq \epsilon$ for all $\epsilon \in (0, 1)$, and that initially $\Ex[A] \geq 2^{-d}$.

\paragraph{Running Time}
The algorithm reaches recursion depth at most \smash{$\log_{1+\frac{\epsilon}{2}}(\Ex[A]^{-1}) = \Order(d / \epsilon)$}, which causes a negligible overhead in the running time. The call to $\MinDegree$ (\cref{lem:min-degree}) takes time~$\Order(n^2)$, and the call to $\Sift$ (\cref{thm:sifting}) takes time $n^2 \cdot (\epsilon\Ex[A])^{-\Order(d^2)} = n^2 \cdot \exp(d^3 \poly(\epsilon^{-1}))$. All in all, the running time is $n^2 \cdot \exp(d^3 \poly(\epsilon^{-1}))$ as claimed.
\end{proof}

\thmedgedecomposition*
\begin{proof}
We start with the description of the algorithm; see \cref{alg:edge-decomposition} for the pseudocode. As a first step, we compute the density~$\Ex[A]$ and test whether $\Ex[A] \leq 2^{-d}$. In this case we can immediately stop and return the trivial decomposition~$\set{(X, Y, A)}$ (\cref{alg:edge-decomposition:line:sparse,alg:edge-decomposition:line:sparse-return}). Otherwise, we can apply \cref{lem:good-rect} to compute a good rectangle~\makebox{$X^* \times Y^* \subseteq X \times Y$}. We then take $(X^*, Y^*, A[X^*, Y^*])$ as one piece of the decomposition, and recurse on the remaining graph where we remove all edges in~\makebox{$X^* \times Y^*$} (\cref{alg:edge-decomposition:line:reg-rect,alg:edge-decomposition:line:recur}; in the pseudocode by slight abuse of notation we denote the remaining graph by $A - A[X^*, Y^*]$).

\begin{algorithm}[t]
\caption{Implements the algorithm from \cref{thm:edge-decomposition}.} \label{alg:edge-decomposition}
\begin{algorithmic}[1]
    \Procedure{\ADecomposition}{$X, Y, A, \epsilon, d$}
        \If{$\Ex[A] \leq 2^{-d}$} \label{alg:edge-decomposition:line:sparse}
            \State\Return $\set{(X, Y, A)}$ \label{alg:edge-decomposition:line:sparse-return}
        \EndIf
        \State Compute $X^*, Y^* \gets \RegRect(X, Y, A, \epsilon, d)$ \label{alg:edge-decomposition:line:reg-rect}
        \State\Return $\set{(X^*, Y^*, A[X^*, Y^*])} \cup \ADecomposition(X, Y, A - A[X^*, Y^*], \epsilon, d)$ \label{alg:edge-decomposition:line:recur}
    \EndProcedure
\end{algorithmic}
\end{algorithm}

It is easy to see that Property~1 holds: We either cover $A$ entirely in the base case, or we cover some part $X^* \times Y^*$ and remove that part in the recursive call. Moreover, Property~2 easily follows from the guarantee of \cref{lem:good-rect}. It remains to prove Properties~3 and 4, and to analyze the running time.

\paragraph{Correctness of Property~3}
For a collection of tuples $\mathcal S$ as returned by the algorithm, let us define
\begin{equation*}
    C(\mathcal S) = \sum_{(X', Y', A') \in \mathcal S} |X'| \, |Y'|.
\end{equation*}
Our goal is to prove that $C(\mathcal S) \leq (d + 2) \cdot |X| \, |Y|$, for any input $(X, Y, A)$, where $\mathcal S$ is the set returned by the algorithm. This is clear whenever $\Ex[A] \leq 2^{-d}$ (as then the algorithm returns the trivial partition $\set{(X, Y, A)}$). By induction we prove that whenever $\Ex[A] \geq 2^{-d}$ then
\begin{equation*}
    C(\mathcal S) \leq (d + 2 - \log(\Ex[A]^{-1})) \cdot |X| \, |Y|.
\end{equation*}
Let $\mathcal S^*$ denote the output of the recursive call $\ADecomposition(X, Y, A - A[X^*, Y^*], \epsilon, d)$, and let $\delta^*$ denote the density of $A - A[X^*, Y^*]$. Clearly, $C(\mathcal S) = |X^*| \, |Y^*| + C(\mathcal S^*)$. If $\delta^* \leq 2^{-d}$, then $C(\mathcal S^*) \leq |X| \, |Y|$ and thus $C(\mathcal S) \leq 2 |X| \, |Y|$ as claimed. Otherwise, recall that \cref{lem:good-rect} guarantees that the density of the rectangle $X^* \times Y^*$ does not decrease, $\Ex[A[X^*, Y^*]] \geq \Ex[A]$. It follows that \smash{$\delta^* \geq \Ex[A] - \Ex[A] \cdot \frac{|X^*| \, |Y^*|}{|X| \, |Y|}$}, and thus by induction:
\begin{align*}
    C(\mathcal S) &= |X^*| \, |Y^*| + C(\mathcal S^*) \\
    &\leq |X^*| \, |Y^*| + (d + 2 - \log((\delta^*)^{-1}) \cdot |X| \, |Y| \\
    &\leq |X^*| \, |Y^*| + (d + 2 - \log(\Ex[A]^{-1}) - \log((1 - \tfrac{|X^*| \, |Y^*|}{|X| \, |Y|})^{-1})) \cdot |X| \, |Y| \\
    &\leq |X^*| \, |Y^*| + (d + 2 - \log(\Ex[A]^{-1}) - \tfrac{|X^*| \, |Y^*|}{|X| \, |Y|}) \cdot |X| \, |Y| \\
    &= (d + 2 - \log(\Ex[A]^{-1})) \cdot |X| \, |Y|.
\end{align*}

\paragraph{Correctness of Property~4}
Note that $L$ is the recursion depth of the algorithm. To prove that~$L$ is bounded as claimed, we argue that with every recursive call the density of the matrix decreases. Specifically, each output $X^*, Y^*$ has size at least $|X^*| \, |Y^*| \geq \exp(-d^3 \poly(\epsilon^{-1})) \cdot |X| \cdot |Y|$ by \cref{lem:good-rect}, and the density of the submatrix $A^* = A[X^*, Y^*]$ satisfies~\smash{$\Ex[A^*] \geq \Ex[A] \geq 2^{-d}$} (as otherwise, if $\Ex[A] \leq 2^{-d}$, the algorithm had terminated already). 
Thus, each recursive call reduces the density of the graph by~\smash{$\Ex[A^*] \cdot \frac{|X^*| \, |Y^*|}{|X| \, |Y|} \geq 2^{-d} \cdot \exp(-d^3 \poly(\epsilon^{-1})) = \exp(-d^3 \poly(\epsilon^{-1}))$}, 
and the algorithm necessarily terminates after $L \leq \exp(d^3 \poly(\epsilon^{-1}))$ recursive calls.

\paragraph{Running Time}
We have already bounded the recursion depth in the previous paragraph, so focus on a single execution. The dominant cost is the call to $\RegRect$ (\cref{lem:good-rect}) which takes time $n^2 \cdot \exp(d^3 \poly(\epsilon^{-1}))$. The total time is $L \cdot n^2 \cdot \exp(d^3 \poly(\epsilon^{-1})) = n^2 \cdot \exp(d^3 \poly(\epsilon^{-1}))$.
\end{proof}

\subsection{\texorpdfstring{\boldmath$AB$}{AB}-Decomposition}
We finally turn to the proof of \cref{thm:2-path-decomposition}. The outline, as discussed in \cref{sec:overview}, follows the previous subsection on a high-level (but differs in many more difficult technical aspects). We first devise a method to find \emph{good cube} via density increments (see \cref{lem:good-cube}), and then derive the decomposition via density decrements that repeatedly remove good cubes (see \cref{thm:2-path-decomposition}). 

\begin{lemma}[Finding a Good Cube] \label{lem:good-cube}
Let $A \in \set{0, 1}^{X \times Y}, B \in \set{0, 1}^{Y \times Z}$, let $\epsilon \in (0, 1), \gamma \in (0, \frac12)$ and~\makebox{$d \geq 1$} and assume that $\Ex[B] \geq 2^{-d}$. 
There is an algorithm $\RegCube(X, Y, Z, A, B, \epsilon, \gamma, d)$ computing sets~$Y^* \subseteq Y$, $Z^* \subseteq Z$ and $\set{(X_\ell, Y_\ell, Z_\ell, A_\ell)}_{\ell=1}^L$ with $X_\ell \subseteq X$,~$Y_\ell \subseteq Y^*$,~\makebox{$Z_\ell \subseteq Z^*$} and~$A_\ell \in \set{0, 1}^{X_\ell \times Y_\ell}$ such that
\begin{enumerate}
    \item $A[X, Y^*] = \sum_{\ell=1}^L A_\ell$.
    \item For all $\ell \in [L]$, writing $B_\ell = B[Y_\ell, Z_\ell]$:
    \begin{itemize}
        \item $\Ex[A_\ell] \leq 2^{-d}$, or
        \item $A_\ell$ and $B_\ell^T$ are both $(\epsilon, 2, d)$-regular and $\epsilon$-min-degree. 
    \end{itemize}
    \item $\Ex[B[Y^*, Z^*]] \geq \Ex[B]$.
    \item $\sum_{\ell=1}^L |X_\ell| \, |Y_\ell| \, |Z_\ell| \leq (d + 2) \cdot |X| \, |Y^*| \, |Z^*|$
    \item $\sum_{\ell=1}^L |X_\ell| \, |Y_\ell| \, |Z^* \setminus Z_\ell| \leq \gamma (d + 2) \cdot |X| \, |Y^*| \, |Z^*|$.
    \item $|Y^*| \, |Z^*| \geq \exp(-d^4 \gamma^{-1} \poly(\epsilon^{-1})) \cdot |Y| \, |Z|$.
    \item $L \leq \exp(d^3 \poly(\epsilon^{-1}))$.
\end{enumerate}
The algorithm is deterministic and runs in time $n^2 \cdot \gamma^{-1} \exp(d^3 \poly(\epsilon^{-1}))$ (for $n = |X| + |Y| + |Z|$).
\end{lemma}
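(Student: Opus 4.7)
The algorithm will be a recursive procedure combining density-increment techniques on $B$ with the $A$-decomposition of \cref{thm:edge-decomposition}. It proceeds in three stages: (i) find a ``base rectangle'' $Y^* \times Z^*$ on which $B[Y^*, Z^*]^T$ is $(\epsilon', 2, d)$-regular and $\epsilon'$-min-degree, mirroring the good-rectangle recipe of \cref{lem:good-rect} applied to $B^T$ (alternately calling $\MinDegree$ and $\Sift$ on $B^T[Z,Y]$); (ii) call $\ADecomposition(X, Y^*, A[X, Y^*], \epsilon, d)$ to decompose $A[X, Y^*] = \sum_{\ell=1}^{L} A_\ell$ into pieces $\{(X_\ell, Y_\ell, A_\ell)\}$; and (iii) for each piece with $\Ex[A_\ell] > 2^{-d}$, certify $(\epsilon'', 2, d)$-regularity of $B[Y_\ell, Z^*]^T$ via $\Sift$ and then apply $\MinDegree$ to $B[Y_\ell, Z^*]^T$ with size parameter $\gamma$ to extract $Z_\ell \subseteq Z^*$ with $|Z_\ell| \geq (1 - \gamma) |Z^*|$. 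At any point in stages~(i) or~(iii), if $\Sift$ or $\MinDegree$ fails and instead returns a strictly denser subrectangle $Y' \times Z' \subseteq Y^* \times Z^*$ of $B$, the algorithm simply recurses on $\RegCube(X, Y', Z', A, B, \epsilon, \gamma, d)$ and returns its output; otherwise, once all pieces pass, it outputs $(Y^*, Z^*, \{(X_\ell, Y_\ell, Z_\ell, A_\ell)\}_{\ell=1}^L)$.

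\textbf{Main obstacle.} The subtle step is to argue that once $\MinDegree$ trims $Z^*$ down to $Z_\ell$ (unavoidable since, as explained in \cref{sec:overview:sec:AB-decomposition}, one cannot simultaneously enforce regularity \emph{and} left-min-degree on $B[Y_\ell, Z^*]^T$), the restricted matrix $B_\ell^T = B[Y_\ell, Z_\ell]^T$ is still $(\epsilon, 2, d)$-regular. The pointwise inequality
\begin{equation*}
    \norm{B_\ell^T}_{U(2, d)}^{2 d} \cdot |Z_\ell|^2 \leq \norm{B[Y_\ell, Z^*]^T}_{U(2, d)}^{2 d} \cdot |Z^*|^2
\end{equation*}
inflates the grid norm by a factor of at most $(1 - \gamma)^{-1/d}$, and the density $\Ex[B_\ell^T]$ can similarly only grow relative to $\Ex[B[Y_\ell, Z^*]^T]$. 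I would therefore choose the inner parameters $\epsilon', \epsilon'', \epsilon'''$ all strictly below $\epsilon$ with a slack of order $\gamma/d$ to absorb this inflation, so that the final matrix still satisfies the $(\epsilon, 2, d)$-regularity bound.

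\textbf{Verifying the seven properties.} Properties~1 and~2 follow from Stages~(ii)--(iii) and the slack argument above. Property~3 is preserved as an invariant through all density-increment recursions, since both $\MinDegree$ and $\Sift$ either return a denser subrectangle (strictly increasing $\Ex[B]$) or terminate without decreasing density. Property~4 reduces to $|Z_\ell| \leq |Z^*|$ combined with Property~3 of \cref{thm:edge-decomposition}, and Property~5 similarly follows from $|Z^* \setminus Z_\ell| \leq \gamma |Z^*|$ combined with the same. Property~7 is inherited directly from Property~4 of \cref{thm:edge-decomposition}.

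\textbf{Property~6 and running time.} For the size lower bound, I would bound the recursion depth by observing that each density increment multiplies $\Ex[B]$ by at least $1 + \Omega(\gamma \epsilon)$ (the $\gamma$ factor is the weak point of $\MinDegree$), yielding depth $\Order(d \gamma^{-1} \epsilon^{-1})$; each level shrinks $|Y| \, |Z|$ by at most the worst-case factor $\exp(d^3 \poly(\epsilon^{-1}))$ arising from $\Sift$ on $B^T$ with $k = 2, \ell = d$, which compounds to the claimed $\exp(-d^4 \gamma^{-1} \poly(\epsilon^{-1}))$. The running time bound follows by multiplying this recursion depth by the per-level cost of $\ADecomposition$ and the $\Sift$/$\MinDegree$ calls, with a single extra $\gamma^{-1}$ factor absorbed into the depth.
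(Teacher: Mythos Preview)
There is a genuine gap in your recursion invariant. In Stage~(iii), when $\Sift$ or $\MinDegree$ applied to $B[Y_\ell, Z^*]^T$ fails and returns a denser subrectangle, that subrectangle has density at least $(1+\Omega(\epsilon\gamma))\cdot\Ex[B[Y_\ell, Z^*]]$ --- but you have \emph{no control over $\Ex[B[Y_\ell, Z^*]]$ relative to $\Ex[B]$}. The set $Y_\ell$ is produced by $\ADecomposition$ and can be an arbitrary subset of $Y^*$; there is no reason its induced $B$-density should be close to $\Ex[B]$. So the ``density increment'' you recurse on may not be an increment over $\Ex[B]$ at all, and the invariant for Property~3, the recursion-depth bound for Property~6, and the running time all break simultaneously. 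Your Stage~(i) does not help here: making $B[Y^*,Z^*]^T$ $\epsilon'$-min-degree via the good-rectangle recipe on $B^T$ yields min-degree on the $Z$-side, whereas what you need to control $\Ex[B[Y_\ell, Z^*]]$ is min-degree on the $Y$-side.

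The paper's fix is exactly this missing ingredient. It drops your Stage~(i) entirely and instead begins with a single call $\MinDegree(Y, Z, B, \tfrac{\epsilon\gamma}{2}, \tfrac12)$, which either recurses or makes $B[Y',Z]$ $\tfrac{\epsilon\gamma}{2}$-min-degree \emph{on the $Y$-side}. This guarantees that for every $Y_\ell \subseteq Y'$ one has $\Ex[B[Y_\ell, Z]] \geq (1-\tfrac{\epsilon\gamma}{2})\Ex[B]$, so the increments found in Stage~(iii) are genuine increments over $\Ex[B]$. A second difference: the paper runs $\MinDegree$ \emph{before} $\Sift$ in Stage~(iii) (i.e., first extract $Z_\ell$, then sift $B[Y_\ell, Z_\ell]^T$), which sidesteps your ``main obstacle'' entirely --- no slack argument on the grid norm is needed, and no implicit constraint like $\gamma \lesssim \epsilon d$ is introduced.
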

\begin{proof}
Let us start with a description of the algorithm; for the pseudocode see \cref{alg:reg-cube}. We first call $\MinDegree(Y, Z, B, \frac{\epsilon\gamma}{2}, \frac12)$ (\cref{lem:min-degree}) to compute some $Y' \subseteq Y$. Write $A' \gets A[X, Y']$ and $B' \gets B[Y', Z]$. \cref{lem:min-degree} states that either $B'$ is $\frac{\epsilon\gamma}{2}$-min-degree, or~\makebox{$\Ex[B'] \geq (1 + \frac{\epsilon\gamma}{4}) \Ex[B]$}. In the latter case we simply recurse on the subinstance induced by $X, Y', Z$ (\crefrange{alg:reg-cube:line:min-degree-left}{alg:reg-cube:line:min-degree-left-recur}).

Next, we run \cref{thm:edge-decomposition} to compute an edge decomposition $\set{(X_\ell, Y_\ell, A_\ell)}_{\ell=1}^L$ of $(X, Y', A')$ (\cref{alg:reg-cube:line:edge-decomposition}). The hope is that, for all pieces $\ell \in [L]$, the induced graphs $B[Y_\ell, Z]$ also satisfy the min-degree and regularity conditions, in which case we could return the decomposition without changes. To ensure both, we enumerate each $\ell \in [L]$ (\cref{alg:reg-cube:line:loop}). By calling $\MinDegree(Z, Y_\ell, B^T[Z, Y_\ell], \epsilon, \gamma)$ (\cref{lem:min-degree}) we compute a set $Z_\ell \subseteq Z$ such that, writing $B_\ell \gets B[Y_\ell, Z_\ell]$, either $B_\ell^T$ is $\epsilon$-min-degree or $B_\ell$ has increased density. The former case is exactly the desired min-degree condition, and in the latter case we again recurse on the subinstance induced by $X, Y_\ell, Z_\ell$ (\crefrange{alg:reg-cube:line:min-degree-right}{alg:reg-cube:line:min-degree-right-recur}). It remains to ensure regularity. To this end we call $\Sift(Z_\ell, Y_\ell, B_\ell^T, \epsilon, 2, d)$ (\cref{thm:sifting}), which either certifies that $B_\ell^T$ is $(\epsilon, 2, d)$-regular, or finds $Z'' \subseteq Z_\ell$ and $Y'' \subseteq Y_\ell$ such that the density of the induced subgraph $B_\ell[Y'', Z'']$ increases. In the latter case, again, we recurse (\cref{alg:reg-cube:line:sifting,alg:reg-cube:line:sifting-recur}).

If after all these tests the algorithm has not recursed, we finally return $Y^* \gets Y'$, $Z^* \gets Z$ and the collection $\set{(X_\ell, Y_\ell, Z_\ell, A_\ell)}_{\ell=1}^L$ (\cref{alg:reg-cube:line:return}).

\begin{algorithm}[t]
\caption{Implements the algorithm from \cref{lem:good-cube}.} \label{alg:reg-cube}
\begin{algorithmic}[1]
    \Procedure{\RegCube}{$X, Y, Z, A, B, \epsilon, \gamma, d$}
        \State Compute $Y' \gets \MinDegree(Y, Z, B, \frac{\epsilon \gamma}{2}, \frac12)$ and let $A' \gets A[X, Y']$ and $B' \gets B[Y', Z]$ \label{alg:reg-cube:line:min-degree-left}
        \If{$\Ex[B'] \geq (1 + \frac{\epsilon \gamma}{4}) \Ex[B]$} \label{alg:reg-cube:line:min-degree-left-dense}
            \State\Return $\RegCube(X, Y', Z, A', B', \epsilon, \gamma, d)$ \label{alg:reg-cube:line:min-degree-left-recur}
        \EndIf
        \State Compute $\set{(X_\ell, Y_\ell, A_\ell)}_{\ell=1}^L \gets \ADecomposition(X, Y', A', \epsilon, d)$ \label{alg:reg-cube:line:edge-decomposition}
        \ForEach{$\ell \in [L]$} \label{alg:reg-cube:line:loop}
            \State Compute $Z_\ell \gets \MinDegree(Z, Y_\ell, B^T[Z, Y_\ell], \epsilon, \gamma)$ and let $B_\ell \gets B[Y_\ell, Z_\ell]$ \label{alg:reg-cube:line:min-degree-right}
            \If{$\Ex[B_\ell] \geq (1 + \epsilon\gamma) \Ex[B[Y_\ell, Z]]$} \label{alg:reg-cube:line:min-degree-right-dense}
                \State\Return $\RegCube(X, Y_\ell, Z_\ell, A[X, Y_{\ell}], B_\ell, \epsilon, \gamma, d)$ \label{alg:reg-cube:line:min-degree-right-recur} 
            \EndIf
            \If{$\Sift(Z_\ell, Y_\ell, B_\ell^T, \epsilon, 2, d)$ returns a denser rectangle $Z'' \times Y'' \subseteq Z_\ell \times Y_\ell$} \label{alg:reg-cube:line:sifting}
                \State\Return $\RegCube(X, Y'', Z'', A[X, Y''], B[Y'', Z''], \epsilon, \gamma, d)$ \label{alg:reg-cube:line:sifting-recur}
            \EndIf
        \EndForEach
        \State\Return $Y', Z, \set{(X_\ell, Y_\ell, Z_\ell, A_\ell)}_{\ell=1}^L$ \label{alg:reg-cube:line:return}
    \EndProcedure
\end{algorithmic}
\end{algorithm}

\medskip
Some properties of the algorithm are easy to prove. For instance, Properties~1 and~7 follow immediately from \cref{thm:edge-decomposition}. Property~2 is easy to prove as well: \cref{thm:edge-decomposition} implies that for each $\ell \in [L]$, we have that $\Ex[A_\ell] \leq 2^{-d}$ or that $A_\ell$ is $\epsilon$-min-degree and $(\epsilon, 2, d)$-regular. In addition, the algorithm only terminates and returns an output after certifying that, for all $\ell \in [L]$, $B_\ell^T$ is $\epsilon$-min-degree and $(\epsilon, 2, d)$-regular. The other properties require more work. We start with the following claim:

\begin{claim*}
The algorithm only recurses on subgraphs of $B$ with density at least $(1 + \frac{\epsilon\gamma}{4}) \Ex[B]$.
\end{claim*}
\begin{proof}
If the algorithm recurses in \cref{alg:reg-cube:line:min-degree-left-dense,alg:reg-cube:line:min-degree-left-recur}, then the claim is immediate. After passing \cref{alg:reg-cube:line:min-degree-left-recur}, \cref{lem:min-degree} guarantees that the graph $B'$ is $\frac{\epsilon\gamma}{2}$-min-degree. From this min-degree condition we know that, for any set~\makebox{$Y'' \subseteq Y'$}, the subgraph $B[Y'', Z]$ has density at least $(1 - \frac{\epsilon\gamma}{2}) \Ex[B]$. In particular, if the algorithm recurses in \cref{alg:reg-cube:line:min-degree-left-recur} then we recurse on a subgraph of density
\begin{equation*}
    \Ex[B_\ell] \geq (1 + \epsilon \gamma) \Ex[B[Y_\ell, Z]] \geq (1 + \epsilon \gamma) (1 - \tfrac{\epsilon\gamma}{2}) \Ex[B] \geq (1 + \tfrac{\epsilon\gamma}{4}) \Ex[B];
\end{equation*}
here in the last step we used that $\gamma \in (0, \frac12)$ and $\epsilon \in (0, 1)$. Similarly, if the algorithm recurses in \cref{alg:reg-cube:line:sifting-recur} then the density is at least $(1 + \tfrac{\epsilon}{2}) \Ex[B_\ell] \geq (1 + \tfrac{\epsilon}{2}) \Ex[B[Y_\ell, Z]] \geq (1 + \tfrac{\epsilon\gamma}{4}) \Ex[B]$.
\end{proof}

\paragraph{Correctness of Property 3}
With the previous claim in mind we can easily conclude that Property~3 holds. If the algorithm terminates without recurring, then $\Ex[B[Y^*, Z^*]] \geq \Ex[B]$ follows directly from \cref{lem:min-degree}. If the algorithm recurses, then the statement follows by induction using that the density never decreases.

\paragraph{Correctness of Properties~4 and~5}
Observe that Property~4 is immediate by \cref{thm:edge-decomposition}. Furthermore, to prove Property~5 it suffices to check that $|Z^* \setminus Z_\ell| \leq \gamma |Z^*|$ for all $\ell \in [L]$. This indeed holds by \cref{lem:min-degree}.

\paragraph{Correctness of Property~6}
First note that in the base case, if the algorithm terminates without recurring, then $Y^* = Y'$, $Z^* = Z$ and thus $|Y^*| \, |Z^*| \geq \frac12 \cdot |Y| \, |Z|$. Next consider the recursive cases. In \cref{alg:reg-cube:line:min-degree-left-recur} we possibly recurse on the subgraph $B[Y', Z]$, where \smash{$|Y'| \geq \frac{1}{2} \cdot |Y|$}. In \cref{alg:reg-cube:line:min-degree-right-recur} we possibly recurse on the subgraph $B[Y_\ell, Z_\ell]$, where $|Y_\ell| \geq \exp(-d^3 \poly(\epsilon^{-1})) \cdot |Y'|$ (by \cref{thm:edge-decomposition}) and~\smash{$|Z_\ell| \geq (1 - \gamma) \cdot |Z| \geq \frac{1}{2} \cdot |Z|$} (by \cref{lem:min-degree}). Finally, in \cref{alg:reg-cube:line:sifting-recur} we possibly recurse on a subgraph~$B[Y'', Z'']$ of size at least $|Y''| \, |Z''| \geq \frac{\epsilon}{16} \cdot 2^{-2d^2} \cdot |Y_\ell| \, |Z_\ell|$ (by \cref{thm:sifting}). In either case, the size of the subgraph is at least \smash{$\exp(-d^3 \poly(\epsilon^{-1})) \cdot |X| \, |Y|$}. Recall further that by the claim, the density increases multiplicatively by $1 + \frac{\epsilon\gamma}{4}$ with every recursive call. Therefore, and since the initial density is $\Ex[B] \geq 2^{-d}$, the algorithm returns sets $Y^*, Z^*$ with size
\begin{equation*}
    \exp(-d^3 \poly(\epsilon^{-1}))^{\log_{1 + \frac{\epsilon\gamma}{4}}(\Ex[B]^{-1})} \cdot |Y| \, |Z| \geq \exp(-d^4 \gamma^{-1} \poly(\epsilon^{-1})) \cdot |Y| \, |Z|.
\end{equation*}
Here we used that $\log(1 + \epsilon) \geq \epsilon$ for all $\epsilon \in (0, 1)$.

\paragraph{Running Time}
We finally analyze the running time of the algorithm. As argued before, the recursion depth is bounded by $\Order(d \epsilon^{-1} \gamma^{-1})$. In each execution of \cref{alg:reg-cube} we call $\ADecomposition$ once which takes time $n^2 \cdot \exp(d^3 \poly(\epsilon^{-1}))$. In addition we call $\Sift$ $L$ times taking time~$L \cdot n^2 \cdot \exp(d^2 \poly(\epsilon^{-1}))$. The total running time becomes $n^2 \cdot \gamma^{-1} \exp(d^3 \poly(\epsilon^{-1}))$.
\end{proof}

\thmtwopathdecomposition*
\begin{proof}
We start with the description of the algorithm; see \cref{alg:2-path-decomposition} for the pseudocode. Throughout we assume an additional input parameter $0 \leq h \leq d$ which acts somewhat as the recursion depth of the algorithm. For the initial call, we set $h = 0$.

The algorithm has two bases cases. If $B$ is sufficiently sparse, $\Ex[B] \leq 2^{-d}$, then we return the trivial decomposition $\set{(X, Y, Z, A, B)}$ (\cref{alg:2-path-decomposition:line:sparse,alg:2-path-decomposition:line:sparse-return}). Moreover, if the algorithm has reached recursion depth $h = d$, then we return a trivial decomposition of size at most $2^d$. Specifically, we split $B$ into submatrices $B_1, \dots, B_{2^d} \in \set{0, 1}^{Y \times Z}$ each of density at most $2^{-d}$ and return the decomposition $\set{(X, Y, Z, A, B_i)}_{i=1}^{2^d}$ (\crefrange{alg:2-path-decomposition:line:depth}{alg:2-path-decomposition:line:depth-return}).

Otherwise, run \smash{$\RegCube(X, Y, Z, A, B, \epsilon, \gamma, d)$} (\cref{lem:good-cube}) with the parameter \smash{$\gamma = \frac{1}{2(d+2)^2}$}. This output consists of sets $Y^* \subseteq Y$, $Z^* \subseteq Z$ and a set $\set{(X_\ell, Y_\ell, Z_\ell, A_\ell)}_{\ell=1}^L$ (\cref{alg:2-path-decomposition:line:reg-cube}). For each~\makebox{$\ell \in [L]$} we write $B_\ell \gets B[Y_\ell, Z_\ell]$ and $B_\ell' \gets B[Y_\ell, Z^* \setminus Z_\ell]$. We then recursively compute, for each $\ell \in [L]$, the decomposition $\mathcal S_\ell$ of $(X_\ell, Y_\ell, Z^* \setminus Z_\ell, A_\ell, B_\ell')$ (\crefrange{alg:2-path-decomposition:line:loop}{alg:2-path-decomposition:line:loop-recur}). Moreover, we recursively compute the decomposition $\mathcal S^*$ of $(X, Y, Z, A, B - B[Y^*, Z^*])$ (\cref{alg:2-path-decomposition:line:recur}; here, we denote by $B - B[Y^*, Z^*]$ the matrix obtained from $B$ after zeroing out the entries in $B[Y^*, Z^*]$). Finally, we return $\bigcup_{\ell=1}^L \set{(X_\ell, Y_\ell, Z_\ell, A_\ell, B_\ell)} \cup \bigcup_{\ell=1}^L \mathcal S_\ell \cup \mathcal S^*$ (\cref{alg:2-path-decomposition:line:return}).

\begin{algorithm}[t]
\caption{Implements the algorithm from \cref{thm:2-path-decomposition}.} \label{alg:2-path-decomposition}
\begin{algorithmic}[1]
    \Procedure{\ABDecompositionp}{$X, Y, Z, A, B, \epsilon, d, h$}
        \If{$\Ex[B] \leq 2^{-d}$} \label{alg:2-path-decomposition:line:sparse}
            \State\Return $\set{(X, Y, Z, A, B)}$ \label{alg:2-path-decomposition:line:sparse-return}
        \EndIf
        \If{$h = d$} \label{alg:2-path-decomposition:line:depth}
            \State Arbitrarily partition $B$ into submatrices $B_1, \dots, B_{2^d} \in \set{0, 1}^{Y \times Z}$ of density at most $2^{-d}$\label{alg:2-path-decomposition:line:depth-trivial-decomp}
            \State\Return \smash{$\set{(X, Y, Z, A, B_i)}_{i=1}^{2^d}$}\label{alg:2-path-decomposition:line:depth-return}
        \EndIf
        \State Compute \smash{$Y^*, Z^*, \set{(X_\ell, Y_\ell, Z_\ell, A_\ell)}_{\ell=1}^L \gets \RegCube(X, Y, Z, A, B, \epsilon, \frac{1}{2(d+2)^2}, d)$} \label{alg:2-path-decomposition:line:reg-cube}
        \ForEach{$\ell \in [L]$} \label{alg:2-path-decomposition:line:loop}
            \State Let $B_\ell \gets B[Y_\ell, Z_\ell]$ and $B_\ell' \gets B[Y_\ell, Z^* \setminus Z_\ell]$ \label{alg:2-path-decomposition:line:loop-names}
            \State Compute $\mathcal S_\ell \gets \ABDecompositionp(X_\ell, Y_\ell, Z^* \setminus Z_\ell, A_\ell, B_\ell', \epsilon, d, h + 1)$ \label{alg:2-path-decomposition:line:loop-recur}
        \EndForEach
        \State Compute $\mathcal S^* \gets \ABDecompositionp(X, Y, Z, A, B - B[Y^*, Z^*], \epsilon, d, h)$ \label{alg:2-path-decomposition:line:recur}
        \State\Return $\bigcup_{\ell=1}^L \set{(X_\ell, Y_\ell, Z_\ell, A_\ell, B_\ell)} \cup \bigcup_{\ell=1}^L \mathcal S_\ell \cup \mathcal S^*$ \label{alg:2-path-decomposition:line:return}
    \EndProcedure

    \bigskip
    \Procedure{\ABDecomposition}{$X, Y, Z, A, B, \epsilon, d$}
        \State\Return $\ABDecompositionp(X, Y, Z, A, B, \epsilon, d, 0)$
    \EndProcedure
\end{algorithmic}
\end{algorithm}

\paragraph{Correctness of Property 1}
For a set $\mathcal S$ as returned by the algorithm, we write
\begin{equation*}
    \Sigma(\mathcal S) = \sum_{(X', Y', Z', A', B') \in \mathcal S} A' B',
\end{equation*}
where, as in the theorem statement, we interpret each term $A' B'$ in the sum as an $X \times Z$-matrix by extending with zeros. The goal is to prove that $\Sigma(\mathcal S) = A B$, where $\mathcal S$ is the set returned by the algorithm on input $(X, Y, Z, A, B)$. This is clear in both base cases. So assume that the algorithm recurses. Then by induction:
\begin{align*}
    \Sigma(\mathcal S) &= \sum_{\ell=1}^L A_\ell B_\ell + \sum_{\ell=1}^L \Sigma(\mathcal S_\ell) + \Sigma(\mathcal S^*) \\
    &= \sum_{\ell=1}^L A_\ell B_\ell + \sum_{\ell=1}^L A_\ell B_\ell' + A B - A[X, Y^*] B[Y^*, Z^*] \\
    &= \sum_{\ell=1}^L A_\ell B[Y_\ell, Z^*] + A B - A[X, Y^*] B[Y^*, Z^*] \\[1ex]
    &= A[X, Y^*] B[Y^*, Z^*] + A B - A[X, Y^*] B[Y^*, Z^*] \\[1ex]
    &= AB;
\end{align*}
here, in the second-to-last step we have applied Property~1 of \cref{lem:good-cube}.
%

\paragraph{Correctness of Property 2}
In both base cases we return partitions in which all parts $B_k$ are sparse, $\Ex[B_k] \leq 2^{-d}$. In the recursive case the output consists of the union of three different sets: For each $(X_\ell, Y_\ell, Z_\ell, A_\ell, B_\ell)$ the claim follows from Property~2 of \cref{lem:good-cube}, and for each element in $\mathcal S_\ell$ or $\mathcal S^*$ the claim holds by induction.

\paragraph{Correctness of Property 3}
For a collection $\mathcal S$ of tuples as returned by the algorithm, let us write
\begin{equation*}
    C(\mathcal S) = \sum_{(X', Y', Z', A', B') \in \mathcal S} |X'| \, |Y'| \, |Z'|.
\end{equation*}
Our goal is to prove that $C(\mathcal S) \leq 2^{h+1} \cdot (d + 2)^2 \cdot |X| \, |Y| \, |Z|$, where $\mathcal S$ is the output of our algorithm with inputs $(X, Y, Z, A, B, \epsilon, d, h)$. In the sparse case, if $\Ex[B] \leq 2^{-d}$, then we return the trivial decomposition with $C(\mathcal S) = |X| \, |Y| \, |Z|$. We prove by induction that otherwise the following bound applies:
\begin{equation*}
    C(\mathcal S) \leq 2^{h+1} \cdot (d + 2) (d + 1 - \log(\Ex[B]^{-1})) \cdot |X| \, |Y| \, |Z|
\end{equation*}
Clearly, this upper bound is true if $h = d$, so suppose that $h < d$. Then the algorithm recurses and we report $\mathcal S$ with $C(\mathcal S) = \sum_{\ell=1}^L |X_\ell| \, |Y_\ell| \, |Z_\ell| + \sum_{\ell=1}^L C(\mathcal S_\ell) + C(\mathcal S^*)$. In the following we bound these three contributions individually.

For the first contribution we readily exploit Property~4 of \cref{lem:good-cube}:
\begin{equation} \label{thm:2-path-decomposition:eq:1}
    \sum_{\ell=1}^L |X_\ell| \, |Y_\ell| \, |Z_\ell| \leq (d + 2) \cdot |X| \, |Y^*| \, |Z^*| \leq 2^h \cdot (d + 2) \cdot |X| \, |Y^*| \, |Z^*|.
\end{equation}
For the second contribution we exploit Property~5 of \cref{lem:good-cube}:
\begin{align}
    \sum_{\ell=1}^L C(\mathcal S_\ell)
    &\leq \sum_{\ell=1}^L 2^{h+1} \cdot (d + 2)^2 \cdot |X_\ell| \, |Y_\ell| \, |Z^* \setminus Z_\ell| \nonumber \\[.5ex]
    &\leq 2^{h+1} \cdot \gamma \cdot (d + 2)^3 \cdot |X| \, |Y^*| \, |Z^*| \nonumber \\[.5ex]
    &\leq 2^h \cdot (d + 2) \cdot |X| \, |Y^*| \, |Z^*|. \label{thm:2-path-decomposition:eq:2}
\end{align}
Here, in the last step, we have used our choice of \smash{$\gamma = \frac{1}{2(d + 2)^2}$}. For the third contribution, we distinguish two cases. If the density $\delta^*$ of the subgraph $B - B[Y^*, Z^*]$ is smaller than $2^{-d}$, then $C(\mathcal S^*) \leq |X| \, |Y| \, |Z|$. Otherwise, we have \smash{$\delta^* \leq \Ex[B] \cdot (1 - \frac{|Y^*| \, |Z^*|}{|Y| \, |Z|})$} (since $\Ex[B[Y^*, Z^*]] \geq \Ex[B]$ by Property~3 of \cref{lem:good-cube}) and therefore:
\begin{align}
    C(\mathcal S^*)
    &\leq 2^{h+1} \cdot (d + 2)(d + 1 - \log(\Ex[B]^{-1} \cdot (1 - \tfrac{|Y^*| \, |Z^*|}{|Y| \, |Z|})^{-1})) \cdot |X| \, |Y| \, |Z| \nonumber \\
    &\leq 2^{h+1} \cdot (d + 2)(d + 1 - \log(\Ex[B]^{-1}) + \log(1 - \tfrac{|Y^*| \, |Z^*|}{|Y| \, |Z|})) \cdot |X| \, |Y| \, |Z| \nonumber \\
    &\leq 2^{h+1} \cdot (d + 2)(d + 1 - \log(\Ex[B]^{-1}) - \tfrac{|Y^*| \, |Z^*|}{|Y| \, |Z|}) \cdot |X| \, |Y| \, |Z| \nonumber \\
    &\leq 2^{h+1} \cdot (d + 2)(d + 1 - \log(\Ex[B]^{-1})) \cdot |X| \, |Y| \, |Z| - 2^{h+1} \cdot (d + 2) \cdot |X| \, |Y^*| \, |Z^*|. \label{thm:2-path-decomposition:eq:3}
\end{align}
Summing over all three contributions~\eqref{thm:2-path-decomposition:eq:1},~\eqref{thm:2-path-decomposition:eq:2} and~\eqref{thm:2-path-decomposition:eq:3} yields the claimed bound on $C(\mathcal S)$.

\paragraph{Correctness of Property 4}
Let $M = \exp(d^4 \gamma^{-1} \poly(\epsilon^{-1}))$ be such that $|Y^*| \, |Z^*| \geq |Y| \, |Z| \,/\, M$ for the sets $Y^*, Z^*$ returned by \cref{lem:good-cube}. Moreover, let $L = \exp(d^3 \poly(\epsilon^{-1}))$ be as in \cref{lem:good-cube}. We prove by induction that
\begin{equation*}
    K = |\mathcal S| \leq 2^d M \cdot (2^d M L + L)^{d+1-h} \cdot \Ex[B],
\end{equation*}
where $\mathcal S$ is the set returned by the algorithm. This bound is easily verified in the two base cases. If the algorithm recurses instead then $|\mathcal S| \leq L + \sum_{\ell=1}^L |\mathcal S_\ell| + |\mathcal S^*|$. By induction we can bound
\begin{equation*}
    |\mathcal S_\ell| \leq 2^d M \cdot (2^d M L + L)^{d-h},
\end{equation*}
and
\begin{align*}
    |\mathcal S^*| &\leq 2^d M \cdot (2^d M L + L)^{d+1-h} \cdot \parens{\Ex[B] - \Ex[B[Y^*, Z^*]] \cdot \tfrac{|Y^*| \, |Z^*|}{|Y| \, |Z|}} \\
    &\leq 2^d M \cdot (2^d M L + L)^{d+1-h} \cdot \parens*{\Ex[B] - \Ex[B[Y^*, Z^*]] \cdot \tfrac{1}{M}} \\
    &\leq 2^d M \cdot (2^d M L + L)^{d+1-h} \cdot \parens*{\Ex[B] - \tfrac{1}{2^d M}},
\end{align*}
using in the last step that $\Ex[B[Y^*, Z^*]] \geq \Ex[B]$ by Property~3 of \cref{lem:good-cube}. Combining these bounds, it follows that
\begin{align*}
    |\mathcal S|
    &\leq L + L \cdot 2^d M \cdot (2^d M L + L)^{d-h} + 2^d M \cdot (2^d M L + L)^{d+1-h} \cdot (\Ex[B] - \tfrac{1}{2^d M}) \\
    &\leq (2^d M L + L)^{d+1-h} - (2^d M L + L)^{d+1-h} + 2^d M \cdot (2^d M L + L)^{d+1-h} \cdot \Ex[B] \\
    &= 2^d M \cdot (2^d M L + L)^{d+1-h} \cdot \Ex[B].
\end{align*}
In the end we plug in values for $L$ and $M$ to obtain $|S| \leq \exp(d^5 \gamma^{-1} \poly(\epsilon^{-1})) = \exp(d^7 \poly(\epsilon^{-1}))$, as stated.

\paragraph{Running Time}
From the previous consideration we also learn that the number of recursive calls is bounded by $\exp(d^7 \poly(\epsilon^{-1}))$. In each recursive call, the dominant step is to call $\RegCube$ in time $n^2 \cdot \exp(d^3 \poly(\epsilon^{-1}))$. Hence, the total time is $n^2 \cdot \exp(d^7 \poly(\epsilon^{-1}))$.
\end{proof}
\section{Boolean Matrix Multiplication and Triangle Detection} \label{sec:triangle-detection}
In this section we formally derive our efficient algorithm for Boolean Matrix Multiplication from the regularity decompositions developed in the previous sections. Our algorithm relies on the following fine-grained reduction from BMM to Triangle Detection due to Vassilevska Williams and Williams~\cite{WilliamsW18}:

\begin{lemma}[Boolean Matrix Multiplication to Triangle Detection, \cite{WilliamsW18}] \label{lem:bmm-to-triangle}
If Triangle Detection is in time $\Order(n^3 / f(n))$ (for some nondecreasing function $f(n)$), then Boolean Matrix Multiplication is in time $\Order(n^3 / f(n^{1/3}))$.
\end{lemma}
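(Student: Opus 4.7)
The plan is to apply the fine-grained reduction of Vassilevska Williams and Williams~\cite{WilliamsW18}, which shows that a triangle detection algorithm of running time $T(n)$ yields a BMM algorithm of running time $\Order(n^2 \cdot T(n^{1/3}))$. Substituting the assumed bound $T(n) = \Order(n^3 / f(n))$ would then give
\[
\Order(n^2 \cdot T(n^{1/3})) \;=\; \Order(n^2 \cdot n / f(n^{1/3})) \;=\; \Order(n^3 / f(n^{1/3})),
\]
which is exactly the claimed running time (using monotonicity of $f$ to handle the ceiling in $n^{1/3}$).

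To describe the reduction, I would view the input matrices $A, B$ as the edge sets of a tripartite graph on parts $X, Y, Z$ of size $n$, so that computing $C = AB$ amounts to determining, for every pair $(x, z) \in X \times Z$, whether there is a length-two path $x$-$y$-$z$. Set $g := n^{1/3}$ and partition each of $X, Y, Z$ into $n/g = n^{2/3}$ blocks of size $g$; this induces $(n/g)^3 = n^2$ ``block triples'' $(X_a, Y_b, Z_c)$, each carrying a tripartite subgraph on $3g = \Theta(n^{1/3})$ vertices. For each block pair $(X_a, Z_c)$, the $g \times g$ output sub-block $C[X_a, Z_c]$ equals the bitwise OR over inner blocks $Y_b$ of the small block products $A[X_a, Y_b] \cdot B[Y_b, Z_c]$. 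The reduction of~\cite{WilliamsW18} shows how, within each block triple, the triangle detection oracle on the induced $\Order(g)$-vertex subgraph can be used---via edge toggling and self-reducibility of triangle detection---to recover all entries of the corresponding small block product in amortized $\Order(T(g))$ time per triple. Summing over the $n^2$ triples and combining via ORs yields $C$ and matches the claimed runtime.

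The main obstacle, and the step I would focus on in a self-contained write-up, is the block-product subroutine that converts a decision-only oracle for triangle detection on an $\Order(g)$-vertex graph into a procedure outputting all $g^2$ entries of the corresponding Boolean matrix product within the same asymptotic budget as a single oracle call (so that the aggregate blow-up is only $\Order(n^2)$). The remaining ingredients---the block decomposition, the bitwise aggregation across inner blocks, and the overhead bookkeeping of $\Order(n^2 \log n)$ for assembling the output---are essentially mechanical. Since the subroutine is precisely the technical content established in~\cite{WilliamsW18}, the cleanest plan is simply to invoke that reduction as a black box and let the arithmetic on $f$ do the rest.
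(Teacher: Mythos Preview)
Your proposal is correct and, in fact, gives more detail than the paper does: the paper does not prove this lemma at all but simply cites it from~\cite{WilliamsW18} and uses it as a black box. Your sketch of the block decomposition and the role of the triangle-to-product subroutine is the standard reduction, so invoking~\cite{WilliamsW18} directly (as you conclude) is precisely what the paper does.
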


\begin{theorem}[Triangle Detection] \label{thm:triangle-detection}
There is a deterministic combinatorial detecting whether a graph contains a triangle in time~\smash{$n^3 / 2^{\Omega(\sqrt[7]{\log n})}$}.
\end{theorem}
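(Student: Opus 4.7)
\medskip
\noindent
\textbf{Proof Plan.} The plan is to instantiate the win-win strategy sketched in the technical overview, using Theorem~\ref{thm:2-path-decomposition} (AB-Decomposition) as the workhorse and Theorem~\ref{thm:regular-product} (Kelley--Lovett--Meka) as the certificate of existence for triangles in the dense case. Given the input tripartite graph $(X,Y,Z,A,B,C)$, fix a small constant $\epsilon = \tfrac{1}{160}$ (so that $80\epsilon = \tfrac12$ in Theorem~\ref{thm:regular-product}) and a parameter $d = \Theta(\sqrt[7]{\log n})$ chosen so that $\exp(d^7 \poly(\epsilon^{-1})) \leq n^{0.1}$. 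First, run $\ABDecomposition(X,Y,Z,A,B,\epsilon,d)$ to obtain pieces $\{(X_k,Y_k,Z_k,A_k,B_k)\}_{k=1}^K$ and set $C_k = C[X_k,Z_k]$. By Property~1 of Theorem~\ref{thm:2-path-decomposition}, every triangle of the input tripartite graph corresponds to a triangle in exactly one piece $(X_k,Y_k,Z_k,A_k,B_k,C_k)$, so it suffices to detect a triangle in each piece.

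For each piece, I would distinguish two cases. If at least one of $A_k$, $B_k$, or $C_k$ has density below the threshold ($\Ex[A_k] \leq 2^{-d}$, $\Ex[B_k] \leq 2^{-d}$, or $\Ex[C_k] \leq 2^{-\epsilon d/2}$), then the piece is \emph{sparse} and I solve triangle detection directly by brute force: enumerate the edges of the sparse matrix and, for each such edge, check the (at most $n$) candidate completions to a triangle. This costs $|X_k|\,|Y_k|\,|Z_k| / 2^{\Omega(d)}$. Otherwise the piece is \emph{dense} and falls into the dream scenario: by Property~2(ii) of Theorem~\ref{thm:2-path-decomposition}, $A_k$ and $B_k^T$ are $(\epsilon,2,d)$-regular and $\epsilon$-min-degree, so Theorem~\ref{thm:regular-product} yields that $A_k \circ B_k$ is $(\Ex[A_k]\Ex[B_k],80\epsilon,2^{-\epsilon d/2})$-uniform. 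Consequently at most a $2^{-\epsilon d/2}$ fraction of the entries of $A_kB_k$ are zero, while $C_k$ has more than a $2^{-\epsilon d/2}$ fraction of nonzero entries, so $A_kB_k$ and $C_k$ share a common nonzero entry $(x,z)$---and any witness $y \in Y_k$ with $A_k(x,y) = B_k(y,z) = 1$ completes a triangle. Thus the algorithm immediately reports YES upon encountering any dense piece and otherwise returns the disjunction of the sparse-piece searches.

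The running time analysis combines Properties~3 and~4 of Theorem~\ref{thm:2-path-decomposition}. The decomposition itself, together with the constant-time density tests and the regularity/min-degree certificates (which are returned by the algorithm of Theorem~\ref{thm:2-path-decomposition}, so no new sifting is required at this stage), costs $n^2 \cdot \exp(d^7 \poly(\epsilon^{-1})) = n^{2 + o(1)}$ by the choice of $d$. The total cost of handling sparse pieces is
\begin{equation*}
    \sum_{k=1}^K \frac{|X_k|\,|Y_k|\,|Z_k|}{2^{\Omega(d)}} \leq \frac{2(d+2)^2 \cdot |X|\,|Y|\,|Z|}{2^{\Omega(d)}} = \frac{n^3}{2^{\Omega(d)}},
\end{equation*}
by Property~3, and the dense case contributes at most $O(n)$ per piece to certify a witness. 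Adding the three contributions gives $n^3 / 2^{\Omega(\sqrt[7]{\log n})}$ as claimed, and the algorithm is deterministic since all constituent subroutines are.

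The main obstacle is not in executing this reduction---everything above is essentially a bookkeeping exercise once Theorems~\ref{thm:regular-product} and~\ref{thm:2-path-decomposition} are available---but in the parameter choice: one must verify that the $\exp(d^7)$ overhead from the decomposition stays subpolynomial while $2^{\Omega(d)}$ remains a super-polylogarithmic saving, which forces exactly the seventh-root exponent and motivates the open question of reducing the $d^7$ dependence.
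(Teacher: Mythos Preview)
Your proposal is correct and follows essentially the same approach as the paper: decompose via Theorem~\ref{thm:2-path-decomposition} with $\epsilon=\tfrac{1}{160}$, split pieces into sparse (brute-force) versus dense (certify a triangle exists via Theorem~\ref{thm:regular-product}), and balance the $n^2\cdot 2^{\Order(d^7)}$ preprocessing against the $n^3/2^{\Omega(d)}$ sparse-search cost by setting $d=\Theta(\sqrt[7]{\log n})$. The only cosmetic difference is that the paper simply returns ``yes'' upon encountering a dense piece rather than locating an explicit witness, which is all that the decision problem requires.
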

\begin{proof}
We assume without loss of generality that the input graph is tripartite, $(X, Y, Z, A, B, C)$. Let $\epsilon = \frac{1}{160}$ and let $d \geq 1$ be a parameter to be determined later. Using \cref{thm:2-path-decomposition} we decompose $(X, Y, Z, A, B)$ into pieces $\set{(X_k, Y_k, Z_k, A_k, B_k)}_{k=1}^K$. Further, write $C_k = C[X_k, Z_k]$. For each piece we distinguish two cases:
\begin{itemize}
    \item If $\Ex[A_k] \leq 2^{-d}$ or $\Ex[B_k] \leq 2^{-d}$ or $\Ex[C_k] \leq 2^{-\epsilon d / 2}$, then search in $(X_k, Y_k, Z_k, A_k, B_k, C_k)$ for a triangle. This step takes time $\Order(|X_k| \, |Y_k| \, |Z_k| \,/\, 2^{\epsilon d / 2})$ by exploiting the respective sparseness.
    \item Otherwise, simply return ``yes''.
\end{itemize}

For the correctness first observe that there is a triangle in the given graph if and only if there exists $(x, z) \in X \times Z$ with $(A B)(x, z) \geq 1$ and $C(x, z) = 1$. Since Property~1 of \cref{thm:2-path-decomposition} guarantees that~\makebox{$A B = \sum_k A_k B_k$}, there is a triangle in the original graph if and only if there is some $k \in [K]$ and $(x, z) \in X_k \times Z_k$ with $(A_k B_k)(x, z) \geq 1$ and $C_k(x, z) = 1$. The correctness of the first case is thus clear. But it remains to argue that if $\Ex[A_k] > 2^{-d}$ and $\Ex[B_k] > 2^{-d}$ and~\makebox{$\Ex[C_k] > 2^{-\epsilon d / 2}$}, then there exists a triangle in $(X_k, Y_k, Z_k, A_k, B_k, C_k)$. Indeed, by Property~2 of \cref{thm:2-path-decomposition} and the first two assumptions, we have that $A_k$ and $B_k^T$ are $(\epsilon, 2, d)$-regular and $\epsilon$-min-degree. In this case, \cref{thm:regular-product} implies that $A \circ B$ is $(\Ex[A] \Ex[B], 80\epsilon, 2^{-\epsilon d / 2})$-uniform. By definition, this means that not more than a $2^{-\epsilon d / 2}$-fraction of the entries in $A \circ B$ do not lie in the range $[(1 - 80\epsilon) \Ex[A] \Ex[B], (1 + 80\epsilon) \Ex[A] \Ex[B]] = [\frac{1}{2} \Ex[A] \Ex[B], \frac{3}{2} \Ex[A] \Ex[B]]$. In particular, (since we have~\makebox{$\Ex[A], \Ex[B] > 0$}) it follows that at most a $2^{-\epsilon d / 2}$-fraction of the entries in $A \circ B$ are nonzero. Using finally that $\Ex[C_k] > 2^{-\epsilon d / 2}$, we conclude that there exists some common entry~\makebox{$(x, z) \in X_k \times Z_k$} where both $(A B)(x, z) \geq 1$ and $C(x, z) = 1$.

Let us finally analyze the running time. Detecting a triangle in each sparse sub-instance takes time
\begin{equation*}
    \sum_{k=1}^K \frac{|X_k| \, |Y_k| \, |Z_k|}{2^{\Omega(d)}} \leq \frac{|X| \, |Y| \, |Z| \cdot 2(d + 2)^2}{2^{\Omega(d)}} = \frac{|X| \, |Y| \, |Z|}{2^{\Omega(d)}},
\end{equation*}
using Property~3 of \cref{thm:2-path-decomposition}. Furthermore, precomputating the regularity decomposition takes time $n^2 \cdot \exp(d^7 \poly(\epsilon^{-1})) = n^2 \cdot 2^{\Order(d^7)}$. This running is optimized by picking $d = \Theta(\sqrt[7]{\log n})$, where the constant is sufficiently small such that the preprocessing time becomes $\Order(n^{2.1})$, say. For this choice, the total running time is indeed~\smash{$n^3 / 2^{\Omega(d)} = n^3 / 2^{\Omega(\sqrt[7]{\log n})}$}.
\end{proof}

Our main \cref{thm:bmm} is immediate by combining \cref{lem:bmm-to-triangle,thm:triangle-detection}.
\section{Triangle Enumeration} \label{sec:triangle-enumeration}
In this section we give an improved algorithm for enumerating triangles in graphs, based on our previous decomposition theorems:

\thmtriangleenum*

We make an important distinction: A \emph{triangle listing} algorithm receives as input a graph and returns as output a list of its $t$ triangles---here, we care about triangle listing algorithms with running times of the form $\Order(n^3 / f(n) + t)$. A \emph{triangle enumeration} algorithm first preprocesses a graph in time $\Order(n^3 / f(n))$. Afterwards, it can enumerate all triangles in the graph with constant delay (i.e., upon query, the algorithm spends time $\Order(1)$ to report the next triangle). For the majority of this section we work with triangle listing algorithms, but in \cref{sec:triangle-enumeration:sec:equiv} we show that both types are equivalent in our context.\footnote{We note that there has been work on developing triangle listing algorithms in time $\Order(n^3 / f(n) + t \cdot g(n))$, i.e., with a super-constant ``per-triangle'' cost $g(n)$. For this setting, algebraic fast matrix multiplication turns out to be useful and yields an algorithm in time~\smash{$\Order(n^{2.3716} + t^{0.4782} \cdot n^{1.5655})$}~\cite{BjorklundPWZ14}. Due to the super-constant per-triangle cost, however, this algorithm does not lead to nontrivial constant-delay enumeration algorithms.}

We structure the remainder of this section as follows: We quickly give the main idea of our algorithm in \cref{sec:triangle-enumeration:sec:overview}, with details following in \cref{sec:triangle-enumeration:sec:algo}. Finally, in \cref{sec:triangle-enumeration:sec:lower-bound} we include a proof that further improvements to our enumeration algorithm would entail a 3-SUM-algorithm that is faster than what is currently known.

\subsection{Triangle Listing---The Idea} \label{sec:triangle-enumeration:sec:overview}
In this short overview we give the main intuition behind our algorithm. We remark that all non-trivial algorithms for triangle enumeration are based on the Four-Russians technique~\cite{ArlazarovDKF70}. The following lemma states a stronger version for \emph{sparse} graphs that has implicitly appeared in~\cite{BansalW12,Chan15}:

\begin{lemma}[Four-Russians] \label{lem:four-russians}
Let $G = (X, Y, Z, A, B, C)$ be a tripartite graph. There is a deterministic algorithm that lists all $t$ triangles in $G$ in time
\begin{equation*}
    \Order\parens*{n^{2.3} + \frac{|X| \, |Y| \, |Z|}{(\log n)^{100}} + \frac{|X| \, |Y| \, |Z| \cdot \Ex[A \circ B] \cdot (\log\log n)^2}{(\log n)^2} + t}
\end{equation*}
(where $n \geq |X| + |Y| + |Z|$).
\end{lemma}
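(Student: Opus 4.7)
The plan is to build on Chan's multi-level Four-Russians algorithm for Boolean Matrix Multiplication, augmenting it with a witness-listing phase that reports witnesses only for pairs $(x,z)$ that both lie in $C$ and contribute a nonzero entry to $AB$. The high-level idea is to amortize all lookup-table precomputation into $\Order(n^{2.3})$, perform a very efficient ``zero-detection'' sweep over $X \times Z$ to narrow down candidate pairs, and then enumerate witnesses at essentially constant cost per $2$-path that actually contributes.

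First I would partition $Y$ into $m = |Y|/b$ blocks $Y_1, \ldots, Y_m$ of size $b = \Theta(\log n)$. For each block $Y_i$ and each bit-pattern $p \in \set{0,1}^{Y_i}$, I would precompute a word-packed bitmask $S_i[p] \subseteq Z$ consisting of those $z$ for which some $y \in Y_i$ with $p_y = 1$ satisfies $B(y,z) = 1$. A second and further Chan-style level of tables would merge the contributions of $\Theta(\log n / b)$ consecutive blocks at a time, so that for every row $x$ the aggregated witness indicator $W_x := \bigcup_i S_i[A[x, Y_i]] \subseteq Z$ can be assembled in word-packed form, saving a polylogarithmic factor per pair. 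Choosing $b$ small enough and iterating this construction with sufficiently many levels, the total table-population cost stays within $\Order(n^{2.3})$ while the subsequent sweep saves any prescribed polylogarithmic factor---for our purposes, $(\log n)^{100}$.

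Next I would restrict attention to pairs $(x,z)$ with $C(x,z) = 1$: using the precomputed tables, compute for each $x \in X$ the word-packed vector $W_x \wedge C[x, \cdot]$ in total time $\Order(|X| \, |Y| \, |Z| / (\log n)^{100})$, producing the list of promising pairs. For each promising pair $(x,z)$, I would enumerate witnesses block-by-block: in each block $Y_i$ that contributes to $W_x$ at column $z$, the witnesses in $Y_i$ are exactly the set bits of $A[x, Y_i] \wedge B[Y_i, z]$, which can be iterated in $\Order(1)$ per bit via word-RAM bit-scan instructions, at an $\Order((\log\log n)^2)$ overhead per block for descending the table hierarchy to locate the relevant bitmask. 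Since every $2$-path is counted by exactly one such block scan, the total enumeration cost is $\Order(|X| \, |Y| \, |Z| \cdot \Ex[A \circ B] \cdot (\log\log n)^2 / (\log n)^2)$, and reporting each triangle adds $\Order(1)$ to yield the additive $+t$ term.

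The main obstacle will be tuning the Four-Russians hierarchy so that it simultaneously achieves the $(\log n)^{100}$ savings in the zero-detection phase \emph{and} keeps the preprocessing within $\Order(n^{2.3})$; this amounts to balancing the block size $b$, the number of recursion levels, and the table blow-up, following the recipe of Chan and of Bansal-Williams. Once the hierarchy is in place, the remaining argument---charging each $2$-path to exactly one block scan and each triangle to one $\Order(1)$ output step---is a routine amortization.
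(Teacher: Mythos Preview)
Your approach has a genuine gap in the witness-enumeration phase. You partition $Y$ into blocks and, for each promising pair $(x,z)$, visit every $Y$-block that contains a witness, claiming a total enumeration cost of $\Order(P \cdot (\log\log n)^2 / (\log n)^2)$ where $P = |X|\,|Y|\,|Z|\cdot\Ex[A\circ B]$ is the total number of $2$-paths. But this bound does not follow from your description: the number of (pair, contributing block) visits equals the number of triples $(x,z,i)$ with $C(x,z)=1$ and a witness in $Y_i$, and there is no reason this should be $\Order(P/(\log n)^2)$. Concretely, if every promising pair has its witnesses spread one-per-block, you incur your stated $(\log\log n)^2$ overhead once per triangle, giving $t\cdot(\log\log n)^2$ rather than $\Order(t)$. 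And you have not said how to \emph{find} the next contributing block for a fixed $(x,z)$ without either scanning all $|Y|/b$ blocks or storing a per-pair block-indicator vector whose total size is $|X|\,|Z|\cdot|Y|/b$ bits. The $(\log n)^2$ denominator in the target bound simply does not materialize from a one-dimensional partition of $Y$.

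The paper's route is different and sidesteps this entirely. It partitions $X$ and $Z$ (not $Y$) into groups of size $s = (\log n)^{100}$ and precomputes, for every pair of groups $X_i, Z_j$ and every pair of \emph{small} subsets $S\subseteq X_i$, $T\subseteq Z_j$ with $|S|,|T|\le r = \Theta(\log n/\log\log n)$, the list of $C$-edges in $S\times T$; since $\binom{s}{r} \le n^{0.1}$ this fits in $\Order(n^{2.3})$ time and the $(S,T)$ key fits in one machine word. The listing phase then iterates over $y\in Y$: for each pair of groups $(i,j)$ it splits $N_A(y)\cap X_i$ and $N_B(y)\cap Z_j$ into chunks of size $r$ and looks up the precomputed $C$-edge list for each chunk pair. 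The crucial identity is $\sum_y |N_A(y)|\cdot|N_B(y)| = P$, so the number of chunk-pair lookups is $\Order(P/r^2) = \Order(P(\log\log n)^2/(\log n)^2)$---this is exactly where the third term comes from, and it genuinely requires iterating over $y$ rather than over $(x,z)$. The term $|X|\,|Y|\,|Z|/(\log n)^{100}$ arises from the $\Order(s)$ scan of each group per triple $(y,i,j)$, and no Chan-style multi-level hierarchy is needed.
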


Without further assumptions on the graph we can only use the trivial bound $\Ex[A \circ B] \leq 1$, which recovers the original improvement of shaving two log-factors. Note that all subsequent works that score more log-shaves nevertheless still rely on this lemma.

To understand how we arrive at our improvement of nearly six log-shaves, and why six log-shaves appears to be a \emph{right} answer, suppose that the input graph is random (in the sense that it includes each edge independently with some probability~$\delta$). We claim that by trivial means, \cref{lem:four-russians} now yields an improvement by nearly six log-factors. There are two cases: If the graph is sparse,~\smash{$\delta \leq \frac{(\log\log n)^2}{(\log n)^2}$}, then
\begin{equation*}
    \Ex[A \circ B] \approx \Ex[A] \Ex[B] \approx \delta^2 \leq \frac{(\log\log n)^4}{(\log n)^4},
\end{equation*}
and the improvement is immediate. On the other hand, consider the dense case, \smash{$\delta \geq \frac{(\log\log n)^2}{(\log n)^2}$}. Intuitively, we exploit that whenever the graph is sufficiently dense, then the number of triangles dominates the running time in \cref{lem:four-russians}. Specifically, the number of triangles in the graph is~$t \approx \delta^3 \cdot n^3$ (assuming that the graph is random), and therefore the running time of \cref{lem:four-russians} is bounded by
\begin{equation*}
    \Order\parens*{\frac{n^3 \cdot \delta^2 \cdot (\log\log n)^2}{(\log n)^2} + t} = \Order\parens*{\frac{t \cdot (\log\log n)^2}{\delta \cdot (\log n)^2} + t} = \Order(t).
\end{equation*}

Of course, we cannot assume that the graph is perfectly random. Our natural approach is to apply our regularity decomposition to decompose the given graph into regular pieces that behave randomly (with respect to the quantity $\Ex[A \circ B]$ and the number of triangles $t$). The details are significantly more complicated though, as we cannot assume that the edges $C$ behave regularly.

\subsection{Triangle Listing---The Details} \label{sec:triangle-enumeration:sec:algo}
We now give the details of our algorithm, starting with a proof of the Four-Russians lemma. Recall that we work over a Word RAM model with word size $\Theta(\log n)$. In the proof of \cref{lem:four-russians} we specifically use that this model allows to construct a length-$n^{0.1}$ array that can be accessed \emph{in constant time} via keys of length $0.1 \log n$.

\begin{proof}[Proof of \cref{lem:four-russians}]
The goal is to give an algorithm that lists all triangles in a given tripartite graph $G = (X, Y, Z, A, B, C)$. We use the following notation: For a node $y \in Y$, let $N_A(y) \subseteq X$ and $N_B(y) \subseteq Z$ denote the sets of neighbors of~$y$, respectively. Let $s = \floor{\log n}^{100}, r = \floor{\frac{\log n}{1000 \log\log n}}$, and consider the following steps:
\begin{enumerate}
    \item We arbitrarily partition $X$ into $I = \ceil{|X| / s}$ groups $X_1, \dots, X_I$ of size at most $s$; similarly partition~$Z$ into $J = \ceil{|Z| / s}$ groups $Z_1, \dots, Z_J$ of size at most $s$.
    \item We enumerate each tuple $(i, j, S, T)$ where $i \in [I]$, $j \in [J]$ and where $S \subseteq X_i$, $T \subseteq Z_j$ have size $|S|, |T| \leq r$. Note that
    \begin{equation*}
        \log\binom{s}{r} \leq r \log s \leq \frac{\log n}{1000 \log\log n} \cdot 100 \log \log n = \frac{\log n}{10};
    \end{equation*}
    therefore, there are at most $n \cdot n \cdot n^{0.1} \cdot n^{0.1} = n^{2.2}$ such tuples $(i, j, S, T)$. Moreover, we can encode each such tuple in $\Order(\log n)$ bits which takes $\Order(1)$ machine words.\footnote{Of course, the exact bit representation matters. The easiest option here is to represent each set $S$ as a (sorted) list of its at most $r$ elements. As each element can be represented using $\log s$ bits, this representation indeed takes~\makebox{$r \log s \leq \frac{1}{10} \log n$} bits in total.} For each tuple~$(i, j, S, T)$ we prepare a list of all edges in $C[S, T]$ and store a pointer (of constant word size) to this list.
    \item Next, we enumerate each tuple $(y, i, j)$ where $y \in Y$, $i \in [I]$ and $j \in [J]$. We partition the set $N_A(y) \cap X_i$ (i.e., the set of neighbors of $y$ in $X_i$) arbitrarily into subsets $S$ of size $r$ (plus possibly one subset of size less than $r$); let $\mathcal S(y, i)$ denote the resulting partition. Similarly, we partition $N_B(y) \cap Z_j$ into subsets of size at most $r$ (plus possibly one subset of size less than~$r$); let~$\mathcal T(y, j)$ denote the resulting partition. Now, for each pair $S \in \mathcal S(y, i), T \in \mathcal T(y, j)$ we query list of edges associated to the tuple $(i, j, S, T)$. We enumerate each edge $(x, z)$ in this list associated to $(i, j, S, T)$ and store the triangle $(x, y, z)$.
\end{enumerate}
It is easy to verify that this algorithm lists all triangles in $G$. Let us focus on the running time. Step~1 runs in negligible time $\Order(n)$. In Step~2 we enumerate at most $n^{2.2}$ tuples $(i, j, S, T)$, and for each such tuple we spend time at most $\Order(s^2)$ to prepare the list of edges in $C[S, T]$. The total time of this step is $\Order(n^{2.2} \cdot s^2)$ which we loosely bound by $\Order(n^{2.3})$. In Step 3 we enumerate all~\makebox{$|Y| \cdot I \cdot J$} tuples $(y, i, j)$. For each such tuple, we spend time $\Order(s)$ to prepare the sets $\mathcal S(y, i)$ and $\mathcal T(y, j)$. Afterwards, we spend time $\Order(|\mathcal S(y, i)| \cdot |\mathcal T(y, j)|)$ plus the time to list all triangles. This listing cost is linear in $t$, so the running time of Step 3 is thus bounded by
\begin{align*}
    &\Order\parens*{\sum_{y \in Y} \sum_{\substack{i \in [I]\\j \in [J]}} \parens*{s + |\mathcal S(y, i)| \cdot |\mathcal T(y, j)|} + t} \\
    &\qquad= \Order\parens*{\sum_{y \in Y} \sum_{\substack{i \in [I]\\j \in [J]}} \parens*{s + \frac{|N_A(y) \cap X_i|}{r} \cdot \frac{|N_B(y) \cap Z_j|}{r}} + t} \\
    &\qquad= \Order\parens*{\sum_{y \in Y} \parens*{I J s + \frac{|N_A(y)| \cdot |N_B(y)|}{r^2}} + t} \\[.5ex]
    &\qquad= \Order\parens*{\frac{|X| \, |Y| \, |Z|}{s} + \frac{|X| \, |Y| \, |Z| \cdot \Ex[A \circ B]}{r^2} + t}.
\end{align*}
The time bound from the lemma statement follows by plugging in the chosen parameters $s$ and~$r$.
\end{proof}

\begin{theorem}[Triangle Listing] \label{thm:triangle-listing}
There is a deterministic algorithm that lists all $t$ triangles in a given graph in time $\Order(n^3 \,/\, (\log n)^6 \cdot (\log\log n)^{\Order(1)} + t)$.
\end{theorem}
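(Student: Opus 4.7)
The plan is to reduce the enumeration problem to triangle listing via the equivalence in \cref{sec:triangle-enumeration:sec:equiv} and the standard reduction to tripartite instances $(X, Y, Z, A, B, C)$, so that it suffices to design a deterministic listing algorithm running in time $\Order(n^3/(\log n)^6 \cdot (\log\log n)^{\Order(1)} + t)$. With this in mind I would fix $\epsilon = \tfrac{1}{160}$ and choose $d = \Theta(\log\log n)$ large enough that $2^{-\epsilon d/2} \leq 1/(\log n)^6$, then invoke the $AB$-decomposition (\cref{thm:2-path-decomposition}) on the pair $(A, B)$. This produces $K = 2^{\Order((\log\log n)^7)} = n^{o(1)}$ pieces $(X_k, Y_k, Z_k, A_k, B_k)$ with $\sum_k |X_k||Y_k||Z_k| = \Order((\log\log n)^2) \cdot n^3$; setting $C_k = C[X_k, Z_k]$ and using $AB = \sum_k A_k B_k$, every triangle lives in exactly one piece, so $t = \sum_k t_k$. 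The overhead contributions $K \cdot n^{2.3}$ and $\sum_k |X_k||Y_k||Z_k|/(\log n)^{100}$ coming from \cref{lem:four-russians} are easily absorbed by the target bound, reducing everything to controlling the dominant Four-Russians term per piece.

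For each piece I would take the best of six candidate algorithms: the three symmetric instantiations of \cref{lem:four-russians}, obtained by treating $Y$, $X$, or $Z$ as the ``middle'' vertex set (with running times proportional to $\Ex[A_k \circ B_k]$, $\Ex[A_k^T \circ C_k]$, and $\Ex[B_k \circ C_k^T]$ respectively), together with the three brute-force variants that iterate over the edges of $A_k$, $B_k$, or $C_k$. When one of $A_k, B_k$ is $2^{-d}$-sparse, brute force over the sparse matrix already gives per-piece cost at most $2^{-d} \cdot |X_k||Y_k||Z_k| \ll |X_k||Y_k||Z_k|/(\log n)^6$. When instead $A_k$ and $B_k^T$ are regular and min-degree, \cref{thm:regular-product} simultaneously yields $\Ex[A_k \circ B_k] \lesssim \Ex[A_k]\Ex[B_k] + 2^{-\epsilon d/2}$ and the matching lower bound $t_k \gtrsim |X_k||Y_k||Z_k|\Ex[A_k]\Ex[B_k]\Ex[C_k]$ whenever $\Ex[C_k]$ exceeds the uniformity tail. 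Moreover, regularity of $A_k$ forces $\Ex_x \deg_{A_k}(x)^2 \approx \Ex[A_k]^2$, so that combining this with the trivial estimate $\Ex_x \deg_{C_k}(x)^2 \leq \Ex[C_k]$ via Cauchy--Schwarz gives the cross-term bound $\Ex[A_k^T \circ C_k] \lesssim \Ex[A_k]\sqrt{\Ex[C_k]}$, and symmetrically for $\Ex[B_k \circ C_k^T]$.

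The main technical step is the resulting case analysis on $\Ex[C_k]$ in the regular regime. When $\Ex[C_k] \geq (\log\log n)^2/(\log n)^2$, the $A \circ B$ Four-Russians cost is already within an $\Order(1)$ factor of $t_k$; when $\Ex[C_k] \leq 1/(\log n)^6$, brute force over edges of $C_k$ suffices directly; and the intermediate range---which is the principal obstacle---requires carefully combining the cross-term Four-Russians variants with brute force over $C_k$, leveraging the Cauchy--Schwarz estimates from the previous paragraph. After this case split, summing the per-piece bound $|X_k||Y_k||Z_k|/(\log n)^6 \cdot (\log\log n)^{\Order(1)} + \Order(t_k)$ over all $k$ using the total-volume bound from \cref{thm:2-path-decomposition} together with $\sum_k t_k = t$ yields the claimed listing time, and the equivalence in \cref{sec:triangle-enumeration:sec:equiv} converts this into the stated constant-delay enumeration guarantee.
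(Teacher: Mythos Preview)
Your overall strategy---decompose via \cref{thm:2-path-decomposition}, then on each regular piece run the best of several Four-Russians variants and brute-force options---is close to the paper's, but there is a real gap in the key estimate you propose for the ``intermediate'' range.

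The problem is the bound $\Ex[A_k^T \circ C_k] \lesssim \Ex[A_k]\sqrt{\Ex[C_k]}$ that you obtain from Cauchy--Schwarz together with the trivial $\Ex_x \deg_{C_k}(x)^2 \le \Ex[C_k]$. This square-root loss in $\Ex[C_k]$ is fatal. Take a regular piece with $\Ex[A_k]=\Ex[B_k]=1/\log n$ and $\Ex[C_k]=1/(\log n)^3$. Then $t_k \approx |X_k||Y_k||Z_k|/(\log n)^5$, while all six of your options cost at least $|X_k||Y_k||Z_k|/(\log n)^{4.5}$: the cross-term Four-Russians cost is $\Ex[A_k]\sqrt{\Ex[C_k]}/(\log n)^2 = (\log n)^{-4.5}$ per unit volume, the $A\circ B$ variant costs $(\log n)^{-4}$, and brute force over $C_k$ costs $(\log n)^{-3}$. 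None of these is within a constant of the target $|X_k||Y_k||Z_k|/(\log n)^6 + t_k$.

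What is actually needed is the stronger bound $\Ex[A_k^T \circ C_k] \lesssim \Ex[A_k]\,\Ex[C_k]$ (up to $(\log\log n)^{\Order(1)}$). The paper obtains this not via Cauchy--Schwarz but by \emph{bucketing} $X_k$ into $\Order(d)$ level sets $X_{k,\ell}$ according to $\deg_{C_k}(x)$, so that within each bucket the $C$-degrees are uniform to within a factor of two and one gets $\Ex[A_k^T \circ C_{k,\ell}] \lesssim \gamma^{-1}\Ex[A_k]\Ex[C_{k,\ell}]$ directly; buckets that are too small (where the $\gamma^{-1}$ loss would be unaffordable) are handled by recursing, and the recursion depth is controlled because those buckets carry only a $\gamma$-fraction of $|X_k|$. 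Your proposal has no mechanism to force $C_k$-degrees to be uniform and no recursion to absorb the non-uniform part, which is precisely where the argument breaks. (An alternative fix that stays closer to your outline: replace Cauchy--Schwarz by H\"older with exponent $d$, using $\|A_k\|_{U(1,d)} \le \|A_k\|_{U(2,d)} \le (1+\epsilon)\Ex[A_k]$ to get $\Ex[A_k^T\circ C_k] \le (1+\epsilon)\Ex[A_k]\,\Ex[C_k]^{1-1/d}$; for $\Ex[C_k] \ge 1/(\log n)^6$ and $d=\Theta(\log\log n)$ the factor $\Ex[C_k]^{-1/d}$ is $\Order(1)$.)
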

\begin{proof}
Let $G = (X, Y, Z, A, B, C)$ be a given tripartite graph. We design a recursive algorithm that lists all triangles in $G$. To this end, we maintain one (global) list of triangles and each recursive call of the algorithm appends triangles to this list. Let $n$ denote the total number of nodes in the original graph $G$ (at the top level of the recursion), and let $\gamma, \delta, \epsilon \in (0, 1)$ and $d \geq 4 / \epsilon$ be parameters to be determined later.

The first step is to call \cref{thm:2-path-decomposition} on input $(X, Y, Z, A, B, \epsilon, d)$ to compute a decomposition~$\set{(X_k, Y_k, Z_k, A_k, B_k)}_{k=1}^K$. For each $k \in [K]$, we write for convenience~\makebox{$C_k = C[X_k, Z_k]$} and~\makebox{$G_k = (X_k, Y_k, Z_k, A_k, B_k, C_k)$}. By the guarantee of \cref{thm:2-path-decomposition} this decomposition partitions the set of triangles in $G$, and thus the remaining goal is to separately list all triangles in the graphs~$G_k$. To this end, for each $k \in [K]$, we distinguish the following cases:
\begin{enumerate}
    \item\emph{If $\Ex[A_k] \leq 2^{-\epsilon d / 4}$ or $\Ex[B_k] \leq 2^{-\epsilon d / 4}$:} List all triangles in $G_k$ in time $\Order(|X_k| \, |Y_k| \, |Z_k| \,/\, 2^{\epsilon d / 4})$.
    \item\emph{If $\Ex[A_k] \leq \delta$ and $\Ex[B_k] \leq \delta$:} List all triangles in $G_k$ by \cref{lem:four-russians}.
    \item\emph{If $\Ex[B_k] \geq \delta$:} We further subdivide the graph $G_k$ based on the approximate degrees in $X_k$ with respect to $Z_k$. Specifically, let $L = \ceil{\epsilon d / 2}$ and split~$X_k$ into buckets $X_{k, 1}, \dots, X_{k, L}$ defined by
    \begin{align*}
        X_{k, \ell} &= \set{x \in X_k : 2^{-\ell} < \deg_{C_k}(x) \leq 2^{-\ell+1}} \quad\text{(for $1 \leq \ell < L$)}, \\
        X_{k, L} &= \set{x \in X_k : \deg_{C_k}(x) \leq 2^{-L+1}}.
    \end{align*}
    For each $\ell \in [L]$, we define the matrix $C_{k, i} \in \set{0, 1}^{X_k \times Z_k}$ as the submatrix of $C_k$ obtained by zeroing out all rows not in $X_i$. (That is, $C_{k, i}[x, z] = C_k[x, z]$ if $x \in X_i$ and $C_k[x, z] = 0$ otherwise.) Writing $G_{k, \ell} = (X_k, Y_k, Z_k, A_k, B_k, C_{k, \ell})$, our remaining goal is to list the disjoint union of triangles in the graphs $G_{k, \ell}$. For each~\makebox{$\ell \in [L]$}, we instead distinguish the following three subcases:
    \begin{enumerate}
        \item[3.1]\emph{If $\Ex[C_{k, \ell}] \leq 2^{-L+1}$:} List all triangles in $G_{k, \ell}$ in time $\Order(|X_k| \, |Y_k| \, |Z_k| \,/\, 2^L)$.
        \item[3.2]\emph{If $|X_{k, \ell}| < \gamma |X_k|$:} Recurse on $(X_{k, \ell}, Y_k, Z_k, A_k[X_{k, \ell}, Y_k], B_k, C_k[X_{k, \ell}, Z_k])$. (It is easy to check that the triangles in this graph are exactly the triangles in $G_{k, \ell}$.)
        \item[3.3]\emph{If $\Ex[C_{k, \ell}] > 2^{-L+1}$ and $|X_{k, \ell}| \geq \gamma |X_k|$:} List all triangles in $(Y_k, X_k, Z_k, A_k^T, C_{k, \ell}, B_k)$ by \cref{lem:four-russians}. (It is easy to check that the triangles in this graph are in one-to-one correspondence to the triangles in $G_{k, \ell}$.) Note that we have exchanged the node and edge sets such that \cref{lem:four-russians} benefits from minimizing $\Ex[A_k^T \circ C_{k, \ell}]$.  
    \end{enumerate}
    \item If $\Ex[A_k] \geq \delta$: This case is symmetric to the previous case. More precisely, we can reduce to the previous case by considering instead the graph $(Z_k, Y_k, X_k, B_k^T, A_k^T, C_k^T)$ whose triangles are clearly in one-to-one correspondence with those in $G_k$.
\end{enumerate}
Finally, we add one more rule to the algorithm: As soon as we reach recursion depth $H$ (for some parameter $H$ to be determined), we simply solve the instance by brute-force in time $\Order(|X| \, |Y| \, |Z|)$. This completes the description of the algorithm. The correctness should be clear from the in-text explanations.

\paragraph{Running Time}
It remains to bound the running time. For simplicity, we already fix all the parameters here, and then analyze the cases individually:
\begin{align*}
    \epsilon &= \frac{1}{160}, \\
    d &= \ceil{64000 \log\log n}\vphantom{\frac11}, \\
    \gamma &= \frac{1}{8L(d + 2)^2} = \Theta\parens*{\frac{1}{(\log\log n)^3}}, \\
    \delta &= \frac{(\log\log n)^2}{\gamma (\log n)^2} = \Theta\parens*{\frac{(\log\log n)^5}{(\log n)^2}}, \\
    H &= \ceil*{\log \frac{(\log n)^6}{(\log\log n)^{14}}} = \Theta(\log\log n).
\end{align*}

\begin{claim}[Cases 1 and 3.1] \label{thm:triangle-listing:clm:case-1}
The total running time of Cases~1 and~3.1 is $\Order(|X| \, |Y| \, |Z| / (\log n)^{99})$.
\end{claim}
\begin{proof}
The algorithm deals with Cases 1 and 3.1 in time $\Order(|X_k| \, |Y_k| \, |Z_k| / 2^{\epsilon d / 4})$, which, by our choice of the parameters $\epsilon, d$, is $\Order(|X_k| \, |Y_k| \, |Z_k| / (\log n)^{100})$. In total, taking into account all~$k \in [K]$, and possibly the at most $L$ repetitions of Case 3.1, by \cref{thm:2-path-decomposition} this becomes
\begin{equation*}
    \sum_{k=1}^K \Order\parens*{\frac{L \cdot |X_k| \, |Y_k| \, |Z_k|}{(\log n)^{100}}} = \Order\parens*{\frac{d^2 L \cdot |X| \, |Y| \, |Z|}{(\log n)^{100}}} = \Order\parens*{\frac{|X| \, |Y| \, |Z|}{(\log n)^{99}}}. \qedhere
\end{equation*}
\end{proof}

\begin{claim}[Case 2] \label{thm:triangle-listing:clm:case-2}
The total running time of Case~2 is
\begin{equation*}
    \Order\parens*{K n^{2.3} + |X| \, |Y| \, |Z| \cdot \frac{(\log \log n)^{14}}{(\log n)^6} + t_{\text{\normalfont 2}}},
\end{equation*}
where $t_{\text{\normalfont 2}}$ is the number of triangles listed in Case 2.
\end{claim}
\begin{proof}
We can assume that $\Ex[A_k], \Ex[B_k] > 2^{-d}$ whenever we enter Case 2 (since Case 1 did not apply). Thus, the regularity decomposition (\cref{thm:2-path-decomposition}) guarantees that~$A_k$ and~$B_k^T$ are both $(\epsilon, 2, d)$-regular and $\epsilon$-min-degree. As a corollary of \cref{thm:regular-product} we obtain that
\begin{equation*}
    \Ex[A_k \circ B_k] \leq (1 + 80 \epsilon) \Ex[A] \Ex[B] + 2^{-\epsilon d / 2} \leq \tfrac{3}{2} \Ex[A] \Ex[B] + \Ex[A] \Ex[B] \leq \tfrac{5}{2} \delta^2.
\end{equation*}
Write $t_{G_k}$ to denote the number of triangles in $G_k$. By \cref{lem:four-russians} and \cref{thm:2-path-decomposition}, it follows that Case~2 indeed takes total time
\begin{align*}
    &\sum_{\substack{k \in [K]\\\text{case 2}}} \Order\parens*{n^{2.3} + |X_k| \, |Y_k| \, |Z_k| \cdot \parens*{\frac{1}{(\log n)^{100}} + \frac{\delta^2 (\log \log n)^2}{(\log n)^2}} + t_{G_k}} \\
    &\qquad= \sum_{\substack{k \in [K]\\\text{case 2}}} \Order\parens*{n^{2.3} + |X_k| \, |Y_k| \, |Z_k| \cdot \frac{(\log \log n)^{12}}{(\log n)^6} + t_{G_k}} \\
    &\qquad= \Order\parens*{K n^{2.3} + |X| \, |Y| \, |Z| \cdot \frac{d^2 (\log \log n)^{12}}{(\log n)^6} + t_{\text{2}}} \\
    &\qquad= \Order\parens*{K n^{2.3} + |X| \, |Y| \, |Z| \cdot \frac{(\log \log n)^{14}}{(\log n)^6} + t_{\text{2}}}. \qedhere
\end{align*}
\end{proof}

\begin{claim}[Case 3.3] \label{thm:triangle-listing:clm:case-3.3}
The total running time of Case 3.3 is
\begin{equation*}
    \Order\parens*{K L n^{2.3} + \frac{|X| \, |Y| \, |Z|}{(\log n)^{99}} + t_{\text{\normalfont 3.3}}},
\end{equation*}
where $t_{\text{\normalfont 3.3}}$ is the number of triangles listed in Case~3.3.
\end{claim}
\begin{proof}
Focus on some pair $(k, \ell)$ that falls into Case 3.3, i.e., where $\Ex[B_k] \geq \delta$,~$\Ex[C_{k, \ell}] > 2^{-L+1}$ and $|X_{k, \ell}| \geq \gamma |X_k|$. We must have $\ell < L$ (as otherwise $\Ex[C_{k, \ell}] \leq 2^{-L+1}$). The analysis of this case is inspired by the previous algorithm for purely random graphs. Our goal is to prove that both (i) the number of 2-paths in $G_{k, \ell}$ (via the edge parts $A_k$ and $C_{k, \ell}$) and (ii) the number of triangles in~$G_{k, \ell}$ behave as if~$G_{k, \ell}$ was purely random.

We start with (i). By the construction of $C_{k, \ell}$, it is immediate that the density of~$C_{k, \ell}$ is at least
\begin{equation*}
    \Ex[C_{k, \ell}] \geq \frac{|X_{k, \ell}|}{|X_k|} \cdot 2^{-\ell} \geq \gamma \cdot 2^{-\ell}.
\end{equation*}
Therefore, we can bound
\begin{align*}
    \Ex[A_k^T \circ C_{k, \ell}] &= \Ex_{x \in X_k} \deg_{A_k}(x) \cdot \deg_{C_{k, \ell}}(x) \\
    &\leq \Ex_{x \in X_k} \deg_{A_k}(x) \cdot 2^{-\ell+1} \\
    &\leq 2\gamma^{-1} \cdot \Ex[A_k] \cdot \Ex[C_{k, \ell}].
\end{align*}

Next, we turn to (ii) and bound the number of triangles in $G_{k, \ell}$ from below. Using \cref{thm:regular-product} we infer that at most a $2^{-\epsilon d/2} \leq 2^{-L}$-fraction of the entries in $\Ex[A_k \circ B_k]$ does not fall into the range $[(1 - 80\epsilon) \Ex[A_k] \Ex[B_k], (1 + 80\epsilon) \Ex[A_k] \Ex[B_k]] = [\frac12 \Ex[A_k] \Ex[B_k], \frac32 \Ex[A_k] \Ex[B_k]]$. Therefore, the number of triangles in $G_{k, \ell}$ is at least
\begin{align*}
    t_{G_{k, \ell}}
    &\geq |X_k| \, |Y_k| \, |Z_k| \cdot \tfrac{1}{2} \Ex[A_k] \cdot \Ex[B_k] \cdot (\Ex[C_{k, \ell}] - 2^{-L}) \\
    &\geq |X_k| \, |Y_k| \, |Z_k| \cdot \tfrac{1}{4} \cdot \Ex[A_k] \cdot \Ex[B_k] \cdot \Ex[C_{k, \ell}] \\
    &\geq |X_k| \, |Y_k| \, |Z_k| \cdot \tfrac{\gamma}{8} \cdot \Ex[B_k] \cdot \Ex[A_k^T \circ C_{k, \ell}] \\
    &\geq |X_k| \, |Y_k| \, |Z_k| \cdot \tfrac{\gamma \delta}{8} \cdot \Ex[A_k^T \circ C_{k, \ell}].
\end{align*}

By combining both statements, we can bound the running time of \cref{lem:four-russians} as follows:
\begin{align*}
    &\sum_{\substack{k \in [K], \ell \in [L]\\\text{case 3.3}}} \Order\parens*{n^{2.3} + |X_k| \, |Y_k| \, |Z_k| \cdot \parens*{\frac{1}{(\log n)^{100}} + \frac{\Ex[A_k^T \circ C_{k, \ell}] \cdot (\log \log n)^2}{(\log n)^2}} + t_{G_{k, \ell}}} \\
    &\qquad\leq \sum_{\substack{k \in [K], \ell \in [L]\\\text{case 3.3}}} \Order\parens*{n^{2.3} + \frac{|X_k| \, |Y_k| \, |Z_k|}{(\log n)^{100}} + t_{G_{k, \ell}} \cdot \parens*{\frac{(\log\log n)^2}{(\gamma \delta/8) (\log n)^2} + 1}} \\
    &\qquad\leq \sum_{\substack{k \in [K], \ell \in [L]\\\text{case 3.3}}} \Order\parens*{n^{2.3} + \frac{|X_k| \, |Y_k| \, |Z_k|}{(\log n)^{100}} + t_{G_{k, \ell}}} \\
    &\qquad= \Order\parens*{K L n^{2.3} + \frac{d^2 L \cdot |X| \, |Y| \, |Z|}{(\log n)^{100}} + t_{\text{3.3}}} \\
    &\qquad= \Order\parens*{K L n^{2.3} + \frac{|X| \, |Y| \, |Z|}{(\log n)^{99}} + t_{\text{3.3}}}. \qedhere
\end{align*}
\end{proof}

It only remains to analyze Case~3.2, which involves recursive calls to our algorithm. Let us write $T(|X|, |Y|, |Z|, t, h)$ to express the total running time of our algorithm, given an input graph with vertex parts $X, Y, Z$ and with at most $t$ triangles, at recursion depth $0 \leq h \leq H$. Then finally:

\begin{claim}[Total Running Time] \label{thm:triangle-listing:clm:total}
The total running time is bounded by
\begin{equation*}
    T(|X|, |Y|, |Z|, t, h) \leq c \cdot \parens*{(2 K L)^{H-h} \cdot n^{2.3} + |X| \, |Y| \, |Z| \cdot \frac{(\log\log n)^{14} \cdot 2^{h+1}}{(\log n)^6} + t},
\end{equation*}
for some sufficiently large constant $c$.
\end{claim}
\begin{proof}
In the base case we have $T(|X|, |Y|, |Z|, t, H) = \Order(|X| \, |Y| \, |Z|)$ which indeed satisfied the claim (as $2^H \geq (\log n)^6 / (\log\log n)^{14}$). So focus on the case that $h < H$. Then, given the previous \cref{thm:triangle-listing:clm:case-1,thm:triangle-listing:clm:case-2,thm:triangle-listing:clm:case-3.3}, the running time of the Cases~1, 2,~3.1 and~3.3 is bounded by
\begin{equation} \label{thm:triangle-listing:clm:total:eq:others}
    c \cdot \parens*{K L n^{2.3} + |X| \, |Y| \, |Z| \cdot \frac{(\log \log n)^{14}}{(\log n)^6} + (t - t_{\text{3.2}})},
\end{equation}
where $c$ is some sufficiently large constant, and where $t_{\text{3.2}}$ is the number of triangles listed in Case~3.2 (such that $t - t_{\text{3.2}}$ is the number of triangles listed in all other cases). By induction we can bound the contribution of Case~3.2 as follows; recall that $|X_{k, \ell}| < \gamma |X_k|$ for any pair $(k, \ell)$ falling into Case~3.2:
\begin{align}
    &\sum_{\substack{k \in [K], \ell \in [L]\\\text{case 3.2}}} T(|X_{k, \ell}|, |Y_k|, |Z_k|, t_{G_{k, \ell}}, h + 1) \nonumber\\
    &\qquad\leq \sum_{\substack{k \in [K], \ell \in [L]\\\text{case 3.2}}} c \cdot \parens*{(2 K L)^{H-h-1} \cdot n^{2.3} + |X_{k, \ell}| \, |Y_k| \, |Z_k| \cdot \frac{(\log\log n)^{14} \cdot 2^{h+2}}{(\log n)^6} + t_{G_{k, \ell}}} \nonumber\\
    &\qquad\leq \sum_{\substack{k \in [K], \ell \in [L]\\\text{case 3.2}}} c \cdot \parens*{(2 K L)^{H-h-1} \cdot n^{2.3} + \gamma \cdot |X_k| \, |Y_k| \, |Z_k| \cdot \frac{(\log\log n)^{14} \cdot 2^{h+2}}{(\log n)^6} + t_{G_{k, \ell}}} \nonumber\\
    &\qquad c \cdot \parens*{K L (2 K L)^{H - h - 1} \cdot n^{2.3} + 2\gamma L (d + 2)^2 \cdot |X| \, |Y| \, |Z| \cdot \frac{(\log\log n)^{14} \cdot 2^{h+2}}{(\log n)^6} + t_{\text{3.2}}}. \label{thm:triangle-listing:clm:total:eq:recursive}
\end{align}
By our choice of parameters, $2 \gamma L (d + 2)^2 = \frac14$. Therefore, the total running time (which is obtained as the sum of \cref{thm:triangle-listing:clm:total:eq:others,thm:triangle-listing:clm:total:eq:recursive}) is at most
\begin{align*}
    &T(|X|, |Y|, |Z|, t, h) \\
    &\qquad\leq c \cdot \parens*{2 K L (2 K L)^{H - h - 1} \cdot n^{2.3} + |X| \, |Y| \, |Z| \cdot \frac{(\log \log n)^{14} \cdot 2^{h+1}}{(\log n)^6} + (t - t_{\text{3.2}} + t_{\text{3.2}})} \\
    &\qquad= c \cdot \parens*{(2 K L)^{H - h} \cdot n^{2.3} + |X| \, |Y| \, |Z| \cdot \frac{(\log \log n)^{14} \cdot 2^{h+1}}{(\log n)^6} + t},
\end{align*}
as stated.
\end{proof}

To obtain the time bound claimed in the theorem statement, recall that we initially call the algorithm at recursion depth $h = 0$, that $K \leq \exp(\poly(d, \epsilon^{-1})) = \exp(\poly(\log\log n))$, and that~\makebox{$H = \Order(\log\log n)$}. Hence, the term $(2 K L)^H \cdot n^{2.3} \leq n^{2.3+\order(1)}$ is negligible in the total running time.
\end{proof}

\subsection{Triangle Enumeration versus Triangle Listing} \label{sec:triangle-enumeration:sec:equiv}
We finally turn our triangle listing algorithm into an enumeration algorithm. In fact, we prove the following equivalence:\footnote{This equivalence can easily be extended to an equivalence for arbitrary subgraphs $H$. Specifically, if $H$ has $k$ vertices, then listing $H$-subgraphs in time $\Order(n^k / f(n) + t)$ is equivalent to enumerating $H$-subgraphs with constant delay in preprocessing time $\Order(n^k / f(n))$. However, this equivalence has only limited use, as for many graphs $H$ much faster listing and enumeration algorithms are known. For instance, 4-cycles can be listed in time $\Order(n^2 + t)$ by a simple folklore algorithm.}

\begin{lemma}[Equivalence of Triangle Enumeration and Listing] \label{lem:equivalence-enumerate}
The following equivalences hold (in terms of deterministic algorithms):
\begin{enumerate}
    \item If constant-delay triangle enumeration is possible with preprocessing time $\Order(n^3 / f(n))$ (for some function $f(n)$), then triangle listing is in time $\Order(n^3 / f(n) + t)$.
    \item If triangle listing is in time $\Order(n^3 / f(n) + t)$ (for some computable function $f(n) = n^{\order(1)}$), then constant-delay triangle enumeration is possible with preprocessing time $\Order(n^3 / f(n^{1/2}))$.
\end{enumerate}
\end{lemma}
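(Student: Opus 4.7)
The plan is to split the lemma into two directions, one trivial and one that hinges on partitioning into square-root-sized sub-instances.

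\textbf{Direction 1 (enumeration $\Rightarrow$ listing).} This is immediate. Given an enumeration algorithm with preprocessing time $O(n^3/f(n))$ and $O(1)$ delay per triangle, to list all $t$ triangles we simply run the preprocessing once and then repeatedly query the enumerator for the next triangle until it reports exhaustion. The total running time is $O(n^3/f(n)) + O(t) = O(n^3/f(n) + t)$, as required.

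\textbf{Direction 2 (listing $\Rightarrow$ enumeration).} Let $\mathcal{L}$ denote the given triangle listing algorithm with running time at most $C \cdot n^3/f(n) + D \cdot t$ for constants $C, D$. View the input as a tripartite graph on parts $X, Y, Z$ of size $n$ and partition each side into $\lceil n/s\rceil$ contiguous blocks of size at most $s := \lceil\sqrt n\rceil$. This produces $M = O(n^{3/2})$ tripartite sub-instances $G_{ijk}$ whose triangles partition the triangles of $G$, so enumerating triangles of $G$ reduces to enumerating triangles of each $G_{ijk}$. During \emph{preprocessing}, iterate over the sub-instances and, for each, invoke $\mathcal{L}$ and interrupt its execution the moment it either terminates or outputs its first triangle; by the listing bound this costs at most $C s^3/f(s) + D$ steps per sub-instance. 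If a first triangle is found, store it along with a pointer to the paused state of $\mathcal{L}$ and add the sub-instance to a linked list $L$ of \emph{active} sub-instances. The total preprocessing cost is $M \cdot O(s^3/f(s)) = O(n^{3/2} \cdot n^{3/2}/f(\sqrt n)) = O(n^3/f(\sqrt n))$, matching the target bound. Note that the assumption $f(n) = n^{o(1)}$ ensures $f(\sqrt n) = n^{o(1)}$, so the preprocessing bound is still nontrivial.

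\textbf{Enumeration.} Walk through $L$. For each active sub-instance, emit the pre-stored first triangle in $O(1)$ time, then resume $\mathcal{L}$ from its saved state to produce the remaining $t_{ijk}-1$ triangles one by one. The time $\mathcal{L}$ spends per sub-instance \emph{after} resumption is at most $(C s^3/f(s) + D t_{ijk}) - (C s^3/f(s) + D) = D(t_{ijk}-1)$, so the post-preprocessing work averages $O(1)$ per remaining triangle, and across all sub-instances it sums to $O(t)$.

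The main obstacle is upgrading the \emph{amortized} $O(1)$ bound per triangle to a \emph{worst-case} $O(1)$ delay between consecutive user queries, since a priori $\mathcal{L}$ could cluster its outputs unevenly within a single sub-instance. The standard fix is to interleave $\mathcal{L}$'s execution with user queries via a small FIFO buffer: on each user query, we first advance $\mathcal{L}$ by $D+1$ steps (a constant number), then pop and return the oldest buffered triangle; the bookkeeping ensures the buffer never underflows as long as triangles remain. Implementing the pause-resume mechanism requires that $\mathcal{L}$ be written in a coroutine style where output statements can be intercepted; this is standard for Word-RAM algorithms, and in particular applies to the Four-Russians-based listing algorithm of \cref{thm:triangle-listing}, which is built from nested loops with a constant amount of serializable state.
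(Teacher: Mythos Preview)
Your Direction~1 is fine, and your idea to partition into $\sqrt{n}$-sized sub-instances is also the paper's starting point. But Direction~2 has a genuine gap in the preprocessing analysis.

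You claim that running the listing algorithm $\mathcal{L}$ on a sub-instance until it either terminates or outputs its first triangle costs at most $Cs^3/f(s) + D$ steps. This is not justified by the listing bound. The guarantee is only that the \emph{total} running time on $G_{ijk}$ is at most $Cs^3/f(s) + D\,t_{ijk}$; nothing prevents $\mathcal{L}$ from performing essentially all of its work before emitting any triangle. If some sub-instance has $t_{ijk}$ large, the time to the first output could be as large as $Cs^3/f(s) + D\,t_{ijk}$, so your preprocessing time is really $O(n^3/f(\sqrt{n}) + t)$ rather than $O(n^3/f(\sqrt{n}))$. The same unjustified assumption underlies your subtraction ``$(Cs^3/f(s)+Dt_{ijk}) - (Cs^3/f(s)+D)$'' and the buffer-never-underflows claim in the enumeration phase: advancing $\mathcal{L}$ by $D+1$ steps per query gives no lower bound on how many triangles have been produced, so the FIFO can underflow.

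The paper closes exactly this gap with an additional ingredient you are missing: a deterministic \emph{approximate triangle counting} algorithm (Fox--Lov\'asz--Zhao) is run on every sub-instance to classify it as \emph{light} ($t_{ijk} \lesssim s^3/f(s)$) or \emph{heavy}. Light sub-instances are fully listed during preprocessing, which is affordable since their triangle count is small. Heavy sub-instances are sorted by approximate count; the first one is listed by brute force in $O(s^3)=O(n^{3/2})$ time, and thereafter, while enumerating the triangles of the current heavy graph $G_i$, the algorithm simultaneously runs $\mathcal{L}$ to completion on $G_{i+1}$. Because $G_i$ is heavy and $t_i \gtrsim t_{i+1}$, one has $t_i = \Omega(s^3/f(s) + t_{i+1})$, which is exactly the budget needed to finish listing $G_{i+1}$ with $O(1)$ work per emitted triangle. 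Without some mechanism (counting, or at least a doubling-style estimate) to separate light from heavy sub-instances and to order the heavy ones, the pipelining you sketch cannot be made to work.
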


To obtain the equivalence in terms of \emph{deterministic} algorithms we rely on the following combinatorial algorithm by Fox, Lovász and Zhao to approximately count triangles (or in fact, arbitrary subgraphs):

\begin{lemma}[Approximate Triangle Counting~\cite{FoxLZ19}] \label{lem:approx-triangles}
Let $\epsilon > 0$. There is a deterministic algorithm approximating the number of triangles in a graph up to an additive error of~$\epsilon n^3$ in time $n^2 \cdot \poly(\epsilon^{-1})$.
\end{lemma}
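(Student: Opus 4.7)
Proof plan. I would reduce approximate triangle counting to a deterministic Frieze--Kannan weak regularity decomposition, then apply the standard counting lemma. Concretely, let $A \in \{0,1\}^{V \times V}$ denote the adjacency matrix and $n = |V|$, and write $T(M_1, M_2, M_3) = \sum_{x,y,z} M_1(x,y) M_2(y,z) M_3(x,z)$ for the trilinear triangle form, so that $6\, t(G) = T(A,A,A) \pm O(n^2)$ (the error accounting for repeated vertices).

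Step 1 (Weak regularity partition). Build a partition $V = V_1 \sqcup \cdots \sqcup V_K$ with $K \le 2^{\poly(\epsilon^{-1})}$ parts, deterministically in time $n^2 \cdot \poly(\epsilon^{-1})$, such that the step function $\tilde A$ defined by $\tilde A(u,v) = d_{ij} := e_A(V_i, V_j)/(|V_i|\,|V_j|)$ for $u \in V_i$, $v \in V_j$ satisfies the cut-norm bound
$$\|A - \tilde A\|_{\square} := \max_{U, W \subseteq V} \Bigl|\sum_{u \in U,\, v \in W} (A(u,v) - \tilde A(u,v))\Bigr| \;\le\; \tfrac{\epsilon}{12}\, n^2.$$
The standard energy-increment strategy suffices: if the current partition is not yet weak-regular, find a violating pair $(U,W)$ and refine the partition by intersecting every existing part with $U$ and with $W$; this refinement raises the potential $\Phi = \sum_{ij} d_{ij}^2 \,|V_i|\,|V_j|/n^2 \in [0,1]$ by $\Omega(\epsilon^2)$, so after $O(\epsilon^{-2})$ rounds the process must terminate.

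Step 2 (Approximate count). Compute all densities $d_{ij}$ from $A$ and the partition in total time $O(n^2)$, and output
$$\hat t \;=\; \tfrac{1}{6} \sum_{i,j,k \in [K]} d_{ij}\, d_{jk}\, d_{ik}\, |V_i|\,|V_j|\,|V_k| \;=\; \tfrac{1}{6}\, T(\tilde A, \tilde A, \tilde A).$$
This sum costs $O(K^3) = 2^{\poly(\epsilon^{-1})}$ time, well within the claimed budget.

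Step 3 (Counting lemma). Telescope $A \to \tilde A$ one factor at a time:
$$T(A,A,A) - T(\tilde A,\tilde A,\tilde A) = T(A{-}\tilde A, A, A) + T(\tilde A, A{-}\tilde A, A) + T(\tilde A, \tilde A, A{-}\tilde A).$$
Each cross-term is bounded in absolute value by $O(n \cdot \|A - \tilde A\|_\square)$: rewriting, e.g., $T(D, M, M') = \sum_{x,y} D(x,y)\, f(x,y)$ with $f(x,y) = \sum_z M(y,z) M'(x,z) \in [0,n]$, one expresses $f/n \in [0,1]$ as an average of indicators of rectangles (via $[0,1]$-decomposition and the standard equivalence between cut norm and integration against rectangle-indicator densities) and uses the defining property of $\|D\|_\square$. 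Combining the three bounds with $\|A - \tilde A\|_\square \le \epsilon n^2 / 12$ gives $|6\, t(G) - 6\,\hat t| \le \tfrac{3\epsilon}{4}\, n^3$, hence $|t(G) - \hat t| \le \epsilon n^3$ after absorbing the $O(n^2)$ lower-order term into constants.

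Main obstacle. The only nontrivial step is the deterministic construction of the weak regularity partition in $n^2 \cdot \poly(\epsilon^{-1})$ time; everything else is easy. The crucial subroutine, invoked $O(\epsilon^{-2})$ times, must decide whether a given partition is weak-regular and, if not, return a violating pair $(U,W)$. An $O(1)$-approximation to the cut norm is enough. One route is a deterministic low-rank/SVD-style attack on the residual matrix $R = A - \tilde A$: its top singular direction yields an approximate maximizer of the cut norm, and a deterministic power method on a small family of $k$-wise independent test vectors suffices. A second route, more aligned with the technology developed in this paper, is to search over an oblivious-sampler-generated family of candidate sets $U, W$ (\cref{lem:oblivious-sampler}), estimating $\sum_{u \in U, v \in W}(A - \tilde A)(u,v)$ along the lines of \cref{lem:grid-norm-sampling}; the main subtlety is arguing that the sampler's family is rich enough to $(1 \pm o(1))$-approximate the true maximum rectangle sum, which requires a union bound compatible with the $n \cdot \poly(\epsilon^{-1})$ size of the sampler and an appropriate setting of its failure probability.
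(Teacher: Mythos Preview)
The paper does not prove this lemma; it is quoted as a black box from Fox--Lov\'asz--Zhao~\cite{FoxLZ19}. So there is no in-paper proof to compare against, only the cited source.

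Your overall strategy (Frieze--Kannan weak regularity plus the counting lemma) is indeed the right template and is what~\cite{FoxLZ19} does. However, as written there is a genuine gap in the running-time accounting. The partition version of weak regularity produces $K \le 2^{O(\epsilon^{-2})}$ parts, and your Step~2 evaluates a sum with $K^3$ terms. You assert this is ``well within the claimed budget,'' but $2^{O(\epsilon^{-2})}$ is \emph{not} $\poly(\epsilon^{-1})$, and nothing ties $K$ to $n$. The lemma is later applied with $\epsilon^{-1} = \Theta(f(n^{1/2}))$ where $f$ is polylogarithmic, so $2^{\poly(\epsilon^{-1})}$ would be super-polynomial and the application would break. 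The fix is to work with the \emph{cut decomposition} formulation of Frieze--Kannan: write $A = \sum_{i=1}^{r} d_i\,\mathbf{1}_{S_i \times T_i} + R$ with $r = O(\epsilon^{-2})$ and $\|R\|_\square \le \epsilon n^2$, then expand $T(\cdot,\cdot,\cdot)$ over the $r^3 = \poly(\epsilon^{-1})$ cross-terms; each term is computable from $O(1)$ set-intersection sizes in $O(n)$ time. This is precisely how~\cite{FoxLZ19} achieves the $n^2 \cdot \poly(\epsilon^{-1})$ bound.

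Your identified ``main obstacle'' --- deterministically finding a near-maximizing rectangle for the cut norm --- is indeed the crux, and~\cite{FoxLZ19} handle it with a direct deterministic algorithm (not via SVD or samplers). Your second proposed route, searching over an oblivious-sampler family, does not work as stated: the sampler from \cref{lem:oblivious-sampler} lets you \emph{estimate} a fixed bilinear sum, but the cut norm is a maximum over all $2^n \times 2^n$ rectangles, and there is no reason the $n \cdot \poly(\epsilon^{-1})$ sets in the sampler family should contain a near-optimal rectangle.
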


\begin{proof}[Proof of \cref{lem:equivalence-enumerate}]
The first item is trivial: Simply preprocess the graph in time $\Order(n^3 / f(n))$ and then enumerate all triangles in time $\Order(t)$. We thus focus on the second item, and design a triangle enumeration algorithm with constant delay.

\paragraph{Preprocessing}
Assume without loss of generality that $G$ is tripartite with vertex sets $X, Y, Z$ of size $n$ each. We partition each vertex part into $g = \ceil{n^{1/2}}$ groups $X = X_1 \uplus \dots \uplus X_g$, $Y = Y_1 \uplus \dots \uplus Y_g$ and~\makebox{$Z = Z_1 \uplus \dots \uplus Z_g$} of size at most $\ceil{n^{1/2}}$. For each triple $i, j, k \in [g]$, let $G_{i, j, k}$ denote the subgraph induced by $X_i \cup Y_j \cup Z_k$. Clearly the triangles in~$G$ are perfectly partitioned into the triangles in $(G_{i, j, k})_{i, j, k}$. Let $t_{i, j, k}$ denote the number of triangles in $G_{i, j, k}$.

Let $\epsilon = \frac14 / f(n^{1/2})$. For each triple $i, j, k \in [g]$, we compute an approximation $\widetilde t_{i, j, k}$ of the number of triangles in $G_{i, j, k}$ with absolute error~\makebox{$\epsilon n^{3/2} = \frac14 n^{3/2} / f(n^{1/2})$}. We call a triple $i, j, k$ with~\makebox{$\widetilde t_{i, j, k} \leq n^{3/2} / f(n^{1/2})$} \emph{light} and \emph{heavy} otherwise. For each light triple~$i, j, k$ we list all triangles in $G_{i, j, k}$ using the efficient listing algorithm and discard $G_{i, j, k}$ afterwards. We store all triangles discovered in this step in one global list to be enumerated later. We sort the remaining heavy triples in descending order according to their approximate triangle counts $\widetilde t_{i, j, k}$; let $G_1, \dots, G_r$ denote the heavy graphs in the resulting order. As the final step of the preprocessing phase, we list all triangles in $G_1$ in brute-force time $\Order(n^{3/2})$.

\paragraph{Preprocessing Time}
Before proceeding to the description of the enumeration phase, we quickly analyze the running time of the preprocessing phase. Computing the approximate triangle counts for all subgraphs takes time $\Order(n^{3/2} \cdot n \cdot \poly(f(n^{1/2}))) = n^{5/2 + \order(1)}$. Observe that the number of triangles in each light subgraph is at most $n^{3/2} / f(n^{1/2}) + \epsilon n^{3/2} = \Order(n^{3/2} / f(n^{1/2}))$. Therefore, applying the listing algorithm to all light subgraphs takes time $\Order(n^{3/2} \cdot n^{3/2} / f(n^{1/2})) = \Order(n^3 / f(n^{1/2}))$. Finally, listing all triangles in~$G_1$ takes negligible time.

\paragraph{Enumeration}
In the enumeration phase we first enumerate all triangles from the light subgraphs. The more interesting part is to enumerate the triangles involving the heavy subgraphs. Here we proceed as follows: We maintain an \emph{active} heavy subgraph $i \in [r]$ for which we have prepared a list of all of its triangles. Initially we set $i = 1$---and indeed, we have prepared a list of all triangles in $G_1$. The idea is that while we enumerate triangles from the active graph $G_i$, we simultaneously prepare a list of all triangle in the next graph $G_{i+1}$. To this end, we run the efficient listing algorithm on $G_{i+1}$, and with each triangle from $G_i$ that we list we advance the algorithm by~$\Order(1)$ computation steps. To prove that this approach succeeds (i.e., that we have completed the execution of the listing algorithm on~$G_{i+1}$ when all triangles from~$G_i$ have been listed), we show that $t_i \geq \Omega(n^{3/2} / f(n^{1/2}) + t_{i+1})$, where we write $t_i$ to denote the number of triangles in~$G_i$. Indeed, we have that $t_i \geq t_{i+1} - 2\epsilon n^{3/2}$ and $t_i \geq n^{3/2} / f(n^{1/2}) - \epsilon n^{3/2}$ (as~$G_i$ is heavy), and thus~\makebox{$t_i \geq \Omega(n^{3/2} / f(n^{1/2}) + t_{i+1})$}. After we have enumerated all triangles from~$G_i$, we discard~$G_i$ and consider~$G_{i+1}$ the next active subgraph. Note that this algorithm indeed lists all triangles with constant delay.
\end{proof}

The proof of \cref{thm:triangle-enumeration} is complete by combining \cref{thm:triangle-listing,lem:equivalence-enumerate}.

\subsection{Conditional Optimality} \label{sec:triangle-enumeration:sec:lower-bound}
The famous 3-SUM problem is to test whether in a given set $S$ of $n$ integers there is a solution to the equation $a + b + c = 0$. This problem can be solved naively in time $\Order(n^2)$ and the fastest known algorithm, due to Baran, Demaine and Pătraşcu, runs in expected time $\Order(n^2 / (\log n)^2 \cdot (\log\log n)^2)$.

From previous work on the 3-SUM-hardness of triangle listing/enumeration by Pătraşcu~\cite{Patrascu10} and by Kopelowitz, Pettie and Porat~\cite{KopelowitzPP16}, it follows that our algorithm is \emph{conditionally optimal} in the following (weak) sense:  An algorithm that enumerates all triangles in a graph with preprocessing time~\makebox{$\Order(n^3 / (\log n)^{6+\epsilon})$} and constant delay entails an algorithm for the 3-SUM problem in expected time~\makebox{$\Order(n^2 / (\log n)^{2+\epsilon'})$} (i.e., a $(\log n)^{\epsilon'}$ improvement over the Baran-Demaine-Pătraşcu algorithm). Since this statement is not explicit in~\cite{KopelowitzPP16}, we devote this section to an almost-self-contained proof. Specifically, we prove the following statement:

\begin{lemma}[Reducing 3-SUM to Triangle Listing] \label{lem:3-SUM-to-triangle-listing}
Let $\alpha \geq 0$. If there is a randomized algorithm listing all $t$ triangles in a graph in expected time $\Order(n^3 / f(n) + t)$ (for some computable nondecreasing function $f(n)$), then there is a randomized 3-SUM algorithm in expected time~$\Order(n^2 / f(n^{2/3})^{1/3})$.\footnote{We remark that while this reduction is in terms of \emph{expected} running times, the statement can easily be strengthened such that the running time of the constructed 3-SUM algorithm holds with high probability (e.g., by first applying the known self-reduction for 3-SUM~\cite{GronlundP18,LincolnWWW16} and splitting into polynomially many subinstances).}
\end{lemma}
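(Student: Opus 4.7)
The plan is to follow the standard reduction from 3-SUM to triangle listing due to Pătraşcu~\cite{Patrascu10} and refined by Kopelowitz, Pettie, and Porat~\cite{KopelowitzPP16}. The core idea is to encode a 3-SUM instance on $n$ integers as a triangle-listing instance on a graph of size $N = \Theta(n^{2/3})$, so that the assumed listing algorithm, running in time $\Order(N^3/f(N) + t)$, becomes a 3-SUM algorithm. The cube-root factor in the target bound \smash{$\Order(n^2/f(n^{2/3})^{1/3})$} arises from balancing the graph size, the expected number of triangles, and the per-triangle verification cost.

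For the construction, I would pick $B = \Theta(n^{2/3})$ and apply Dietzfelbinger's almost-linear hash $h : \mathbb{Z} \to [B]$ to partition the input $S$ into buckets $S_1, \dots, S_B$. Almost-linearity ensures that any 3-SUM solution $a + b + c = 0$ has bucket indices $i, j, k$ with $i + j + k \equiv \tau \pmod{B}$ for some $\tau \in \{0, 1, 2\}$, and a standard balls-into-bins argument (with rehashing if needed) gives $|S_i| = \Order(n/B) = \Order(n^{1/3})$ for every bucket with high probability. I would then build a tripartite graph $G = (X, Y, Z, E)$ on $|X| = |Y| = |Z| = B$ vertices indexed by buckets, placing the edge $(i, j) \in X \times Y$ whenever $S_i$ and $S_j$ contain a pair $(a, b)$ such that $-(a+b)$ lies in a bucket $k$ compatible with $i + j + k \equiv \tau$, and analogously for the other sides. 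The triangles of $G$ are then exactly the bucket triples worth verifying as potential 3-SUM witnesses.

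Running the assumed listing algorithm on $G$ takes expected time $\Order(n^2/f(n^{2/3}) + t)$, after which I would verify each listed triangle $(i, j, k)$ by brute force in time $\Order((n/B)^3) = \Order(n)$, stopping as soon as a genuine 3-SUM witness appears. The total expected cost becomes $\Order(n^2/f(n^{2/3}) + t \cdot n)$. The main technical step, and the primary obstacle, is to show that in expectation $t = \Order(n/f(n^{2/3})^{1/3})$, so that $t \cdot n$ matches the target running time. This is where the cube-root exponent naturally emerges: one performs a second-moment analysis over the random hash function, using almost-pairwise independence of Dietzfelbinger hashing together with a separation of heavy and light bucket pairs, following the template of~\cite{KopelowitzPP16}.

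Once this expected triangle count is established, the listing cost $\Order(n^2/f(n^{2/3}))$ and the verification cost $\Order(n^2/f(n^{2/3})^{1/3})$ combine to yield the claimed bound. The remaining ingredients---a preliminary reduction from 3-SUM on arbitrary integers to 3-SUM over a bounded universe, and amplification to guarantee a well-spread hash---are standard and do not affect the asymptotics. The one delicate point beyond those is ensuring that the \emph{expected} running time of the listing algorithm (rather than its worst case) transfers correctly through the reduction, which follows by linearity of expectation over the randomness of $h$.
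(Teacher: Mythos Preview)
There is a genuine gap in your balancing argument. You fix the bucket count at $B = \Theta(n^{2/3})$ and then claim that the expected number of triangles satisfies $t = \Order(n / f(n^{2/3})^{1/3})$. But the graph $G$ is constructed from the input $S$, the hash $h$, and the parameter $B$ alone; the function $f$ plays no role in its definition. Hence the triangle count $t$ is a random variable that simply cannot depend on $f$, and no amount of second-moment analysis over $h$ will produce a bound that does. The cube-root of $f$ in the target running time has to come from a \emph{choice of parameter that depends on $f$}, not from the triangle count of a fixed-size graph.

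The paper's proof achieves this by making the vertex-set size a free parameter. Concretely, it uses three independent almost-linear hashes $h_1,h_2,h_3$ into $[m]$ and takes the vertex parts to be subsets of $[m]^3$ (so the graph has $\Theta(m^2)$ vertices); edges are labeled by single elements $a \in S$ rather than by bucket pairs, and each reported triangle is checked in $\Order(1)$ time via its three labels. A direct calculation gives that the expected number of triangles is $\Order(n + n^3/m^3)$, so the total expected time is $\Order(m^6/f(m^2) + n^3/m^3)$. Now one \emph{chooses} $m = \lceil n^{1/3} \cdot f(n^{2/3})^{1/9} \rceil$; since $f$ is nondecreasing, $f(m^2) \geq f(n^{2/3})$, and both terms become $\Order(n^2 / f(n^{2/3})^{1/3})$. (The preliminary random sampling of $\Order(n \log n)$ pairs is needed so that the number of genuine 3-SUM solutions contributing triangles is $\Order(n)$ rather than potentially $n^2$.) Your outline can be salvaged, but only if you let your bucket parameter vary with $f$ and redo the cost analysis accordingly; the bound you would then need on $t$ is of the form $\Order(n^3/m^3)$ for false triples, not one that already contains $f$.
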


The proof relies on the following standard lemma on linear hashing:

\begin{lemma}[Linear Hashing~\cite{CormenLRS09}] \label{lem:linear-hashing}
Let $n \geq m \geq 1$. There is a family $\mathcal H = \set{h : [-n\,.\,.\,n] \to [m]}$ of hash functions that can be sampled in expected time $\poly(\log n)$ and evaluated in constant time, and that satisfies the following properties:
\begin{itemize}
    \item (Almost-Linearity) There exists some constant-size set $\Phi$ such that for all $h \in \mathcal H$ and all keys~\makebox{$a, b \in [-n\,.\,.\,n]$}, we have $h(a + b) - h(a) - h(b) + h(0) \in \Phi$.
    \item (Pairwise Independence) For all distinct keys $a, b \in [-n\,.\,.\,n]$ and buckets $x, y \in [m]$:\newline$\Pr_{h \in \mathcal H}[\text{$h(a) = x$ and $h(b) = y$}] \leq \Order(\frac{1}{m^2})$.
\end{itemize}
\end{lemma}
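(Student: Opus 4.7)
The plan is to instantiate $\mathcal{H}$ as a linear-congruential family modulo a prime, post-composed with a rescaling into $[m]$. Concretely, I would first pick a prime $p$ with $4n \leq p \leq 8n$ (such a prime exists by Bertrand's postulate and can be found in expected time $\poly(\log n)$ by sampling a candidate in that range and running a standard primality test), and then let $\mathcal H$ consist of all functions
\[
    h_{a,b}(x) = \left\lfloor \frac{((a x + b) \bmod p) \cdot m}{p} \right\rfloor,
\]
with $a, b$ drawn uniformly at random from $[p]$. Sampling an element of $\mathcal H$ then reduces to the prime generation, and each evaluation is a constant number of $\Order(\log n)$-bit arithmetic operations, hence $\Order(1)$ in the Word-RAM model with word size $\Theta(\log n)$.

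For almost-linearity, the plan is to work with the intermediate $g(x) = (a x + b) \bmod p \in [0, p)$ and observe that $g(x) + g(y) - g(x+y) - g(0)$ vanishes modulo $p$ while lying in the interval $(-2p, 2p)$; hence it equals $Dp$ for some $D \in \{-1, 0, 1\}$. Rewriting each $g(\cdot)$ value as $h_{a,b}(\cdot) \cdot (p/m)$ plus a fractional-part error in $[0, p/m)$, substituting into the identity, and dividing through by $p/m$ yields $h_{a,b}(x+y) - h_{a,b}(x) - h_{a,b}(y) + h_{a,b}(0) = -Dm + \eta$, where $\eta$ is an integer accounting for the four fractional parts and is forced to lie in $\{-1, 0, 1\}$ (the combined fractional-part contribution lies in $(-2, 2)$, and must be an integer because everything else is). Consequently the quantity always takes one of the values in the constant-size set $\Phi = \{-m-1, -m, -m+1, -1, 0, 1, m-1, m, m+1\}$.

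For pairwise independence, I would use the standard two-variable inversion argument. For any distinct $x, y \in [-n\,.\,.\,n]$, we have $x \not\equiv y \pmod p$ (since $|x - y| \leq 2n < p$), so the affine map $(a, b) \mapsto (g(x), g(y))$ is a bijection from $[p]^2$ to $[p]^2$, and hence $(g(x), g(y))$ is uniform on $[p]^2$. Each preimage $h_{a,b}^{-1}(u) \subseteq [p]$ has cardinality at most $\lceil p/m \rceil \leq p/m + 1$, so the rectangle bound gives
\[
    \Pr_{h \in \mathcal H}[h(x) = u_1,\; h(y) = u_2] \leq \frac{(p/m + 1)^2}{p^2} = \Order\!\left(\frac{1}{m^2}\right),
\]
using $p \geq 4n \geq m$.

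The main obstacle I expect is the almost-linearity bookkeeping: two distinct wrap-around phenomena---the mod-$p$ reduction inside $g$ and the floor used to rescale from $[p]$ onto $[m]$---interact, and one must verify that their combined effect is captured exactly by a shift in $\{-m, 0, m\}$ up to an $\Order(1)$-integer correction. Once that accounting is handled carefully, the remaining two items of the lemma fall out in a few lines each.
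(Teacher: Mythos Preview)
The paper does not supply its own proof of this lemma: it is stated with a citation to a textbook and used as a black box in the subsequent proof of \cref{lem:3-SUM-to-triangle-listing}. So there is nothing to compare your argument against on the paper's side.

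Your construction is the standard linear-congruential family and the argument is correct. A couple of minor remarks. First, you reuse the symbols $a,b$ both for the random parameters of the hash and for the keys in the lemma statement; renaming the parameters (say $\alpha,\beta$) would avoid confusion. Second, the lemma implicitly requires $h(a+b)$ to make sense even though $a+b$ may lie in $[-2n\,.\,.\,2n]$; your choice $p\geq 4n$ handles this, since the formula for $h_{\alpha,\beta}$ extends to all integers and the almost-linearity computation does not need the inputs to be reduced representatives. Third, as written $h$ lands in $\{0,\dots,m-1\}$ rather than $[m]=\{1,\dots,m\}$, but this is cosmetic. The almost-linearity bookkeeping you outline (mod-$p$ wrap contributing a shift in $\{-m,0,m\}$, floor contributing an integer in $\{-1,0,1\}$) is exactly right and goes through without further complications.
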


\begin{proof}[Proof of \cref{lem:3-SUM-to-triangle-listing}]
Let $S$ denote the given 3-SUM-instance, and assume that $S \subseteq [-n^c\,.\,.\,n^c]$ for some constant~$c$. As a first step, we randomly sample $\Order(n \log n)$ pairs $a, b \in S$ and test whether~\makebox{$-a - b \in S$}. If we find a solution in this step, we stop and return ``yes''.

Otherwise, sample three linear hash functions $h_1, h_2, h_3 : [-n^c\,.\,.\,n^c] \to [m]$ as in the previous lemma, and construct the following tripartite graph $G = (X, Y, Z, A, B, C)$. As vertex parts, we take
\begin{align*}
    X &= [m] \times [m] \times \set{h_3(0)}, \\
    Y &= [m] \times \set{h_2(0)} \times [m], \\
    Z &= \set{h_1(0)} \times [m] \times [m].
\end{align*}
We add an edge $(x, y) \in X \times Y$ to $A$ if and only if there is some $a \in S$ such that
\begin{align*}
    y_1 - x_1 - h_1(a) + h_1(0) \in \Phi, \\
    y_2 - x_2 - h_2(a) + h_2(0) \in \Phi, \\
    y_3 - x_3 - h_3(a) + h_3(0) \in \Phi.
\end{align*}
In this case, we say that the edge $(x, y)$ is \emph{labeled} with $a$. We add edges to $B$ and $C$ in the analogous way. We then use the efficient triangle listing algorithm to list all triangles in $G$ (viewing $G$ as an unlabeled graph). For each triangle $(x, y, z)$ that is returned, test whether it has three edge labels~$(a, b, c)$ that satisfy $a + b + c = 0$. In this case we report ``yes'', and if no such triangle is found, return ``no''.

\paragraph{Correctness}
For the correctness it suffices to show that, if there exist $a, b, c \in S$ with $a + b + c = 0$, then there is a triangle in~$G$ with edge labels $(a, b, c)$. To see this, consider the triple $(x, y, z)$ where
\begin{align*}
    x = (h_1(c), h_2(-a), h_3(0)), \\
    y = (h_1(-b), h_2(0), h_3(a)), \\
    z = (h_1(0), h_2(b), h_3(-c)).
\end{align*}
We claim that $(x, y, z)$ forms a triangle. This involves testing nine constraints. For the sake of brevity, we only prove the first such constraint, namely that $y_1 - x_1 - h_1(a) + h_1(0) \in \Phi$. And indeed, given that $y_1 - x_1 - h_1(a) = h_1(-b) - h_1(c) - h_1(a) = h_1(c + a) - h_1(c) - h_1(a)$, the constraint follows by the almost-linearity property of \cref{lem:linear-hashing}.

\paragraph{Running Time}
Sampling solutions and constructing the graph $G$ takes negligible time, so the critical contribution is the running time of the listing algorithm. To this end, we first bound the expected number of triangles in the graph.

\begin{claim*}
The expected number of triangles in $G$ is~\smash{$\Order(n + \frac{n^3}{m^3})$}.
\end{claim*}
\begin{proof}
Fix a triple $a, b, c \in S$. We analyze how many triangles $(x, y, z)$ are labeled with~$(a, b, c)$. Recall that for $(x, y, z)$ to be a triangle, it satisfies nine constraints---three for each edge. Focus on the constraints involving $h_1$:
\begin{align*}
    y_1 - x_1 - h_1(a) + h_1(0) &\in \Phi, \\
    z_1 - y_1 - h_1(b) + h_1(0) &\in \Phi, \\
    x_1 - z_1 - h_1(c) + h_1(0) &\in \Phi.
\end{align*}
Recall that necessarily $z_1 = h_1(0)$. Thus, the second and third constraints imply that there are only $|\Phi| = \Order(1)$ feasible choices for $x_1$ and $y_1$. In particular, it follows that there are at most $\Order(1)$ triangles labeled with $(a, b, c)$. Next, let us write $k \Phi = \set{\phi_1 + \dots + \phi_k : \phi_1, \dots, \phi_k \in \Phi}$. Summing all three constraints, we obtain that $3h_1(0) - h_1(a) - h_1(b) - h_3(c) \in 3\Phi$, and by applying the almost-linearity twice, it follows that $h_1(0) - h_1(a + b + c) \in 5\Phi$. By the analogous argument for~$h_2$ and $h_3$, we conclude that there is a triangle labeled with $(a, b, c)$ only if
\begin{align*}
    h_1(0) - h_1(a + b + c) &\in 5 \Phi, \\
    h_2(0) - h_2(a + b + c) &\in 5 \Phi, \\
    h_3(0) - h_3(a + b + c) &\in 5 \Phi.
\end{align*}

Finally, to bound the expected number of triangles, we distinguish two cases. On the one hand, the number of triples with $a + b + c = 0$ is at most $\Order(n)$ as otherwise, with high probability, we would have detected a 3-SUM solution in the first step of the algorithm. Each such triple contributes at most $\Order(1)$ triangles. On the other hand, there are up to $n^3$ triples with $a + b + c \neq 0$. Each such triple contributes at most $\Order(1)$ triangles, and only if the final three constraints are satisfied. By pairwise independence, each constraint holds with probability at most~\smash{$\Order(\frac{|\Phi|}{m}) = \Order(\frac{1}{m})$}, and the three constraints are independent. It follows that these triples contribute \smash{$\Order(\frac{n^3}{m^3})$} triangles in expectation.
\end{proof}

Observe that the graph $G$ has $\Order(m^2)$ vertices. Therefore, given the previous claim, the total expected running time is bounded by $\Order(m^6 / f(m^2) + n^3 / m^3)$. By choosing $m = \ceil{n^{1/3} \cdot f(n^{2/3})^{1/9}}$, this becomes $\Order(n^2 / f(n^{2/3})^{1/3})$ (using that $f$ is nondecreasing).
\end{proof}

\bibliographystyle{plainurl}
\bibliography{references}

\end{document}